\documentclass[11pt]{article}
\usepackage{graphicx} 

\usepackage{amsmath,amsthm,nicefrac}
\usepackage{amsfonts, amstext}
\usepackage[margin=1in]{geometry}
\usepackage{mathtools}
\usepackage{tikz}
\usepackage{wrapfig}

\usepackage{natbib}

\usepackage{comment}

\usepackage[font={small,it}]{caption}

\usepackage{setspace}
\usepackage[compact]{titlesec}

\usepackage{enumitem}
\usepackage[breaklinks]{hyperref}
\hypersetup{colorlinks=true,%
            citebordercolor={.6 .6 .6},linkbordercolor={.6 .6 .6},%
citecolor=blue!50!black,urlcolor=black,linkcolor=blue!50!black,pagecolor=black}

\usepackage[nameinlink]{cleveref}
\Crefname{algocf}{Algorithm}{Algorithms}
\crefname{algocfline}{line}{lines}
\Crefname{invariant}{Invariant}{Invariants}
\Crefname{claim}{Claim}{Claims}
\Crefname{subclaim}{Subclaim}{Subclaims}

\usepackage{epsfig}
\usepackage{amsthm,amssymb}
\usepackage{amsthm,amssymb}

\usepackage{mathrsfs}
\usepackage{xspace}
\usepackage{soul}
\usepackage{latexsym}
\usepackage{bm}

\usepackage{framed}

\usepackage[dvipsnames]{xcolor}
\definecolor{DarkGray}{rgb}{0.66, 0.66, 0.66}
\definecolor{DarkPowderBlue}{rgb}{0.0, 0.2, 0.6}
\definecolor{fluorescentyellow}{rgb}{0.8, 1.0, 0.0}
\usepackage[ruled,vlined,linesnumbered,algonl,procnumbered]{algorithm2e}
\SetEndCharOfAlgoLine{}
\SetKwComment{Comment}{\footnotesize$\triangleright$\ }{}

\SetCommentSty{mycommfont}
\SetKwProg{Proc}{Procedure}{:}{}
\SetKwFunction{CleanEdges}{CleanHeavyEdges}
\SetKwFunction{BuildGraph}{BuildGraph}
\SetKwFunction{SetCapacitiesAndSolve}{SetCapacitiesAndSolve}
\SetKwData{True}{true}
\SetKwData{False}{false}

\usepackage{thmtools,thm-restate}

\makeatletter
\allowdisplaybreaks
\makeatother

\newcounter{note}[section]
\renewcommand{\thenote}{\thesection.\arabic{note}}

\newcommand{\aaronnote}[1]{\refstepcounter{note}$\ll${\bf Aaron~\thenote:}
  {\sf \color{blue} #1}$\gg$\marginpar{\tiny\bf AR~\thenote}}

\DeclareMathOperator*{\softmax}{smax}
\DeclareMathOperator*{\softmin}{smin}

\newcommand{\aaron}[1]{\aaronnote{#1}}

\sethlcolor{fluorescentyellow}

\newcommand{\initOneLiners}{%
    \setlength{\itemsep}{0pt}
    \setlength{\parsep }{0pt}
    \setlength{\topsep }{0pt}
}

\newtheorem{theorem}{Theorem}[section]
\newtheorem{lemma}[theorem]{Lemma}
\newtheorem{claim}[theorem]{Claim}

\newtheorem{corollary}[theorem]{Corollary}

\theoremstyle{definition}
\newtheorem{defn}[theorem]{Definition}

\theoremstyle{remark}
\newtheorem{remark}[theorem]{Remark}
\theoremstyle{definition}

\makeatletter 
\renewcommand{\theinvariant}{(I\@arabic\c@invariant)}
\makeatother

\newcommand{\eps}{\varepsilon}
\newcommand{\sse}{\subseteq}

\newcommand{\BB}{\mathbb{B}}
\newcommand{\ZZ}{\mathbb{Z}}
\newcommand{\RR}{\mathbb{R}}

\newcommand{\poly}{\operatorname{poly}}

\newcommand{\polylog}{\operatorname{polylog}}
\renewcommand{\emptyset}{\varnothing}

\newcommand{\calI}{\mathcal{I}}

\newcommand{\junk}[1]{}
\newcommand{\eat}[1]{}

\newif\ifhideproofs

\ifhideproofs
\usepackage{environ}   
\NewEnviron{hide}{}

\fi

\usepackage{bbm}

\newcommand{\congest}{$\mathsf{CONGEST}$\xspace}
\newcommand{\local}{$\mathsf{LOCAL}$\xspace}
\DeclareMathOperator{\loc}{\mathbf{local}}
\DeclareMathOperator{\glo}{\mathbf{global}}
\DeclareMathOperator{\ratio}{\mathbf{ratio}}

\definecolor{aarondarkorange}{RGB}{180,80,0}

\newcommand{\smax}{s_{\textrm{max}}}
\newcommand{\smin}{s_{\textrm{min}}}

\newcommand{\jfail}{J_{\textrm{fail}}}

\newcommand{\ampc}{\mathcal{A}_{\textrm{MPC}}}

\newcommand{\header}[1]{\emph{\underline{#1}}}

\newcommand{\MixPC}{\textsf{MixPC}\xspace}

\newcommand{\distYoung}{\texttt{DistrFracLB}\xspace}
\newcommand{\distBFS}{\texttt{DistrBFS}\xspace}

\title{Distributed Load Balancing on Unrelated Machines}
\author{
Aaron Bernstein\thanks{
New York University. Supported by Sloan Fellowship, Google Research Fellowship,  NSF Grant 1942010, and Charles S. Baylis endowment at NYU. 
}
\and
Anupam Gupta\thanks{
Department of Computer Science,
Courant Institute of Mathematical Sciences, New York University.
Supported in part by NSF awards CCF-2422926 and CCF-2608359.
}
\and
Zhaozi Wang$^\dagger$
}

\date{July 2026}

\begin{document}

\maketitle

\thispagestyle{empty}

\begin{abstract}

We study the well-known \emph{load balancing} problem in the
distributed (\congest) model of computation. The input is a bipartite
graph $G = (J \cup M, E)$, with \emph{jobs} $J$ to be assigned to
\emph{machines} $M$. We consider the \emph{unrelated machines}
setting, where the input specifies an arbitrary non-negative size
$s_{ij}$ for every machine $i$ and job $j$. The goal is to find an
assignment $\varphi: J \to M$ that minimizes the maximum
\textit{machine load}, where the load of a machine $i$ is
$\sum_{j: \varphi(j) = i} s_{ij}$, i.e., the total size of the jobs
assigned to it.

\medskip
In the \congest model, the state-of-the-art is an algorithm that runs
in $\polylog$ rounds and returns a $(1+\eps)$-approximate fractional
solution \cite{Ahmadian2021DistributedLB}. This can be combined with a
rounding scheme to also yield a $(2+\eps)$-approximate integral
solution, which is the best polynomial-time result known even in the
centralized setting.
However, this algorithm, 
as well as
all previous \congest algorithms, can only solve a special case of
load balancing, where all edges incident to a job $j$ have the same
size $s_j$.

\medskip Our main contribution is a \congest algorithm that can handle
the case of general sizes $s_{ij}$. The algorithm is again essentially
optimal: it computes a $(1+\eps)$-approximate fractional solution or a
$(2+\eps)$-approximate integral solution in $\polylog$ rounds. 
  The problem structure changes significantly
once we allow arbitrary edge-sizes, so our techniques are very different
from those used in previous algorithms for distributed load balancing.


\medskip One ingredient of our result is a black-box tool that is of
independent interest: a $(1+\eps)$-approximation algorithm to
arbitrary mixed packing-covering linear programs in the \congest model
in $\polylog$ rounds. While such an algorithm was already known in the
more powerful parallel
model, 
previous $\polylog$-round algorithms in the distributed \congest model only solved
\emph{pure} packing or \emph{pure} covering
problems. 
We improve upon a very recent \congest algorithm for mixed
packing-covering that runs in $O(D \, \polylog)$ rounds, where $D$ is the
diameter of the corresponding communication graph.


\end{abstract}

\newpage



\setcounter{page}{1}

\maketitle

\section{Introduction}
\label{sec:introduction}


In this work we study the \emph{load balancing} problem in the
distributed (\congest) model of computation. The input is a bipartite
graph $G = (J \cup M, E)$, where we refer to the sets $J$ and $M$ as
the \emph{jobs} and \emph{machines}, respectively. We consider the
\emph{unrelated machines} model, where the size of each job $j$ on a
machine $i$ is denoted $s_{ij}$, and these sizes can be completely
unrelated to each other. (If there is no edge $(i,j) \in E$, then we
imagine that $s_{ij} = \infty$; i.e., assigning job $j$ to machine $i$
is forbidden.) The goal is to find an assignment $\varphi: J \to M$ that minimizes the maximum \textit{machine load},
where the load of a machine $i$ is $\sum_{j: \varphi(j) = i} s_{ij}$,
i.e., the total size of jobs assigned to it.

There is a rich literature on load balancing, both within the graph
algorithms community (where it is sometimes called \emph{semi-matching}), and
the scheduling community (where it is called \emph{makespan
  minimization}). It has been studied within many different models of
computation, from classical polynomial-time approximation algorithms
\cite{LenstraST90,ShmoysT93}, to fast algorithms \cite{HarveyLLT06,
  FakcharoenpholLN14}, online algorithms (both in the classical
model~\cite{AzarNR95,AspnesAFPW97}, as well as in settings with
recourse~\cite{GKS14-matching,krishnaswamy2023online} and other
beyond-worst-case
scenarios~\cite{argue2022learning,GM26-soda,LX21,im2024online}),
streaming algorithms \cite{KonradR16,AssadiBL23,AssadiBLLW25},
parallel algorithms \cite{Young2001MixedPackingCovering,
  Mahoney2016ParallelMixedPackingCovering, Li23}, and dynamic
algorithms \cite{BhattacharyaKS23}. In the standard centralized model,
one can solve the fractional version exactly via linear programming,
and the best known integral solution is the 2-approximation of
\cite{LenstraST90,ShmoysT93}.
Thus, in most of these models of computation, the gold standard is
typically a $(1+\eps)$ approximate fractional solution or a $(2+\eps)$
approximate integral solution, though these are not always achievable.

\paragraph{Distributed Model.}
All of our distributed algorithms are in the standard \congest model
of distributed computation~\cite{Peleg00}. 
Communication proceeds in synchronous rounds over the underlying communication graph, and in each round every node in the graph can send $O(\log N)$ bits to each neighbor, where $N$ is the number of nodes in the graph. 


\paragraph{Distributed Load Balancing.}
For load balancing, the communication graph is the input graph
$G=(J\cup M,E)$ itself. Initially, each node knows its own type (job
or machine), its incident edges, and the corresponding sizes $s_{ij}$
on those edges. At termination, each machine locally knows the jobs
assigned to it. There are several previous results on this problem in the somewhat narrower \emph{restricted-assignment setting}, where all edge
sizes are in $\{1,\infty\}$ \cite{CzygrinowHSW16, HalldorssonKPR18, AssadiBL20, Ahmadian2021DistributedLB}. The state of the art is the result of  \cite{Ahmadian2021DistributedLB}, 
which computes a
$1+\eps$-approximate fractional solution to makespan minimization (and
more) in $\poly(\log N,1/\eps)$ rounds. (While they do not round this
solution, one can apply the black-box rounding scheme we use to get a
$(2+\eps)$-approximate integral
solution.) 
Much of the previous work (including \cite{Ahmadian2021DistributedLB}) also extends to the setting where each job has a non-negative size $s_j$ and the
edge sizes $s_{ij}$ are in $\{s_j, \infty\}$, but as we discuss later, none of the previous results extend to unrelated edge sizes. We ask:
\begin{quote}
  Question 1: \emph{Does there exist a poly-logarithmic round protocol
    to get a $(2+\eps)$-approximation for the general
    (unrelated-machines) load balancing problem in the \congest
    model?}
\end{quote}

\paragraph{Distributed Mixed Packing-Covering LPs.} A promising
approach to solve fractional load balancing in the \congest model is the
following first step: assume we know the optimum load $L$, and the goal is to find
a feasible fractional solution where each machine load is at most
$L$. This is just a mixed packing-covering linear program (where each
constraint is a packing or a covering constraint):
\begin{align*}
  \textstyle\sum_{j: (i,j) \in E} s_{ij} x_{ij} &\leq L \qquad\qquad
                                                  \forall i \in M \\
  \textstyle\sum_{i: (i,j) \in E} x_{ij} &\geq 1 \qquad\qquad
                                                  \forall j \in J\\
  x_{ij} &\geq 0 \qquad \qquad \forall (i,j) \in E.
\end{align*}
In both the sequential and parallel settings, there exist fast
$(1+\eps)$-approximations for such LPs (e.g.,
\cite{Young2001MixedPackingCovering,Mahoney2016ParallelMixedPackingCovering,Wang2016DiameterReduction}). Unfortunately,
no such fast algorithms are known for the \congest model---or even the
less restrictive \local model. We only know polylog-round \congest
algorithms for the more restricted case of pure packing or pure
covering LPs~\cite{BartalBR04}. To the best of our knowledge, the only
algorithm for mixed packing-covering LPs is one proposed recently,
which requires $\widetilde{\Theta}(D)$ rounds; here $D$ is the
diameter of the graph and can be $\Omega(n)$ in the
worst-case~\cite{Vos2026}. This leads us to ask:
\begin{quote}
  Question 2: \emph{Does there exist a poly-logarithmic round protocol
    to approximately solve mixed packing-covering linear programs in
    the \congest model?}
\end{quote}
We answer both questions in the affirmative.

\subsection{Our Contributions}

\paragraph{First Contribution: Mixed Packing-Covering LPs}
Recall that a mixed packing-covering LP requires us to find $x$ such
that $\{ Px \leq p, Cx \geq c, x \geq 0\}$, where the matrices $P, C$
have sizes $m \times n$. This naturally gives a graph topology with
nodes for each row and column (with each node knowing only the
information for its own row/column); the edges of the graph
then correspond to non-zero entries in $P,C$. We make the standard assumption
that all nodes in the graph know the total number of nodes, as well as
an approximation parameter $\eps$. In this setting, we present the
first efficient algorithm for (fractional) mixed packing-covering LPs
in the \congest model for graphs of arbitrary diameter.

\begin{theorem}[Informal version of \Cref{thm:dist-young}]
  \label{thm:mpc-informal}
  Given a mixed packing-covering LP, there exists a \congest algorithm
  that runs in $\poly((\log mn)/\eps)$ rounds and outputs a non-negative
  fractional solution $x$ that satisfies all the covering constraints
  $Cx \geq c$ exactly and satisfies each packing constraints within a
  multiplicative error of $(1+\eps)$---i.e., $Px \leq (1+\eps)p$.
\end{theorem}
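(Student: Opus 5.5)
The plan is to distribute Young's multiplicative-weights algorithm for mixed packing-covering LPs \cite{Young2001MixedPackingCovering}. Its one global step is the choice of step size; we replace it by a purely local rule and show that the potential-function analysis still goes through.

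\textbf{Setup.} First normalize so that $p=\mathbf{1}$ and $c=\mathbf{1}$. Then discard tiny entries: each variable $x_k$ is capped at $\min_i 1/P_{ik}$, which is locally computable. Entries of $C$ can be truncated at $1$, so every nonzero ratio lies in a range whose width is polynomial in $mn/\eps$. Young's algorithm maintains exponential weights
\[
y_i=\exp\bigl(\eta (Px)_i\bigr), \qquad z_{i'}=\exp\bigl(-\eta (Cx)_{i'}\bigr), \qquad \eta=\Theta\bigl((\log mn)/\eps\bigr).
\]
It increases every variable $x_k$ whose local ratio satisfies
\[
\frac{(P^\top y)_k/\|y\|_1}{(C^\top z)_k/\|z\|_1}\le 1+\eps .
\]
Each increase is multiplicative (by a factor $1+\eps$, starting from a tiny uniform initial value), and covering rows that reach value $1$ are dropped.

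\textbf{Key step.} The normalizations $\|y\|_1$ and $\|z\|_1$ are global quantities, and computing them costs $\Omega(D)$ rounds; this is where the $\widetilde{\Theta}(D)$ dependence of \cite{Vos2026} comes from. Following the idea behind the pure packing/covering algorithm of \cite{BartalBR04}, we instead use the unnormalized local comparison
\[
(P^\top y)_k \le \lambda\,(C^\top z)_k ,
\]
where $\lambda$ is a global threshold. All nodes share the same schedule for $\lambda$: it is lowered geometrically in phases, with $O(\eta\log(mn)/\eps)$ phases. Within a phase, a variable increments while its inequality holds, and each such step needs only $O(1)$ rounds of neighbor communication to refresh the entries $y_i,z_{i'}$ and the column sums. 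Because $\eta$ bounds every $(Px)_i$ up to constants, each exponent is a polylogarithmic number that fits in $O(\log N)$ bits after rounding. The total round count is therefore $\poly(\log(mn)/\eps)$.

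\textbf{Analysis.} Track Young's potential
\[
\Phi=\ln\|y\|_1 + \ln\bigl(1/\|z\|_1\bigr)-\text{(normalizing constant)} .
\]
If the LP is feasible, averaging against the feasible solution $x^*$ shows that at any time some variable has ratio $(P^\top y)_k/(C^\top z)_k\le (1+O(\eps))\|y\|_1/\|z\|_1$. Hence whenever $\lambda$ drops below the true global ratio, no variable moves. Conversely, every move made at threshold $\lambda$ increases $\ln\|y\|_1$ by at most $(1+O(\eps))$ times the corresponding decrease of $\ln\|z\|_1$, provided $\lambda$ is within a factor $1+\eps$ of the current global ratio. The main obstacle is exactly this coupling: the global ratio drifts while $\lambda$ is fixed, and simultaneous updates can overshoot. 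My plan is to handle it as follows. First, use increments small enough, of size $\eps/\eta$ in the exponent, so that every weight changes by at most a $(1+\eps)$ factor per step; this makes simultaneous moves behave like sequential ones. Second, argue that the ratio $\|y\|_1/\|z\|_1$ is monotone in the right direction as the iterations proceed, so that $\lambda$ can follow it monotonically. Third, show that the number of steps within a phase is polylogarithmic, because each step either raises some covered row by an $\eps/\eta$ amount in the exponent or ends the phase.

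\textbf{Conclusion.} When every covering constraint reaches $1$, the potential bound yields $\max_i (Px)_i\le (1+O(\eps))\,\min_{i'}(Cx)_{i'}+O(\log(m)/\eta)$. Covering constraints that end strictly above $1$ can be handled by scaling. If instead the schedule exhausts $\lambda$ without covering everything, this certifies infeasibility. Rescaling $\eps$ gives the stated $Px\le(1+\eps)p$ with $Cx\ge c$ exactly.
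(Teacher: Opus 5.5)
Your central idea is exactly the paper's. You replace Young's global normalization $\|y\|_1/\|z\|_1$ (the paper's $\glo(x)$) by a threshold that all nodes update on a common, communication-free schedule. You test each column locally via $\loc_k(x)=(P^\top y)_k/(C^\top z)_k\le\lambda$. Your averaging lemma, small multiplicative steps, and dropping of satisfied covering rows also match the paper. \textbf{The gap is that your schedule runs in the wrong direction, and the argument that makes a schedule work is missing.}

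As $x$ grows and covering rows are dropped, $\|y\|_1$ only increases and $\|z\|_1$ only decreases. So $\glo(x)$ is non-decreasing, and so is every $\loc_k(x)$. An update of $x_k$ is safe when $\loc_k(x)\le(1+\eps)\glo(x)$. What your test therefore needs is the one-sided invariant $\lambda\le(1+\eps)\glo(x)$, not ``within a factor $1+\eps$''. A geometrically \emph{lowered} $\lambda$ cannot maintain this invariant and still make progress:
\begin{itemize}
\item If it starts well above $\glo$, the early phases make unsafe moves. Once $\lambda\ge\max_k\loc_k(x)$, every variable with a covering entry is eligible regardless of its packing rows, and the packing guarantee is lost.
\item If it starts at or below $\glo(x_0)$, the run stalls. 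Since the $\loc_k$ never decrease, an ineligible variable stays ineligible forever.
\end{itemize}
Already for one variable with $P=C=(1)$ you have $\loc=\glo=e^{2\eta x}$. Reaching $x\ge 1$ therefore forces the threshold to climb from about $1$ to about $e^{2\eta}$. Your sentence ``whenever $\lambda$ drops below the true global ratio, no variable moves'' also inverts Young's lemma. What the lemma actually gives is: if \emph{no} variable satisfies $\loc_k(x)\le(1+\eps)\tau$, then $\glo(x)>(1+\eps)\tau$.

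The repair is the paper's argument. Start with $\tau=1\le\glo(x_0)$, let $x_k$ move when $\loc_k(x)\le(1+\eps)\tau$, and \emph{raise} $\tau$ by a factor $1+\eps$ after each phase, up to $e^{10\eta}$. That is $O(\eta/\eps)$ phases. Inside a phase, $\tau\le\glo(x)$ persists by monotonicity, so every move is safe and Young's potential bound holds.

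To re-establish the invariant for the next phase, you must show that no variable is eligible at the end of the phase. Your third step does not do this. Counting $\eps/\eta$-increments of covering rows bounds a phase only by the number of rows times a polylog, not by a polylog. Moreover, in \congest a phase cannot ``end when nothing moves''; it must have a fixed, commonly known length $R$. What works is the following.
\begin{itemize}
\item With $\tau$ fixed, $\loc_k(x)/\tau$ is non-decreasing, so the rounds in which $x_k$ is eligible form a prefix of the phase.
\item Each $x_k$ is updated only polylogarithmically often in total. It starts at $1/(n\max_i P_{ik})$, and safe moves keep $\max_i(Px)_i\le 1+O(\eps)$. The start must be per variable, not uniform; otherwise the count depends on the global aspect ratio.
\end{itemize}
Take $R$ above that bound. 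A variable still eligible after $R$ rounds would have moved in all of them, a contradiction. Young's lemma then yields $\glo(x)>(1+\eps)\tau$.

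Finally, covering comes from the top of the schedule, not from ``exhausting $\lambda$''. Suppose a covering row were still below $1$ after $\tau$ exceeds $e^{10\eta}$. Then $\ln\glo(x)=\softmax(\eta Px)+\softmin(\eta Cx)\le(2+O(\eps))\softmin(\eta Cx)+O(\log m)<10\eta$, which contradicts $\glo(x)>e^{10\eta}$.
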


Given that mixed packing-covering LPs capture a wide range of
optimization problems, we feel this can be a general and powerful
subroutine for the \congest model, beyond the immediate application to
load balancing.

\paragraph{Second Contribution: Load Balancing}
We give a $(2+\eps)$-approximation for the integer
version of the unrelated-machines load balancing problem. This is essentially optimal, as there is an unconditional lower bound showing that any better-than-two approximation for this problem requires $\Omega(D + \sqrt{n})$, where the diameter $D$ can be as large as $n$ \cite{AhmadiKO18} \footnote{The lower bound of \cite{AhmadiKO18} is technically for perfect matching. Our stated lower bound follows because if we set all edge-sizes to $1$, then a perfect matching is equivalent to an assignment with maximum load $1$, whereas any other assignment has load at least $2$.}.

At first glance, it may seem
that this result follows directly from the mixed packing-covering result, but
this is not true; writing the LP above required knowing the optimal
maximum load $L$! As we discuss in the technical overview
(\S\ref{sec:overview}), the standard ``guess and check'' methods do
not work in the \congest model---since there is no global
coordination---and several new ideas are needed.

Under the distributed load-balancing conventions stated above, all
nodes know $|J|$ and $|M|$ (or polynomial upper bounds on these
quantities), as well as the approximation parameter $\eps$. However,
we \emph{do not} assume any global knowledge about the maximum or
minimum sizes $s_{ij}$, and our round-complexity \emph{does not} have
any dependence on the aspect ratio
$\log(\textrm{max-size}/\textrm{min-size})$.

\begin{theorem}[Informal version of \Cref{thm:lb-final}]
\label{thm:intro-lb} There is a \congest algorithm that, given any load-balancing
  instance $G = (J \cup M, E)$ with arbitrary positive sizes $s_{ij}$,
  can compute a $(2+\eps)$-approximate integral solution in
  $\poly(\log mn/\eps)$ rounds. Specifically, at termination, every
  machine locally knows which jobs are assigned to it, with the
  guarantee that for every machine $i$,
  \[ \sum_{j \text{ assigned to } i} s_{ij} \leq
  (2+\eps)\lambda^*, \] where $\lambda^*$ is the maximum load of the optimal integral
  solution.
\end{theorem}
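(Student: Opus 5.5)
We reduce \Cref{thm:intro-lb} to three ingredients: the mixed packing-covering solver of \Cref{thm:mpc-informal}, a local way to handle the unknown optimum $\lambda^*$, and a distributed version of the Lenstra--Shmoys--Tardos rounding. The plan is as follows.

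\textbf{Step 1: strengthen the LP so that rounding works.} For a guess $L$, drop every edge with $s_{ij} > L$ and write the assignment LP from the introduction on the remaining edges. This is the usual LST pruning, and it is what makes an additive loss of $\max_j s_{ij}\le L$ per machine acceptable. Given a fractional solution of load at most $(1+\eps)L$ on pruned edges, the standard LST/Shmoys--Tardos argument works as follows. On each machine, sort the jobs by size and split the fractional load into unit buckets, giving a bipartite graph that has a fractional perfect matching on the jobs. An integral matching in this graph yields load at most $(1+\eps)L+L$. We must compute such a matching distributively. For this we use a $(1-\delta)$-approximate fractional-to-integral matching rounding in polylog rounds (e.g.\ via the bipartite structure and standard \congest matching routines). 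Jobs left unmatched are handled by a small number of repeated phases on the residual instance, with the loss absorbed into $\eps$. Alternatively, we can use the rounding scheme the paper refers to as a black box, which we assume exists.

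\textbf{Step 2: avoid global knowledge of $\lambda^*$.} This is the main obstacle, since there is no global coordination and the aspect ratio may be unbounded. Our first attempt is to make the guess \emph{local per machine}. Each job $j$ knows $\ell_j=\min_i s_{ij}$, and $\lambda^*\ge \max_j \ell_j$. We bucket the sizes into powers of $(1+\eps)$ and consider the geometric scales $L_k=(1+\eps)^k$. Within a connected component, only scales in a window of width $O(\log(n)/\eps)$ around $\max_j\ell_j$ relevant to a region can matter. The reason is that jobs with $s_{ij}\le \eps L/n$ are negligible (assigning them anywhere costs at most $\eps L$ in total), and edges with $s_{ij}>L$ are pruned. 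So for every edge only $O(\log n/\eps)$ scales are "active."

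We then run a single LP in which machine capacities are not fixed. Concretely, we add a variable $\lambda$ per machine and require, via covering-type constraints, that $\lambda_i \ge \lambda_{i'}/(1+\eps)$ for neighboring machines at distance $2$. If that fails, we fall back to a doubling scheme in which each connected component independently finds its smallest feasible scale. Feasibility failure would be detected locally by a certificate (the dual from the MWU-style solver), and infeasibility would propagate only $\polylog$ hops, because the dual weights decay geometrically. I expect this step to need the genuinely new idea. The guess-and-check must be replaced by a self-consistent local threshold, and I would aim to prove that every machine's final load is at most $(1+O(\eps))$ times the optimum of the subinstance that "certified" its threshold, hence at most $(1+O(\eps))\lambda^*$.

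\textbf{Step 3: finish.} Once every machine has a fractional load of at most $(1+\eps)\lambda^*$ using only edges with $s_{ij}\le$ its local threshold (itself at most $(1+\eps)\lambda^*$), Step 1's rounding gives integral load at most $(2+O(\eps))\lambda^*$. Rescaling $\eps$ completes the proof. The round complexity is $\poly(\log mn/\eps)$, as a sum of the calls to \Cref{thm:mpc-informal} over $O(\log n/\eps)$ scales and the polylog rounding.
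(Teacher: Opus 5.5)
Your Step~2 is where the theorem actually lives, and as written it is a list of candidate mechanisms rather than an argument. None of them works as stated.

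\emph{The per-machine $\lambda_i$ LP.} Once $\lambda_i$ is a variable, the rows $\sum_j s_{ij}x_{ij}-\lambda_i\le 0$ and $\lambda_i-\lambda_{i'}/(1+\eps)\ge 0$ have negative coefficients, so this is not a mixed packing--covering program. More importantly, it cannot encode the pruning ``$x_{ij}=0$ whenever $s_{ij}$ exceeds the threshold''. That rule is combinatorial, not linear, and it is exactly what bounds the integrality gap: one job of size $m$ on each of $m$ machines has fractional load $1$ but integral optimum $m$.

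\emph{The component-wise fallback.} Having each connected component find its smallest feasible scale needs component-wide agreement, i.e.\ $\Omega(D)$ rounds. That is precisely the barrier the theorem must avoid.

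\emph{The claim that infeasibility propagates only polylog hops.} This is unsupported, and the chain of Figure~\ref{fig:overview-chain} shows why it is doubtful. At threshold $T$ only $j_1$ fails. Fixing $j_1$ on its old edge saturates $m_1$, so $j_2$ fails, then $j_3$, and so on along the whole chain. Some mechanism must deliberately cut this dependence off, and that is the idea you are missing: the paper's flow interpolation. When the jobs $J_0$ reject at threshold $L$, a BFS of radius $1/\eps$ from $J_0$ gives each reached job a label $d_j$. Each such job is frozen at the convex combination $\eps d_j x^*_{\cdot j}+(1-\eps d_j)x'_{\cdot j}$ of the current and previous solutions. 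Jobs sharing an active machine have labels differing by at most one, so each machine pays only $O(\eps L)$ extra. Jobs at distance $1/\eps$ use only $x^*$, so $x^*$ restricted to the unfrozen jobs still fits the residual capacities (\Cref{lem:residue_feasible}). Since freezing starts only at thresholds below $\lambda^*$ (\Cref{cor:almost_same_stop_point}), the result has load $(1+O(\eps))\lambda^*$ and support on edges of size at most $(1+\eps)\lambda^*$.

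Your handling of the aspect ratio has a similar gap. A ``window of width $O(\log n/\eps)$ around $\max_j\ell_j$ relevant to a region'' is not locally computable, because that maximum, like $\lambda^*$, is global. So the nodes get no common, bounded schedule of scales, and your round bound ``over $O(\log n/\eps)$ scales'' is unjustified. The paper needs a separate construction:
\begin{itemize}
\item Delete edges more than $\beta$ levels above each job's lightest edge.
\item For every offset $a\in\{0,\dots,B-1\}$, bucket the levels with guard bands of width $2\beta$.
\item Pre-assign jobs from lower buckets; the guard band makes their load negligible.
\item Solve each bounded-ratio bucket in parallel with the search-plus-interpolation algorithm.
\item Merge the $B$ offset solutions with one more mixed packing--covering program. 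It is feasible by averaging, since each job is active for a $1-O(\eps)$ fraction of offsets.
\end{itemize}

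Finally, your own rounding route in Step~1 is not sound. With unit-mass slots there is no expansion, so nothing bounds the length of augmenting paths. Repeating approximate matchings on residual instances can also add another $\max_j s_{ij}$ to the same machine in every phase. Falling back to the black-box rounding is fine; the paper does exactly that, using Li's slots of mass $1/(1+\eps)$ to get $(1+\eps)$-expansion. But this does not repair Step~2, on which your Step~3 depends entirely.
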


\begin{remark}
The same algorithm with a minor tweak can instead return a \emph{fractional} assignment with maximum load $(1+
\eps) \lambda^*_f$, where $\lambda^*_f$ is the maximum load of the optimum fractional assignment. 
\end{remark}




\subsection{Our Techniques}
\label{sec:overview}





\textbf{Mixed Packing-Covering LPs:} Our distributed algorithm for
mixed packing-covering (MixPC) LPs (Theorem \ref{thm:mpc-informal} builds upon the parallel algorithm of
Young~\cite{Young2001MixedPackingCovering}, which does
$O(n\poly((\log n)/\eps))$ work and has $O(\poly((\log n)/\eps))$
depth.  Loosely speaking, Young's algorithm defines, for every column
$j$ of constraint matrix $P$ or $C$, a function $\loc_j$ that depends
only on information local to the column node $j$ (i.e., information
relevant to variable $x_j$); for our purposes, 
$\loc_j$ can be computed in a \emph{single} \congest round.
It also maintains a function $\glo$ the depends on the soft-min (resp.,
soft-max) of the extent to which the current solution $x$ satisfies
the covering (resp., packing) constraints. Define a variable $x_j$ to
be \emph{improving} if:
\[ \loc_j \leq (1+\eps) \glo. \] Young shows that there exists at
least one improving variable, and that the algorithm which repeatedly
finds an improving $x_j$ and slightly increases it, produces a valid
output. He also shows how to speed up the approach by finding
multiple improving $x_j$s in a single parallel round.

The key challenge to porting this algorithm to the distributed setting
is that the function $\glo$ depends on all the variables, so computing
it in the \congest model requires $\Omega(\text{diameter})$ rounds. We
aim to use only a poly-logarithmic number of rounds, so it is
impossible in general for all nodes to agree on even a single bit. (By
contrast, the $ \Omega(\text{diameter})$-round \congest algorithm of
\cite{Vos2026} is able to compute $\glo(x)$ and
broadcast it to all nodes).

To resolve this issue, we show that, somewhat to our own surprise, it
is possible to push the global soft-min/max functions of Young's
algorithm into the analysis; the algorithm itself never needs to
compute them. Because the original function $\glo$ is monotonically
increasing, we show that we can effectively replace $\glo$ with a
single number $\tau$ that increases at a fixed rate (i.e., gets
multiplied by $(1+\eps)$ every poly-logarithmically many rounds) and
can thus be maintained locally by all vertices without any
communication.

\medskip \textbf{Fractional Load Balancing:} As mentioned above,
na\"{\i}vely relaxing the integer LP for load-balancing by dropping
the integrality constraints gives us an MixPC LP (which can be solved
using the above algorithm---let us call it $\ampc$). But this
na\"{\i}ve relaxation can lead to a large integrality gap. To fix this
problem, we need to add an additional constraint: we must require that
only edges in $E_{\lambda^*}$ have non-zero flow, where $E_L$ is
defined as the set of edges $(i,j)$ corresponding to sizes
$s_{ij} \leq L$. If all nodes knew $\lambda^*$, this would again be
easy, since each node could drop the edges with large $s_{ij}$ values
locally, and then run $\ampc$. Even though we do not know $\lambda^*$,
the solution is easy in the centralized (sequential or parallel)
setting: we ``guess-and-check''. We start with a large (but not too
large) guess of the threshold $L$, and if $\ampc$ results in a
feasible solution on the edge set $E_L$, we set $L \gets L/(1+\eps)$
and try again.

But of course, this ``guess-and-check'' approach has a problem in the
\congest model: the nodes cannot agree on whether $\ampc$ with load
threshold $L$ succeeded or failed. Indeed, if $L < \lambda^*$, some
nodes may fail while others successfully assign themselves to a low-load
machine. Since we want a diameter-independent bound, the nodes that
failed and hence know that target load $L$ is infeasible cannot
communicate this information to the rest of the graph in
$o(\text{diameter})$ steps. There is also the problem of having a good
starting guess for $L$; since the nodes only know the job sizes in
their local neighborhood, the number of rounds could be
$O(\log n + \log (s_{\max}/s_{\min}))$, and the second term is
undesirable. To address these issues, we will need a sequence of
ideas.



To keep things simple for now, assume that all nodes know the maximum
and minimum sizes $\smax$ and $\smin$, and also that
$\smax / \smin = \poly(n)$. Since
$\lambda^* \in [\smin, n \cdot \smax]$, each node can start off the
search process with a guess of $L = n\smax$ (for which a solution
trivially exists), and then gradually gradually decrease the target load $L$ geometrically, replacing $L$ by $L/(1+\eps)$ at each step, with all the nodes agreeing on the value of
$L$ as long as the problem is feasible using the subset of edges $E_L$.


\header{Local Failures Cause Headaches.} Let $T$ be the first load for
which some of the nodes fail to satisfy the LP relaxation and let
$\jfail$ be the failed job nodes (we can ensure that machine nodes never fail). Since this is the first failure, the algorithm found a
feasible assignment $x^{T(1+\eps)}$ with maximum load at most $T(1+\eps)$
in the previous phase, using only edges in $E_{T(1+\eps)}$. On the
other hand, the failure at $T$ implies that $\lambda^* > T$, so a
\emph{global} observer would recognize that $x^{T(1+\eps)}$ was an
approximately optimal solution, and that all nodes could just use
$x^{T(1+\eps)}$. The problem is that only $\jfail$ knows
that it is time to stop.
Fixing this locally seems difficult: one approach is for the jobs in
$\jfail$---which know that $T(1+\eps)$ is an acceptable load but $T$
is not---to ``freeze'' their assignments according to
$x^{T(1+\eps)}$. Then the non-failed nodes can continue the algorithm
as before (using the residual capacity not used by the frozen
assignments), trying lower thresholds until they settle on some other
assignment $x^{T'}$ for $T' \leq T$. The problem is that, due to edge deletions, freezing the failed nodes'
assignments could make the resulting problem infeasible for the
non-failed nodes.



\begin{wrapfigure}{R}{0.4\textwidth}
  \centering
  \begin{tikzpicture}[
    vertex/.style={circle, draw, fill=white, inner sep=1.8pt},
    rededge/.style={BrickRed, line width=2.0pt},
    blueedge/.style={blue, line width=1.4pt},
    dots/.style={fill=white, inner sep=1pt}
]

\foreach \i in {1,2} {
    \coordinate (m\i) at (0,{-(\i-1)*1.3});
}
\foreach \i in {1,2,3} {
    \coordinate (j\i) at (4,{-(\i-1)*1.3});
}

\coordinate (mnn) at (0,-4.5);
\coordinate (jnn) at (4,-4.5);

\coordinate (mhiddenBot) at (0,-3.5);

\draw[rededge] (m1) --
    node[above, midway, text=BrickRed, fill=white, inner sep=1pt]
    {$T(1+\epsilon)$}
    (j1);
\draw[rededge] (m2) -- (j2);
\draw[rededge] (mnn) -- (jnn);

\draw[blueedge] (j2) --
    node[below, midway, sloped, text=blue, fill=white, inner sep=1pt]
    {$T$}
    (m1);
\draw[blueedge] (j3) -- (m2);

\draw[blueedge] (jnn) -- (mhiddenBot);

\node[dots] at (2,-3.2) {$\vdots$};

\foreach \i in {1,2} {
    \node[vertex, label=left:{$m_\i$}] at (m\i) {};
}
\node[vertex, label=left:{$m_n$}] at (mnn) {};

\foreach \i in {1,2,3} {
    \node[vertex, label=right:{$j_\i$}] at (j\i) {};
}
\node[vertex, label=right:{$j_n$}] at (jnn) {};

\end{tikzpicture}
  \caption{Example showing problems due to local failures and edge deletions.}
  \label{fig:overview-chain}
\end{wrapfigure}

\header{Edge Deletions Cause Further Headaches.} 
As we move from the target load $T(1+\eps)$ to $T$, the algorithm restricts itself to  $E_T$ and deletes edges of size larger than $T$. Consider \Cref{fig:overview-chain},
where the optimal assignment only uses the horizontal red edges of
size $T(1+\eps)$ and hence has $\lambda^* = T(1+\eps)$. In our
strawman algorithm, running $\ampc$ with threshold $L = T(1+\eps)$
succeeds, and for $L = T$ it fails. But this failure is very local:
only job $j_1$ will fail; the other jobs can use the diagonal blue
edges. The failure would cause $j_1$ to freeze its flow according to
$x^{T(1+\eps)}$, i.e., along $(j_1, m_1)$. Now if we re-run $\ampc$
for $L = T$, the node $j_2$ is the only one to fail: it cannot use
$(j_2,m_2)$ since its size $T(1+\eps)$ is too high for $E_{T}$, and it
cannot use edge $(j_2, m_1)$ since $m_1$'s capacity is used up by
$j_1$. Thus, $j_2$ reverts to its flow from $x^{T(1+\eps)}$, causing
$j_3$ to fail next. All in all, we might run $\ampc$ up to $n$ times
to discover all the failed nodes.

\header{A General Barrier for Integer Assignments.}
This example points to a more general barrier: even if $\ampc$ always
returns an \emph{integral} assignment, we cannot hope for our \congest
algorithm to do the same with a $(1+\eps)$-approximation. Indeed, in the example above, the
presence/absence of $j_1$ determines whether the optimal integral
solution is the blue or the red matching. But this means that the
integer assignment of job $j_n$ (along the blue or red edge) depends
on $j_1$'s existence, which would require $\Omega(\text{diameter})$
rounds. 

\header{Our Flow-Interpolation Approach.} We continue to use the idea
of freezing jobs---in fact, we freeze not only the failed jobs, but
also some more jobs. Specifically, the initial set $\jfail$
communicates to all other (non-failed) nodes within a radius of
$O(1/\eps)$, and all those nodes together will define a new frozen
flow based on interpolating between the flow from the previous
iteration, and the one from this iteration. This frozen flow agrees
with the previous iteration's flow near the failed jobs, and slowly
transitions to being the current iteration's flow at the boundary of
the frozen region. For example, when applied to to Figure \ref{fig:overview-chain}, our algorithm would interpolate between the
red and blue matchings: $j_1$ puts flow exclusively on the red edge,
$j_2$ uses mostly red but a bit of blue, $j_3$ uses a bit more blue,
and so on, until $j_{1/\eps}$ only uses its blue edge. (The rest of
the jobs remain unfrozen.) This interpolation incurs an extra $(1+\eps)$-approximation, but
has the advantage that $j_{1/\eps}$ uses only the blue edge
\emph{regardless of the status of $j_1$}; in other words, this
assignment only requires communication within radius $1/\eps$ of the
failed nodes, and hence a small multiplicative overhead in the number
of rounds.

\header{Removing the Assumption on Weights.} In the previous
subsection, we assumed that $\smin,\smax \in [1,\poly(n)]$ and were
globally known. We needed this for two reasons: (a)~all vertices could
agree on an initial load $L = n\smax$ that was guaranteed to be
feasible, and (b)~the algorithm replaces $L$ by $L/(1+\eps)$ at each step, so the round complexity is $O(\log(\smax/\smin)/\eps)$,
and the assumption makes this $\poly((\log n)/\eps)$. We now show how
to use such an algorithm \emph{as a black box} to devise a new
algorithm that no longer requires this assumption.

Our ``shifting'' idea is just as clean as the interpolation idea: we
locally break the instance into instances with known and bounded size
ratio, and solve each one in parallel. Let us sketch the ideas: define
$\beta \approx \log (nm)/\eps$, and define the \emph{class} of an edge
$(i,j)$ to be $\log_{1+\eps}(s_{ij})$.

\begin{enumerate}[nosep]
\item Idea \#1: we can enforce that the smallest and largest edges
  incident to any job differ in class by at most $\beta$: indeed, we
  can drop edges of size more than $(1+\eps)^\beta\; \smin(j)$ and use
  the edge of size $\smin(j)$; this cumulatively increases the machine
  loads by at most a $(1+\eps)$ factor. As a result of this preprocessing, edge classes for each job $j$
  now lie in a \emph{job interval} $I_j$ of width $\beta$; jobs can
  compute this locally.

\item Idea \#2: assume that the nodes agree on a random bucketing of
  the integers into buckets of length $\beta/\eps$, of the form
  \[ \ldots, [a-\beta/\eps, a), \quad [a, a+\beta/\eps), \quad [a+\beta/\eps, a+2\beta/\eps),
    \ldots. \] In expectation, only an $\eps$-fraction of the job
  intervals would cross the boundaries of this random bucketing (or
  even be within $\beta$ distance of the boundaries): call these jobs
  \emph{inactive}.  So each active job interval lies well within some
  bucket---call this the \emph{job bucket}.

\item Idea \#3: Each machine $i$ may get active jobs from various
  buckets. But the load incurred due to all but the highest bucket
  (i.e., the one corresponding to the largest sizes) is insignificant,
  so it can just pre-assign these to itself with very little loss. The
  remaining jobs (active but not pre-assigned) incident to it all have
  the same job bucket---this is the \emph{machine bucket}. Now form
  sub-instances for each machine bucket, and solve them all in parallel. Crucially, each such instance has known $\smax$, $\smin$ with small ratio $\smax/\smin$, and so can be solved using our previous algorithm as a black box.
\end{enumerate}
Finally, choosing the bucketing randomly was only a conceptual trick:
each bucketing is specified by $a \in [\beta/\eps]$, so we enumerate
over all $a$. And since each job is assigned in at least $1-O(\eps)$
fraction of intervals, we finally average all these assignments and
scale them up slightly. 

\header{Rounding to an Integral Solution} The previous subsections
showed how to obtain a fractional solution with maximum load
$(1+\eps)\lambda^*$. The final step is to round this to an integral
solution of maximum load
$(1+\eps)\lambda^* + W^* \leq (2+\eps)\lambda^*$ after reparameterizing $\eps$, where $W^*$ is the
maximum edge-weight used in the fractional solution. We use the
rounding scheme of Shmoys and Tardos~\cite{ShmoysT93}. Specifically,
we can just use the ideas from an efficient implementation of this
scheme by Li~\cite{Li23} in the parallel model, and port these
seamlessly to the \congest model. We leave the details to
\Cref{sec:rounding}, and do not claim any technical novelty here.

\subsection{Related Work}
\label{sec:related-work}

\textbf{Distributed Load Balancing} All prior results on distributed
load balancing assumed job sizes $s_j$; then all edge sizes $s_{ij} \in \{s_j, \infty\}$. 
Czygrinow, Hanckowiak, Szymanska and Wawrzyniak~\cite{CzygrinowHSW16}
showed a $2$-approximation in $O(\textrm{[Max-Degree]}^5)$
rounds. Halld{\'{o}}rsson, K{\"{o}}hler, Patt-Shamir and Rawitz gave
the first algorithm with poly-logarithmic round complexity, but with
the approximation being $O(\log(n)/\log\log(n))$
\cite{HalldorssonKPR18}. Assadi, Bernstein and Langley improved the
approximation to $O(1)$, and also extended the algorithm from the
LOCAL model to the more general CONGEST model
\cite{AssadiBL20}. Finally, Ahmadian, Liu, Peng, and Zadimoghaddam
gives a $1+\eps$-approximation fractional solution in
$O_{\eps}(\polylog)$ rounds \cite{Ahmadian2021DistributedLB}; using a
black-box rounding scheme, this also yields a $(2+\eps)$-approximate integral
solution. Their algorithm also gives a $(1+\eps)$-approximation for
any symmetric and convex function of the machine loads. However, once we allow arbitrary edge sizes, there
are easy instances where no single assignment can be a constant
approximation for both $\ell_1$ and $\ell_\infty$ norms
simultaneously.

The techniques used in these papers are fundamentally limited to
job-dependent sizes, and do not extend to unrelated sizes. 
Indeed, all previous algorithms do one of the following (or both):
\begin{itemize}[nosep]
\item Compute an assignment with no short augmenting paths, or
\item Prove the near-optimality of the computed assignment via
  ``expansion'' arguments. For example, by a simple averaging
  argument, it is easy to see that the optimal load $\lambda^*$
  satisfies $\lambda^* \geq \frac{\sum_{j \in J} s_j}{\|N(J)\|}$ for
  \emph{any} set of jobs $J$ and neighboring machines $N(J)$.
\end{itemize}
By contrast, these properties no longer have a clear interpretation
with arbitrary edge-sizes. For this reason, our algorithm instead
relies on a very different set of techniques that has little overlap
with previous work.


\textbf{Mixed Packing-Covering Linear Programs.} In a very recent
paper, de Vos, Wennmann, Baumecker, Maus and Schager \cite{Vos2026}
study solving mixed packing-covering LPs in the context of the Santa
Claus problem in the \congest model. (In this problem, the goal is to
maximize the minimum machine load, so it is a max-min problem rather
than the min-max problem of load-balancing.) They show that the Santa
Claus problem 
provably requires $\Omega(\sqrt{n} + D)$ rounds, where $D$ is the
diameter of the graph, and hence they can allow their algorithms an
$\Omega(D)$ overhead for all subroutines. This includes solving the
mixed packing-covering LP, for which they can occasionally do global
broadcasts to ensure that all the vertices of the graph are in
sync. By contrast, we aim for $\polylog (mn) \ll D$ rounds, and the
main technical innovations of our paper are precisely to avoid global
broadcasts.



\eat{

\subsection{Text to Use Later}

\aaron{Proposing that we move the rest of the paragraph to the corresponding section of the main body. So ignore it fr now and move on to the next paragraph.} We start with the observation that $\glo$ from Young's algorithm is monotonically increasing. We then define a threshold $\tau$ which is just a single number: all vertices start the same $\tau$ (using their knowledge of $n$), and then all vertices increase $\tau$ at the same fixed rate (i.e. multiply $\tau$ by $(1+\eps)$ every $\polylog$ rounds). We change the requirement for an improving variable $x_j$

$$\loc_j \leq \tau$$

We are able to show that: {\bf 1)} $\tau$ grows quickly enough that the \congest algorithm terminates in $\poly(\log/\eps)$ rounds and
{\bf 2)} $\tau$ grows slowly enough that we always have $\tau \leq \glo$, so our improving edges are also improving by Young's original definition, which guaranties correctness at the end. 

} 


\section{Solving Fractional Load Balancing}
\label{sec:solv-fract-load}

In this section, we prove our result obtaining near-optimal
fractional solutions for distributed load-balancing (Theorem \ref{thm:intro-lb}).


\subsection{Formal Problem Definitions}
\label{sec:LB-problem-defs}

Before we begin, let us formally define instances of the
load-balancing problem and their solutions; our intent is to pin down
their representations in the \congest model.

\begin{defn}[Load Balancing Instance]\label{def:ilb}
  An instance $\calI$ of the \emph{(unrelated-machines) load balancing
    problem} (abbreviated simply as LB) consists of a set $M$ of
  machines and a set $J$ of jobs, where the size of job $j$ on machine
  $i$ is given by a value $s_{ij} \ge 0$; let
  $E := \{(i,j) : s_{ij} < \infty\}$. The goal is to find
  $(x,\lambda)$, with an \emph{integer vector}
  $x = (x_{ij})_{(i,j)\in E}$ that is an optimal solution for the
  integer LP corresponding to the following linear program:
  \begin{align}
    \min \quad & \lambda \notag\\
    \text{s.t.}\quad
      & \textstyle\sum_{j:\,(i,j)\in E} s_{ij}\,x_{ij} \le \lambda
        && \forall i \in M \tag{LB-LP} \label{eq:flb}\\
      & \textstyle\sum_{i:\,(i,j)\in E} x_{ij} \ge 1
        && \forall j \in J \notag\\
      & x_{ij} \in [0,1]
        && \forall (i,j) \in E. \notag
  \end{align}
  We write $\lambda^*(\calI)$ for the optimal value of the
  corresponding integer LP,
  i.e., where each
  $x_{ij} \in \{0,1\}$.
\end{defn}

\begin{defn}[Distributed Load Balancing Instance]\label{def:dilb}
  A distributed load-balancing (LB) instance $\calI$ has an associated
  communication network $H(\calI)$ with a node for each machine $i$
  and one for each job $j$, and an edge $(i,j)$ if $s_{ij} <
  \infty$. (I.e., the edges are precisely the set $E$.) Hence, 
each job $j$ and each machine $i$ know only the sizes
 corresponding to the edges incident to them.
\end{defn}

\begin{defn}[Distributed $(1+O(\eps))$-Approximation for Fractional Load Balancing]
  \label{def:frac-LB}
  A distributed protocol for Fractional LB is an
  $(1+O(\eps))$-approximation if for each distributed instance $\calI$ of
  LB as in \Cref{def:dilb} (where every node is also given the parameter
  $\eps > 0$), it terminates in finitely many rounds, with each
  machine node $i$ returning values $x_{ij} \geq 0$ for all $j \in N(i)$
  such that
  \[
    \sum_{j \in N(i)} x_{ij}s_{ij}
    \leq (1+O(\eps))\lambda^*(\calI),
  \]
  along with $\sum_{i \in N(j)} x_{ij} \geq 1$, where $\lambda^*(\calI)$ is the value of the optimal solution to the integral
  load balancing problem corresponding to (\ref{eq:flb}) on $G$.
  The solution must satisfy one additional property: for every machine $i$ and every $j \in N(i)$, we have
  $x_{ij}=0$ whenever $s_{ij}>(1+\eps)\lambda^*(\calI)$. (We need this property to ensure a small integrality gap.)
  
\end{defn}

Finally, let us define a natural but important distributed representation of sets.

\begin{defn}[Locally represented sets]
  \label{def:local-sets}
  We say a vertex set $S\subseteq M\cup J$ (such as $A_M$, $A_J$, or
  $J_0$ in \Cref{alg:weak-log}) is \emph{locally
  represented} if membership $v\in S$ is given by a single bit stored at node
  $v$, and an edge set $F\subseteq E$ is \emph{locally
  represented} if there is a predicate, evaluable by both endpoints
  from their local state, that determines whether $(i,j)\in F$
  (e.g., $E_L=\{(i,j) \mid s_{ij}\le L\}$, which both endpoints can decide
  since they share $L$).
  Passing such a set to a subroutine requires only that each node knows
  its own membership bit, and requires only $1$ round to make the bits
  consistent across each edge.
\end{defn}


\subsection{Two Black Boxes}
\label{sec:black-box}

We will assume the existence of the following distributed algorithm for a feasibility version of load-balancing, where, given a load-balancing instance and per-machine targets $L_i$, we want to find $x$ such that
\begin{gather}
    \textstyle \sum_{i \in N(j)} x_{ij} \geq 1 \; \forall j, \;\; \sum_{j \in N(i)} x_{ij}s_{ij}
    \leq L_i \; \forall i, \;\; x \geq 0. 
\end{gather} 
    The instance is feasible if and only if such an $x$ exists. 
\begin{lemma}[\distYoung]
\label{lem:mixpc-black-box}
    There is a distributed algorithm which, given a distributed load-balancing instance $\calI$, where each machine is also given a target load $L_i$ and a parameter $\eps$, terminates in $\poly(\log(mn)/\eps)$ rounds. 
   \begin{itemize}[nosep]
       \item Either, upon termination, every machine node returns values $x_{ij}$ for $j \in N(i)$, such that  $\sum_{i \in N(j)} x_{ij} \geq 1$ for all jobs $j$, and 
    $
    \sum_{j \in N(i)} x_{ij}s_{ij}
    \leq (1+O(\eps))\,L_i
    $ for each machine $i$.
    \item 
    Or, the instance $\calI$ is
    infeasible. In this case, at least one of the job nodes returns
    \textsc{reject}. Moreover, the machine nodes never return \textsc{reject};
    and therefore for each machine $i$ we have:
    $
    \sum_{j \in N(i)} x_{ij}s_{ij}
    \leq (1+O(\eps))\,L_i.
    $
   \end{itemize}
\end{lemma}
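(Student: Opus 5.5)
We plan to derive \Cref{lem:mixpc-black-box} as a direct application of the general distributed mixed packing-covering solver of \Cref{thm:mpc-informal} (formally \Cref{thm:dist-young}). The idea is to run it on the feasibility LP above, and to extract a local rejection certificate from the way that solver terminates.

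\textbf{Writing the instance as a MixPC LP.} We normalize the packing rows by the targets: for each machine $i$ the packing constraint is $\sum_{j\in N(i)} (s_{ij}/L_i)\,x_{ij}\le 1$. For each job $j$ the covering constraint is $\sum_{i\in N(j)} x_{ij}\ge 1$. There is one variable per edge $(i,j)\in E$. The LP communication graph has row nodes for machines and jobs and column nodes for edges. Each column node $x_{ij}$ is adjacent only to row nodes $i$ and $j$, so it can be simulated by the machine endpoint $i$, which stores $x_{ij}$. Then a single LP round costs $O(1)$ rounds of $H(\calI)$, and message sizes stay $O(\log N)$. Machine $i$ knows $L_i$ and the $s_{ij}$, and job $j$ knows its incident edges, so every row and column knows its own coefficients. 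Running the solver with parameter $\eps$ therefore takes $\poly(\log(mn)/\eps)$ rounds.

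\textbf{Feasible case.} If the solver's guarantee applies, the output satisfies $Cx\ge c$ exactly and $Px\le(1+\eps)p$. Translating back, $\sum_{i\in N(j)} x_{ij}\ge1$ for every job and $\sum_j s_{ij}x_{ij}\le(1+\eps)L_i$ for every machine. This is the first bullet.

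\textbf{Infeasible case and local rejection.} This is the step we expect to be the main obstacle, since no node can globally detect infeasibility. The plan is to exploit the structure of the $\tau$-based variant of Young's algorithm sketched in \S\ref{sec:overview}. Throughout the run, variables are only increased when they are improving, so every packing constraint stays within $(1+O(\eps))$ of its bound at all times. This invariant must be checked from the step size: each increment raises a packing row by at most an $\eps$-fraction, and only while the soft-max is controlled by $\tau$. It gives the machine guarantee in both cases, and machines never reject.

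We then run the algorithm for its fixed budget of $\poly(\log(mn)/\eps)$ rounds. At the end, each job $j$ checks locally whether its covering constraint $\sum_{i} x_{ij}\ge 1$ holds, which takes one round to collect $x_{ij}$ from its neighbors. If the constraint fails, $j$ outputs \textsc{reject}; otherwise it outputs nothing special.

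Correctness needs the contrapositive of Young's analysis. If the instance is feasible, then by the time bound of \Cref{thm:dist-young} all covering constraints are satisfied and no job rejects. This is exactly the guarantee that $\tau$ never overtakes $\glo$, together with termination within the round budget. The argument must show that this analysis uses feasibility only through the existence of an improving variable, which Young derives from a feasible $x^*$ by averaging.

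Consequently, if some covering constraint fails at the end, the instance is infeasible. Conversely, if the instance is infeasible, some covering constraint must fail, because the packing constraints hold within $(1+O(\eps))$. Otherwise $x/(1+O(\eps))$ would nearly satisfy the LP, and we output \textsc{reject} exactly at the failing jobs. One subtlety remains: the statement promises that when no job rejects, the first bullet holds. That is immediate, since then all covering constraints hold.

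Finally, we clip each $x_{ij}$ to $[0,1]$ locally. This keeps coverage at least $1$, because a clipped variable alone already contributes $1$, and it only decreases loads.
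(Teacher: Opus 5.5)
Your reduction is the paper's: the paper gets this lemma by invoking \Cref{thm:dist-young} on the feasibility LP. Your hosting of edge variables at machines, the feasible case, and the implication ``some job rejects $\Rightarrow$ infeasible'' are all fine. The gap is the step you flagged yourself: the machine bound on infeasible instances. Your justification (``only while the soft-max is controlled by $\tau$'') is not correct.

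The gate $\loc_j(x)\le(1+\eps)\tau$ gives a safe update, $\loc_j(x)/\glo(x)\le 1+\eps$, only while $\tau\le\glo(x)$. \Cref{lem:clean_cor} maintains that invariant only through \Cref{lem:young_key}, i.e.\ only under feasibility. On an infeasible instance $\tau$ can overtake $\glo$, \Cref{lem:bounding_softminmax} no longer applies, and packing rows are uncontrolled.

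Concretely, take one machine with $L_i=1/2$ and one job with $s_{ij}=1$. In standard form $P=2L$ and $C=L$, so $\loc(x)=2e^{3Lx}=2\,\glo(x)$. No update is ever safe, yet $x$ is raised in every inner round with $2e^{3Lx}\le(1+\eps)\tau$. As $\tau$ grows, $x$ keeps rising until $Lx\ge L$, which needs only $\tau\approx 2e^{3L}\ll\tau_{\max}=e^{10L}$; then the covering row is deleted. The output has $x\approx 1$: the job accepts, but the machine's load is $\approx 2L_i$, so neither bullet holds. With $s_{ij}/L_i$ large instead, $x$ stalls below $1$ and the job does reject, but the machine's load is close to $10L_i$. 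So the second bullet's machine bound fails too.

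The paper's own derivation has the same hole. \Cref{lem:distributed-mpc-correct} says its packing argument ``does not use feasibility''. But it goes through \Cref{clm:young-bounded-packing-load}, hence through \Cref{lem:bounding_softminmax}, which presupposes safe updates.

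The lemma itself still holds after an $O(1)$-round local repair. Let $(1+c_0\eps)L_i$ be the machine bound the solver guarantees on feasible instances.
\begin{itemize}
\item After the solver, every machine whose load exceeds $(1+c_0\eps)L_i$ scales all its $x_{ij}$ down to meet that bound and sends the new values to its jobs.
\item Each job then outputs \textsc{reject} iff $\sum_i x_{ij}<1$.
\end{itemize}
Machines now always satisfy the bound. A job can reject only if its covering row failed in the solver or an adjacent machine was overloaded. Both are impossible on feasible instances by the feasible-case guarantee of \Cref{lem:distributed-mpc-correct}. So every rejection certifies infeasibility, and if no job rejects the first bullet holds. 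Equivalently, you could forbid raising any variable adjacent to a packing row already above $(1+c_0\eps)L$; this never triggers on feasible runs.

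Separately, drop your ``conversely'' sentence. An infeasible instance can be $(1+\eps)$-approximately feasible (e.g.\ a lone job with $s_{ij}=(1+\eps/2)L_i$), and then every job may accept. Scaling $x$ by $1/(1+O(\eps))$ gives no contradiction either. That direction is not needed: only ``\textsc{reject} $\Rightarrow$ infeasible'' is used downstream, e.g.\ in \Cref{cor:almost_same_stop_point}.
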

This result follows from our general result on solving mixed packing-covering problems in distributed settings, which we will discuss in \Cref{sec:distr-mixed-pack}. 


We also assume the existence of following standard distributed BFS process (see e.g. \cite{Peleg2000}.)
\begin{restatable}[\distBFS]{lemma}{distBFSLemma}
  \label{lem:distributed-bfs}
  Let $A_M, A_J, J_0$ be locally represented 
  with $J_0\subseteq A_J$. There is a deterministic CONGEST algorithm
  $\distBFS(A_M, A_J, J_0, \eps)$ running in
  $O(1/\eps)$ rounds, at the end of which every job node $j$
  holds a label $d_j \in \{0,1,\ldots,1/\eps\}\cup\{+\infty\}$
  defined as follows: $d_j = r$ if the shortest path from $J_0$ to $j$
  in the subgraph induced by $A_M\cup A_J$ has exactly $2r$ edges and
  $r\le 1/\eps$, and $d_j=+\infty$ otherwise.
\end{restatable}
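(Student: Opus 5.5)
The plan is to implement a layered, truncated breadth-first search from the source set $J_0$, restricted to the subgraph $H[A_M\cup A_J]$. First, every node learns its own membership bits for $A_M$, $A_J$, $J_0$. Using one round, as allowed by \Cref{def:local-sets}, each node also learns the membership bits of all its neighbors. Thus every node knows which of its incident edges lie in the induced subgraph $H[A_M\cup A_J]$. Since $H$ is bipartite between jobs and machines, every path from $J_0$ to a job $j$ alternates job, machine, job, and so on. The length of any such path is therefore even, say $2r$, and I will track distances in units of "job hops."

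The protocol runs $1/\eps$ phases of two rounds each. Initially each $j\in J_0$ sets $d_j=0$, and every other job in $A_J$ sets $d_j=+\infty$; jobs outside $A_J$ also keep $+\infty$, since they are not in the subgraph. Consider phase $r=1,\dots,1/\eps$:
\begin{itemize}[nosep]
\item In the first round, every job $j\in A_J$ with $d_j=r-1$ sends a one-bit message to each neighbor in $A_M$.
\item Every machine in $A_M$ that receives such a message for the first time, and has not been marked before, marks itself as reached at (machine) distance $2r-1$.
\item In the second round, every machine marked in this phase sends a one-bit message to all its neighbors in $A_J$.
\item Each job $j\in A_J$ with $d_j=+\infty$ that receives a message sets $d_j=r$.
\end{itemize}
Each message is a single bit, well within the $O(\log N)$ bandwidth of \congest. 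The total round count is $1+2/\eps=O(1/\eps)$. Every node can track the phase counter locally, since all nodes know $\eps$ and the rounds are synchronous.

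Correctness is proved by induction on $r$. The invariant is: after phase $r$, a job $j$ has $d_j=r'\le r$ if and only if its shortest-path distance from $J_0$ in $H[A_M\cup A_J]$ is exactly $2r'$. Similarly, a machine is marked by phase $r$ if and only if its distance is exactly $2r'-1$ for some $r'\le r$.

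\begin{itemize}[nosep]
\item \emph{Base case.} Distance $0$ holds exactly for $J_0$.
\item \emph{Machines.} A machine $i\in A_M$ at distance exactly $2r-1$ has a neighbor in $A_J$ at distance $2r-2$. By induction, that neighbor has label $r-1$, so $i$ gets marked in phase $r$. It was not marked earlier, since that would imply a smaller distance by the hypothesis. Conversely, a machine marked in phase $r$ has a neighbor with label $r-1$, hence its distance is at most $2r-1$, and it is not smaller because it was unmarked before.
\item \emph{Jobs.} The same argument applies to jobs in the second round of the phase. The requirement $d_j=+\infty$ before updating prevents a job from being relabeled once its distance is fixed.
\end{itemize}
After the final phase $1/\eps$, every job with distance $2r$ for some $r\le 1/\eps$ holds $d_j=r$. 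All other jobs keep $+\infty$, as required. This covers jobs that are unreachable, jobs that are too far, and jobs not in $A_J$.

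There is no real obstacle here; the only point needing care is the restriction to the induced subgraph. Messages must only be sent across edges whose both endpoints are in $A_M\cup A_J$, and this is exactly what the initial one-round exchange of membership bits guarantees. A second point is that $J_0\subseteq A_J$ ensures the sources themselves lie in the subgraph. Determinism is immediate, since no random choices are made.
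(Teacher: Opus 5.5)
Your proposal is correct and is essentially the paper's approach: the paper gives no proof of its own and simply invokes the standard distributed BFS (citing Peleg), which is exactly your truncated layer-by-layer BFS. Your induction on phases, using bipartiteness to fix the parity of machine and job distances, fills in the routine details the paper omits.
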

\subsection{Algorithm with Known Maximum and Minimum Sizes}
\label{sec:weak-log}

We begin with an algorithm that achieves \Cref{thm:weak-log}; i.e., we
assume that all nodes know upper and lower bounds $s_{\max}$ and
$s_{\min}$ on the sizes. 

\begin{restatable}[Fractional Load Balancing; Weakly Polynomial]{theorem}{weaklog}
  \label{thm:weak-log}
  There is a deterministic distributed $(1+O(\eps))$-approximation algorithm for the Fractional
  Load Balancing problem (Defn.~\ref{def:frac-LB}), where
  additionally, every node in the communication network is given two
  values $s_{\max}$ and $s_{\min}$, which are respectively upper and
  lower bounds on all edge weights.  It works in
  $\poly(\log(mn\cdot s_{\max}/s_{\min})/\eps)$ rounds in
  the \congest model, and treats \Cref{lem:mixpc-black-box} and \Cref{lem:distributed-bfs} as black
  boxes.
\end{restatable}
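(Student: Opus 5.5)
We run a synchronized descending search over target loads. Every node starts from $L_0 := n\,s_{\max}$, which is trivially feasible on $E_{L_0}=E$, and in phase $k$ uses $L_k := L_0/(1+\eps)^k$. The search stops once $L_k < s_{\min}$, so there are $K = O(\log(n s_{\max}/s_{\min})/\eps)$ phases, and all nodes agree on $L_k$ because they share the phase counter.

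Each node keeps two flags: \emph{active} or \emph{frozen}. Machines also track their residual capacity. In phase $k$ we proceed as follows.
\begin{itemize}
\item Call \distYoung on the active jobs and machines, restricted to the locally represented edge set $E_{L_k}$, with per-machine targets $L_i$ equal to $L_k$ minus the load already placed on $i$ by frozen jobs.
\item Let $J_0$ be the set of active jobs that return \textsc{reject}.
\item Run \distBFS$(A_M,A_J,J_0,\eps)$ to obtain layers $d_j$.
\item Every active job with $d_j = r \le 1/\eps$ freezes at the interpolated flow $x_j := (1-r\eps)\,x^{(k-1)}_j + r\eps\, x^{(k)}_j$, where $x^{(k-1)}$ is the flow the active part held after the previous phase (stored locally by the machines). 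Frozen jobs then stay fixed forever.
\item Jobs with $d_j=+\infty$ stay active and adopt $x^{(k)}$.
\end{itemize}
The round bound follows immediately: $K$ phases, each costing $\poly(\log(mn)/\eps)+O(1/\eps)$ rounds.

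For correctness we need three facts.

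\textbf{(a) Coverage and the edge-size property.} Every job's flow is a convex combination of flows that each fully cover it, so coverage holds. A frozen job uses only edges in $E_{L_{k-1}}$. We need to show that the first phase in which any job near $j$ fails satisfies $L_{k-1}\le(1+\eps)\lambda^*$. A reject in phase $k$ means the residual instance was infeasible. Provided the frozen loads never exceed what an integral optimum restricted to $E_{L_k}$ could use, infeasibility implies $\lambda^* > L_k$, and hence the edges in $E_{L_{k-1}}$ have size at most $(1+\eps)\lambda^*$. Jobs that are never frozen end on $L_K \le s_{\min}\le \lambda^*$.

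\textbf{(b) Machine loads.} This is the main obstacle. A machine $i$ may border both layer $r$ and layer $r+1$ of the same freeze, or several freezes from different phases. The interpolation is designed so that the jobs adjacent to $i$ (which lie in at most two consecutive layers, since the graph is bipartite) use weights $(1-r\eps)$ and $(1-(r+1)\eps)$ on $x^{(k-1)}$. Machine $i$'s load is therefore at most a convex combination of its old load, bounded by $(1+O(\eps))L_{k-1}$, and its new load, bounded by $(1+O(\eps))L_k$, plus an $O(\eps)$ error from the layer mismatch. The outer boundary (layer $1/\eps$) agrees exactly with $x^{(k)}$, so active machines beyond it see a consistent residual instance.

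\textbf{(c) Reducing (a) to the first failure.} I would prove, by induction over phases, the invariant that the residual instance given to \distYoung is feasible whenever $L_k\ge\lambda^*$. The idea is to compare against the integral optimum: frozen jobs consume only capacity that they held at feasible levels, and any overload is charged back to the $\eps$-slack from interpolation. If this invariant is hard to maintain, the fallback is to argue directly that any reject implies $\lambda^*>L_k/(1+O(\eps))$, and then to rescale $\eps$.

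The final solution's loads are bounded by $(1+O(\eps))\max(L_{k^*-1},\cdot)\le(1+O(\eps))\lambda^*$. Here the $O(\eps)$ loss accumulates across freezes only additively per machine, because the frozen regions from different phases are disjoint in their jobs.
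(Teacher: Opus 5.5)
Your algorithm is essentially the paper's \texttt{WeakFractLoadBalancer}, and the round count is fine. The load analysis in (b), however, has a genuine gap: nothing in your argument stops the interpolation loss from compounding across phases. The reason you give (that frozen regions from different phases are disjoint in their jobs) does not limit how many freezes touch a single machine. Your convex-combination step turns an old-load bound $(1+c\eps)L_{k-1}$ into roughly $(1+(c+O(1))\eps)L_k$. Over up to $\Theta(\log(ns_{\max}/s_{\min})/\eps)$ phases the constant is therefore unbounded; even an additive $O(\eps)L_k$ per phase sums to $\Theta(L_{k^*})$ over the geometric sequence of levels.

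The missing idea is the dichotomy in \Cref{lem:residue_feasible}. Take an active machine $i$ with some neighbor frozen in phase $k$, and let $d_{\min}$ be the least label among those neighbors. Because $i$ is in the BFS graph, every active neighbor of $i$ has label at most $d_{\min}+1$. So if $d_{\min}<1/\eps$, \emph{all} of $i$'s remaining neighbors freeze now and $i$ never receives load again. This is the only situation in which the interpolation loss is paid, and it is paid once. If instead $d_{\min}=1/\eps$, every frozen neighbor takes exactly $x^{(k)}$. Writing $w_i^-$ for the frozen load before this freeze, this gives $w_i+\sum_{j\in A_J}x^{(k)}_{ij}s_{ij}=w_i^-+\sum_{j}x^{(k)}_{ij}s_{ij}\le(1+\eps)L_k$, with no loss at all. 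Hence every machine that still has an active neighbor satisfies this strong $(1+\eps)L_k$ bound after every phase, and that is what makes your old-load bound hold with a fixed constant.

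This bound permits $w_i$ up to $(1+\eps)^2L_k>L_k$ at the start of phase $k$, so your residual targets $L_k-w_i$ can be negative, which the black box cannot handle. The paper removes such machines (and those with no active neighbor) from $A_M$ before calling the black box. Since then $x^{(k)}_{ij}=0$, such a machine receives only the $(1-\eps d_j)x^{(k-1)}_{ij}$ part in that phase, which the previous phase's strong bound already covers, and nothing afterwards.

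Your (c) is left open, but you need no residual-feasibility invariant or charging to interpolation slack (\Cref{cor:almost_same_stop_point}). Nothing is frozen and all $w_i=0$ until the first reject. So the first reject, at some level $L^*$, comes from the untouched instance on $E_{L^*}$ and certifies $L^*<\lambda^*$. Since $L$ only decreases, every freeze happens at some $L_k\le L^*<\lambda^*$. This gives the support bound $s_{ij}\le(1+\eps)L_k<(1+\eps)\lambda^*$ and turns every per-phase load bound $(1+O(\eps))L_k$ into $(1+O(\eps))\lambda^*$. Machines that stay active to the end are bounded by the last level, which is at most $(1+\eps)s_{\min}\le(1+\eps)\lambda^*$.
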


\paragraph{Algorithm Overview.}

As mentioned before, a central difficulty in the distributed setting is
that we cannot perform a global search on the target load
$\lambda^*$. Instead, the algorithm maintains a common candidate value
$L$, with each node storing and updating its local copy of this
value. Recall from \Cref{sec:overview} that, for any threshold $L$,
$E_L := \{(i,j)\in E : s_{ij}\le L\}$ denotes the set of edges
available at threshold $L$. The search proceeds geometrically: $L$ is initialized at a
trivial upper bound $ns_{\max}$, and in each search iteration, all
nodes synchronously replace $L$ by $L/(1+\eps)$ and delete all
edges with sizes in the range $(L/(1+\eps), L]$, until $L$
reaches $s_{\min}$. During this search, some node may locally report
infeasibility. In this case, the algorithm will fix the $\hat{x}_{ij}$ value of a carefully chosen set of variables ``close'' to the
infeasible node---we say that we ``freeze'' the corresponding edges $(i, j)$, so that these values will no
longer be changed in the future.

In addition to freezing edges, the algorithm will also freeze certain
jobs and machines. The freezing steps are designed to maintain some
invariants:
\begin{enumerate}[nosep]
\item Every frozen job $j$ is already assigned to an extent of
  \[
    \sum_{i \in N(j)} x_{ij} \geq 1.
  \]
\item Every freshly frozen machine $i$ has load $\sum_{j \in N(i)} x_{ij} s_{ij} \leq (1 + O(\eps))\,L$, where $L$ is the
  current step target load. 
\item The remaining unfrozen graph is a valid residual instance for
  target load $L$. More precisely, if $w_i$ is the load placed by the
  frozen edges on machine $i$, then the residual capacity of machine
  $i$ is $L - w_i$. 
  We guarantee that the remaining unfrozen jobs still admit a $(1 +
  \eps)$-feasible fractional assignment with these residual
  capacities; note that this residual assignment is restricted to
  edges $E_L$. Thus each freezing step makes local progress while
  preserving the feasibility needed for the subsequent search.
\end{enumerate}

Without loss of generality, we assume $1/\eps$ is an
integer. If not, we can simply round $\eps$ down to the nearest
unit fraction, which only affects the hidden constants in our
$O(\eps)$ bounds.

\subsubsection{The Algorithm and Description}
\label{sec:weak-algo-desc}

We now present the pseudocode for our algorithm, followed by a description in words (which will make the pseudocode significantly easier to parse). 

\begin{algorithm}
  \caption{\texttt{WeakFractLoadBalancer} (with known size scale)}
  \label{alg:weak-log}
  \KwIn{Each node knows $\eps, s_{\max}, s_{\min}$. Each
    machine node $i$ knows $\forall j \in N(i), s_{ij}$. \\
    $\qquad\qquad$  Each job node $j$ knows $\forall i \in N(j), s_{ij}.$}
  \KwOut{Each machine $i$ outputs values $\hat{x}_{ij}$ for all $j \in N(i)$
  satisfying \Cref{thm:weak-log}.}
  \lForEach(\tcp*[f]{$A_M$ locally represented\ via \Cref{def:local-sets}}){machine $i \in M$}{add $i$ to $A_M$} \label{line:init_A_M} 
  \lForEach(\tcp*[f]{$A_J$ locally represented\ via \Cref{def:local-sets}}){job $j \in J$}{add $j$ to $A_J$} \label{line:init_A_J} 
  \lForEach{machine $i$}{initialize target load $L \gets n\,s_{\max}$, $w_i \gets 0$}
  \lForEach{$i \in M, j \in N(i)$}{$\hat{x}_{ij} \gets 0$, $x^*_{ij}\gets 0$, $x'_{ij}\gets 0$}
  $T \gets \lceil\log_{1+\eps} (n\, s_{\max}/s_{\min})\rceil$
  
  \For{$t \gets 1$ \KwTo $T$}{
    \ForEach{machine $i$}{
    \lIf{$w_i \geq L$ or all jobs $j \in N(i)$ are frozen}{$A_M \gets A_M - \{i\}$\label{line:freeze-machine}}
    }
    \lForEach{machine $i$ and job $j \in N(i)$}{$x'_{ij} \gets x^*_{ij}$}
 $x^* \gets \distYoung(A_M\cup A_J,E_L\cap(A_M\times A_J))$ on 
 following Load Balancing problem\label{line:mpc}:
\begin{align*}
  \textstyle \sum_{j\in A_J:\,(i,j)\in E_L} s_{ij}x_{ij}
  &\le L-w_i
  \quad &&\forall i\in A_M,\\
  \textstyle \sum_{i\in A_M:\,(i,j)\in E_L} x_{ij}
  &\ge 1
  \quad &&\forall j\in A_J,\\
  x_{ij}
  &\ge 0
  \quad &&\forall i\in A_M,  j\in A_J,  (i,j)\in E_L.
\end{align*}
    $J_0 \gets \{\text{ set of job nodes $j$ that return \textsc{reject} } \}$ 
    \tcp*{$J_0$ locally represented\ via \Cref{def:local-sets}}
    $\distBFS(A_M, A_J,  J_0, \eps)$ \tcp*{Every job $j$ gets a label
      $d_j\in \{0, 1, \ldots, 1/\eps\}\cup\{+\infty\}$}
    \ForEach{job $j$}{
      \If{$d_j \neq \infty$}{
        set $\hat{x}_{ij} \gets 
        \eps d_jx^*_{ij} + (1 - \eps\,d_j) x'_{ij}$ and $w_i \gets w_i + \hat{x}_{ij}\cdot
        s_{ij}$ for each $i \in N(j)$ \label{line:edge_setting}\;
        $A_J \gets A_J - \{j\}$ \tcp*{Freeze job $j$ with flow
          $\hat{x}_{\cdot, j}$} \label{line:set_setting}
      }
    }
     \lForEach{node $i$}{update $L \gets L /(1 + \eps)$}
  }
  \lForEach(\tcp*[f]{assign the remaining jobs}){unfrozen job $j \in
    A_J$}{$\hat{x}_{ij} \gets x^*_{ij}$ for each $i \in
    N(j)$} \label{line:unfrozen_setting} 
  \lForEach{machine $i$}{output $\hat{x}_{ij}$ for all $j \in N(i)$}
\end{algorithm}


Before we describe \Cref{alg:weak-log}, here is some notation
used in it:
\begin{enumerate}[nosep,label=(\roman*)]
\item $L$ is the current target load the algorithm is attempting to
  achieve. $\hat{x}$ is the output solution.  $w_i$ is the load
  already given to machine $i$ by previous frozen edges.

\item $x^*$ stores the fractional  solution for the current target load $L$,
  $x'$ stores the fractional solution for the previous target load
  $L(1 + \eps)$. (One exception is when $L = n\,s_{\max}$, in
  which case $x' = 0$.)
\end{enumerate}

We now describe \Cref{alg:weak-log} in words. The sets $A_M$ and $A_J$
are locally represented \emph{active} sets. A machine is said to be active if it is in $A_M$, and frozen precisely when
it is removed from $A_M$; a job is frozen precisely when it is
removed from $A_J$. All jobs and machines are initially active. When any node is frozen (i.e. becomes inactive), it no longer participates in subsequent calls to \distYoung; moreover, when a job $j$ is frozen, its assignment $\hat{x}_{\cdot,j}$ is fixed permanently
\footnote{When a machine $i$ is frozen, its assignment can still change in the subsequent round (Line \ref{line:edge_setting}), but will be permanently fixed thereafter.}.
For
each machine $i$, the value $w_i$ records the load already incurred
due to frozen jobs. Throughout the algorithm we maintain
$w_i = \sum_{j\notin A_J} s_{ij}\hat{x}_{ij}.$ At the beginning of each iteration, a machine $i$ is frozen, i.e.,
removed from $A_M$, if $w_i\ge L$ or all jobs $j\in N(i)$ are frozen.

\paragraph{Inside a Search Iteration:}
At the beginning of an iteration, the algorithm first stores the
previous solution by setting $x' \gets x^*$.  It then runs
\distYoung (Lemma \ref{lem:mixpc-black-box}) on the active instance induced by $A_M\cup
A_J$, restricted to the current edge set $E_L$, with residual capacity
$L-w_i$ on every active machine $i$.  The returned solution is stored
as $x^*$, which is the solution for the current target load $L$.  We
interpret $x^*_{ij}=0$ and $x'_{ij}=0$ for edges that are not present
in the corresponding active instance.

By the guarantee of \distYoung, the returned vector $x^*$ satisfies the
residual packing constraints regardless of the covering-side outputs:
for every active machine $i\in A_M$,
\[
  w_i+\sum_{j\in N(i)\cap A_J} x^*_{ij}s_{ij}
  \le (1+O(\eps))L.
\]
This remains true even if some active job nodes return
\textsc{reject}. If some job nodes return \textsc{reject}, the
algorithm collects them in the set $J_0$ and calls
$\distBFS(A_M,A_J,J_0,\eps)$.  By \Cref{lem:distributed-bfs},
this call runs a bounded BFS from
$J_0$ in the active subgraph $G[A_M\cup A_J]$ and assigns each active
job $j$ a label
\[
  d_j \in \{0,1,\ldots,1/\eps\}\cup\{+\infty\}.
\]
Here $d_j=r$ means that the shortest path from $J_0$ to $j$ has length
$2r$ in the bipartite graph, or equivalently that $j$ is at job-to-job
distance $r$ from $J_0$.

Every job $j$ with finite label is frozen in
this iteration and its final assignment is defined as a convex
combination of the current solution $x^*$ and the previous solution
$x'$:
\[
  \hat{x}_{ij}
  \gets
  \eps d_j x^*_{ij}
  +
  (1-\eps d_j)x'_{ij}
  \qquad \text{for all } i\in N(j).
\]
Intuitively, the jobs closer to the rejecting set use more of the
previous solution $x'$, whereas jobs farther away use more of the
current solution $x^*$, so that the boundary nodes are only using the
current solution and hence interface well with the rest of the graph.
The purpose of this layered convex combination is to smoothly
transition between the solutions $x^*$ and $x'$: the coefficients for
nodes in neighboring BFS layers differ by only $\eps$, which adds only
an $O(\eps)$ fraction additional load on each machine.

After freezing a job $j$, the algorithm adds its contribution
$s_{ij}\hat{x}_{ij}$ to $w_i$ for every neighboring machine $i$, and
then removes $j$ from $A_J$. Note that this update is over all
neighbors $i\in N(j)$, not only over active machines. Thus a machine
that has just been removed from $A_M$ may still receive load in this
freezing step, but only through the previous-solution part $x'$ of the
interpolation. Indeed, if $i\notin A_M$ in the current residual
instance, then $x^*_{ij}=0 \qquad \forall j\in A_J$, and hence any
newly frozen adjacent job contributes
$\hat{x}_{ij} = \eps d_jx^*_{ij}+(1-\eps d_j)x'_{ij} \le x'_{ij}.$
However, this extra load can be charged to the previous iteration's
fractional solution, and hence does not increase the load on the
machine.  After all jobs with finite BFS label have been processed,
the algorithm moves to the next target load $L \gets L/(1 + \eps)$.

\paragraph{After the Search Iterations:} After iterating from
$n s_{\max}$ to $s_{\min}$, the solution $\hat{x}$ contains the
assignments of all frozen jobs. We then use the $x^*$ solution from
the final search iteration to assign the remaining jobs.  In
\Cref{cor:almost_satisfy} we show that after this step, every job receives at least one
unit of mass:
$ \sum_{i\in N(j)} \hat{x}_{ij} \ge 1. $


\subsubsection{Analysis of \Cref{alg:weak-log}}
\label{sec:analysis-weak}

Now we prove the core technical lemma for this section:
\begin{lemma}
    \label{lem:residue_feasible}
    Consider an iteration with target load $L$: after the freezing
    steps in line~\ref{line:edge_setting},
    \begin{enumerate}[nosep,label=(\alph*)]
        \item \label{item:job-cond} For any job $j$ frozen in this round, we have
        $
            \sum_{i \in N(j)}\hat{x}_{ij} \geq 1.
        $
      \item \label{item:machine-cond} For any unfrozen machine
        $i \in A_M$, we have
        $ w_i + \sum_{j \in N(i)\cap A_J}x^*_{ij}s_{ij} \leq
        (1+7\eps)L$.
        Moreover, if not all jobs $j \in N(i)$ are frozen at the end
        of the iteration, then we have:
        $w_i + \sum_{j \in N(i)\cap A_J}x^*_{ij}s_{ij} \leq (1 + \varepsilon)L.$
    \end{enumerate}
\end{lemma}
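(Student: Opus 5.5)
The plan is to verify the two parts separately. For both, the key structural fact is that the BFS labels of the jobs adjacent to a common active machine differ by at most one. For a job $j$ frozen in this iteration, write $\hat{x}_{\cdot j} = \eps d_j x^*_{\cdot j} + (1-\eps d_j)x'_{\cdot j}$ with $d_j\in\{0,\dots,1/\eps\}$. Throughout I use the convention that the black box of \Cref{lem:mixpc-black-box} succeeds on feasible instances, with its $O(\eps)$ normalized to $\eps$.

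\textbf{Part \ref{item:job-cond}.} Since $\hat{x}_{\cdot j}$ is a convex combination, it suffices to show $\sum_i x^*_{ij}\ge 1$ whenever $d_j\ge 1$, and $\sum_i x'_{ij}\ge 1$ whenever $d_j<1/\eps$.
\begin{itemize}[nosep]
\item The first claim holds because $d_j\ge1$ means $j\notin J_0$, so $j$ did not return \textsc{reject}. The covering constraint of \distYoung then holds for $j$.
\item For the second claim, $j$ was active in the previous iteration, and it was not frozen there. Hence its label in that iteration was $+\infty$, so in particular $j\notin J_0$ then. Therefore $x'_{\cdot j}$ covers $j$.
\item The first iteration needs a separate check, since there $x'=0$. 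In that iteration $L=ns_{\max}$ and the full instance is feasible, because assigning all jobs to any of their neighbours gives load at most $ns_{\max}$. Hence $J_0=\emptyset$ and no job is frozen, so the claim is vacuous.
\end{itemize}

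\textbf{Part \ref{item:machine-cond}.} Fix an active machine $i$. Let $w^{\rm old}_i$ be its frozen load at the start of the iteration, and $S=N(i)\cap A_J^{\rm old}$ its active neighbours before freezing. After the update, the quantity to bound is
\[
  Q_i = w^{\rm old}_i+\sum_{j\in S}\bigl(\eps \bar d_j x^*_{ij} + (1-\eps\bar d_j)x'_{ij}\bigr)s_{ij},
\]
where I set $\bar d_j=\min(d_j,1/\eps)$. This single formula covers the unfrozen neighbours (label $\infty$), which contribute exactly $x^*_{ij}s_{ij}$.

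Any two jobs of $S$ are joined through $i$ in the active subgraph, so their labels differ by at most $1$. Let $r=\min_{j\in S}\bar d_j$; then every $\bar d_j\in\{r,r+1\}$. Writing $\eps\bar d_j\le \eps r+\eps$ and $1-\eps\bar d_j\le 1-\eps r$ gives
\[
  Q_i\le (\eps r)\,A + (1-\eps r)\,B + \eps\cdot \textstyle\sum_{j\in S}x^*_{ij}s_{ij},
\]
where $A = w^{\rm old}_i+\sum_{S}x^*_{ij}s_{ij}$ and $B = w^{\rm old}_i + \sum_S x'_{ij}s_{ij}$.
\begin{itemize}[nosep]
\item \distYoung gives $A\le(1+\eps)L$.
\item For $B$, I invoke the lemma inductively for the previous iteration at target $L(1+\eps)$. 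There $i$ was active and not all of its neighbours were frozen, since otherwise $i$ would have been frozen at line~\ref{line:freeze-machine}. The ``moreover'' clause then yields $w^{\rm old}_i+\sum_{j\in A_J^{\rm old}}x'_{ij}s_{ij}\le(1+\eps)^2L$, using that $w$ after the previous freezing step equals $w^{\rm old}_i$.
\end{itemize}
Combining the three terms gives $Q_i\le (1+\eps)^2L+\eps(1+\eps)L\le(1+7\eps)L$ for small $\eps$.

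For the ``moreover'' part, suppose some neighbour of $i$ stays unfrozen, i.e.\ has label $\infty$. Then every other neighbour has label at least $1/\eps$, since labels of neighbours of $i$ differ by at most one and the next finite value below $\infty$ is $1/\eps$. So every $\bar d_j=1/\eps$, every coefficient on $x'$ vanishes, and $Q_i=A\le(1+\eps)L$.

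The main obstacle I expect is making the bound on $B$ rigorous. It requires checking that $x'$ restricted to the still-active jobs is exactly the previous iteration's $x^*$ on jobs that remained active. It also requires that $w^{\rm old}_i$ already accounts for all jobs frozen in the previous step, and that the base case, where $x'=0$, and the base of the induction on iterations are handled correctly.
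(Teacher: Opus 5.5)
Your proof is correct and is essentially the paper's argument. Part (a) follows from the convex combination together with the fact that a job still active now did not reject in the previous call. Part (b) follows by induction on iterations, invoking the strong clause for the previous iteration at target $(1+\eps)L$ (which applies by the machine-freezing rule) together with the fact that the BFS labels of the active neighbours of $i$ differ by at most one.

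The only difference is presentational. Your capped label $\bar d_j=\min(d_j,1/\eps)$ absorbs the unfrozen neighbours into the same formula and collapses the paper's three cases ($R\cap F=\emptyset$, $d_{\min}<1/\eps$, $d_{\min}=1/\eps$) into a single estimate. That estimate even yields the slightly better constant $1+3\eps+2\eps^2$.
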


\begin{proof}
  Let $A_J^-$ and $w_i^-$ be the active job set and the load of machine
  $i$ immediately before line \ref{line:edge_setting}, and $A_J$ and
  $w_i$ be the corresponding values after line
  \ref{line:set_setting}. Let $F = A_J^- \setminus A_J$ be the set of
  jobs frozen in this iteration.
By \Cref{lem:distributed-bfs},  for every job $j\in A_J^-$,
\[
  d_j=r
  \quad\Longleftrightarrow\quad
  \operatorname{dist}_{G[A_M\cup A_J^-]}(J_0,j)=2r
  \text{ and } r\le 1/\eps,
\]
and $d_j=+\infty$ if no such path of length at most $2/\eps$ exists.

  We first prove the job property~\ref{item:job-cond}. For a job $j$,
  we do a case analysis on $d_j$:
  \begin{enumerate}
  \item If $d_j = 0$, then we set $\hat{x}_{\cdot, j} = x'_{\cdot,
    j}$. Since $j$ was not frozen at previous search iteration, it
    returns \textsc{accept} in the previous round's call of
    \distYoung. Property~\ref{item:job-cond} then follows from the job
    side condition of \Cref{thm:dist-young}.
  \item If $d_j \geq 1$, then we have that $\hat{x}_{\cdot, j} \gets
    \eps d_j x^*_{\cdot, j} + (1-\eps d_j)x'_{\cdot, j}.$
    Similarly, since $j$ is not frozen last iteration, we have that
    $\sum_{i \in N(j)} x'_{ij} \geq 1.$ Also,
    $d_j \neq 0$ implies that $j \not \in J_0$, so job node $j$ returns \textsc{accept} in the
    current iteration's call of \distYoung. Therefore, we
    have $\sum_{i \in N(j)} x^*_{ij} \geq 1.$
    Thus,
    \begin{align*} \sum_{i \in N(j)}\hat{x}_{ij} &= \eps
      d_j \sum_{i \in N(j)} x^*_{ij} + (1-\eps d_j)\sum_{i \in
        N(j)} x'_{ij} \\ &\geq \eps d_j + (1-\eps d_j) \quad \geq \quad 1,
    \end{align*}
    which proves property~\ref{item:job-cond} in this case as well.
  \end{enumerate}

  We now turn to the machine property~\ref{item:machine-cond}, which
  we prove by induction over the algorithm execution. For the first
  search iteration with target load $L = n s_{\max}$, we know there
  are no frozen jobs since $\lambda^*\leq ns_{\max}$. Hence $w_i = 0$
  for all machines $i$. By the machine side condition of
  \Cref{thm:dist-young}, for every machine $i$, we have
  $\sum_{j \in N(i)} x^*_{ij}s_{ij} \leq (1 + \varepsilon) L$. Thus,
  property~\ref{item:machine-cond} holds for the first iteration.

Now we prove the inductive step. Consider any iteration after the first one.
Let the target load in the previous search iteration be
$L^+=(1+\eps)L$. Fix any machine $i\in A_M$ in the current
iteration. Since every machine with no active neighboring job is frozen at
the beginning of the iteration, $i$ has at least one active neighboring job
in $A_J^-$. Hence, by the strong part of the inductive hypothesis applied to
the previous iteration,
\[
    w_i^- + \sum_{j \in N(i)\cap A_J^-} x'_{ij}s_{ij}
    \leq (1 + \eps)L^+ \leq (1+3\eps)L .
\]
We have thus proved the following claim, where property \ref{item:x-star} follows from the machine-side condition of \Cref{thm:dist-young}.
  \begin{claim}
    \label{clm:about-x}
    The solutions $x'$ and $x^*$ satisfy for every machine $i \in A_M$:
    \begin{enumerate}[nosep,label=(\roman*)]
    \item \label{item:x-slash}
      $w^-_i + \sum_{j \in N(i) \cap A^-_J}x'_{ij}s_{ij} \leq (1 + 3\eps)L.$
    \item 
      \label{item:x-star}
      $w^-_i + \sum_{j \in N(i)}x^*_{ij}s_{ij} \leq (1 + \eps)L.$    
    \end{enumerate}
  \end{claim}

  Fix a machine $i$, and let $R \coloneqq N(i)\cap A_J^-$. We first
  consider the case when $R\cap F=\emptyset$. In this case, since no jobs from $R$ frozen in this iteration, we have $w_i = w^-_i$. Property~\ref{item:machine-cond} then directly
  follows from \Cref{clm:about-x}\ref{item:x-star} : \[ w_i + \sum_{j
    \in N(i)\cap A_J}x^*_{ij}s_{ij} = w^-_i + \sum_{j \in N(i)\cap
    A_J}x^*_{ij}s_{ij} \leq (1  +\eps) L.
  \]
  Therefore it suffices to consider the case when $R\cap F \neq
  \emptyset.$ Let $d_{\min}:=\min_{j\in R\cap F} d_j$ be the smallest
  label among all jobs in $R\cap F$ and let $j_{\min}\in R\cap F$ be a
  job with label $d_{\min}$.

  If $d_{\min}<1/\eps$, then every job in $R$ must be frozen in
  the current iteration. Indeed, for any job $j\in R$, since both
  $j_{\min}$ and $j$ are adjacent to the same machine $i$, a shortest
  path from $J_0$ to $j_{\min}$ followed by the two edges $(j_{\min},i)$
  and $(i,j)$ gives a path from $J_0$ to $j$ of length at most
  \[
  2d_{\min}+2 = 2(d_{\min}+1) \leq 2/\eps.
  \]
  Thus $j$ receives a finite label at most $d_{\min}+1$, and therefore
  $j\in F$. Moreover, every job $j\in R$ has label in
  $\{d_{\min},d_{\min}+1\}$. On the other hand, if
  $d_{\min}=1/\eps$, then every frozen job in $R\cap F$ has label
  exactly $1/\eps$.  Every unfrozen job in $R\cap A_J$ has
  distance exactly $2/\eps + 2$ from a job in $J_0.$
  We now do a case analysis on $d_{\min}$.
  \begin{enumerate}
  \item Case I: $d_{\min} < 1/\eps$. As shown above, this implies that all jobs $j \in N(i)\cap
    A_J^-$ belong to $F$ and all these jobs have labels in
    $\{d_{\min}, d_{\min} + 1\}$. Take an edge $(i, j)$:
    \begin{itemize}
    \item If $d_j = d_{\min}$, then $\hat{x}_{ij} = (\eps\,
      d_{\min} ) x^*_{ij} + (1 - \eps\, d_{\min}) x'_{ij}$.
    \item If $d_j = d_{\min} + 1$, then $\hat{x}_{ij} = \eps\,(d_{\min} + 1) x^*_{ij} + (1 - \eps\,(d_{\min} + 1)) x'_{ij}$.
    \end{itemize}
    In both cases, $\hat{x}_{ij} \leq \eps\,(d_{\min} + 1) \, x^*_{ij}
    + (1 - \eps\, d_{\min}) x'_{ij}.$
    As a result, 
    \begin{align*}
w_i
=
w_i^-+\sum_{j\in R}\hat{x}_{ij}s_{ij}
\le
w_i^-+\sum_{j\in R}\eps(d_{\min}+1)x^*_{ij}s_{ij}
+\sum_{j\in R}(1-\eps d_{\min})x'_{ij}s_{ij}.
    \end{align*}
    Using both parts of \Cref{clm:about-x} to simplify the RHS above,
    we get:
    \[
    \begin{aligned}
      w_i - w_i^-\quad 
      &\leq \quad
      \eps(d_{\min}+1)((1 + \eps)L-w_i^-)
      +
      (1-\eps d_{\min})
      ((1+3\eps)L-w_i^-) \\
      &\leq \quad (1 + \eps)((1+3\eps)L-w_i^-)
      \leq \quad (1 + 7\eps)L - w^-_i.
    \end{aligned}
    \]
    Therefore, 
    $ 
    w_i \leq (1 + 7\eps)L.$ 
    To prove property~\ref{item:machine-cond}, observe that since $R \cap A_J = \emptyset$, 
    \begin{align*}
      w_i + \sum_{j \in R\cap A_J}x^*_{ij}s_{ij} = w_i \leq (1 + 7\eps)L.
    \end{align*}
  \item Case II: $d_{\min} = 1/\eps.$ In this case, as shown above, \emph{every} $j \in F \cap R$ has $d_j = d_{\min} = 1/\eps$, so by Line \ref{line:edge_setting} we have  $\hat{x}_{ij} = x^*_{ij}$ for all $j \in F\cap R.$
    Hence,
    \begin{align*}
      w_i + \sum_{j \in N(i)\cap A_J}x^*_{ij}s_{ij} &= w^-_i + \sum_{j \in N(i)\cap F}x^*_{ij}s_{ij}+\sum_{j \in N(i)\cap A_J}x^*_{ij}s_{ij} \\
      &= w^-_i + \sum_{j \in N(i)} x^*_{ij}s_{ij} 
      \quad \leq \quad (1 + \varepsilon)L,
    \end{align*}
    where the last inequality uses
    \Cref{clm:about-x}\ref{item:x-star}. This proves
    property~\ref{item:machine-cond} for this case.
  \end{enumerate}
  Having proved both properties~\ref{item:job-cond}
  and~\ref{item:machine-cond}, we have \Cref{lem:residue_feasible}. 
\end{proof}

\begin{corollary}
  \label{cor:almost_satisfy}
  For any job $j$, we have $\sum_{i \in N(j)}\hat{x}_{ij} \geq
  1$ at the end of the algorithm.
\end{corollary}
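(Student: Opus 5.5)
Every job $j$ is frozen at most once: either during some search iteration (line~\ref{line:set_setting}), or never, in which case it is still in $A_J$ when the loop ends. I will handle these two cases separately.

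\emph{Jobs frozen during the loop.} Suppose $j$ is frozen in the iteration with target load $L$. Its column $\hat{x}_{\cdot,j}$ is set at line~\ref{line:edge_setting} and never changed afterwards. Line~\ref{line:unfrozen_setting} only touches jobs still in $A_J$, and later iterations only write $\hat{x}_{ij}$ for jobs with finite label, which must be active. So \Cref{lem:residue_feasible}\ref{item:job-cond}, applied to that iteration, already gives $\sum_{i\in N(j)}\hat{x}_{ij}\ge 1$ at termination.

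\emph{Jobs never frozen.} Here $\hat{x}_{ij}=x^*_{ij}$, where $x^*$ is the output of the final call to \distYoung. I claim $j$ returned \textsc{accept} in that call. If it had returned \textsc{reject}, then $j\in J_0$ in the last iteration, so $d_j=0\neq\infty$, and $j$ would have been frozen in that same iteration, which is a contradiction. Hence by the job-side guarantee of \Cref{lem:mixpc-black-box}, $\sum_{i\in N(j)} x^*_{ij}\ge 1$.

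There is one subtlety: that guarantee is phrased in terms of the active instance, that is, sums over $i\in A_M$ with $(i,j)\in E_L$. Since we interpret $x^*_{ij}=0$ outside that instance and all values are nonnegative, the sum over all of $N(j)$ is at least as large, and the bound still holds.

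The only delicate point is this bookkeeping: $\hat{x}_{\cdot,j}$ is written exactly once per job, and ``unfrozen at the end'' forces acceptance in the last call. This follows directly from the algorithm's control flow, so there is no real obstacle.
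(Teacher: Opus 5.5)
Your proof is correct and matches the paper's argument: jobs frozen during the search are covered by \Cref{lem:residue_feasible}\ref{item:job-cond}, and unfrozen jobs receive $x^*$ from the final \distYoung call, whose job-side guarantee gives mass at least $1$. You also spell out two steps the paper leaves implicit: frozen columns of $\hat{x}$ are never overwritten, and a job still active at the end must have returned \textsc{accept} in the last call, since otherwise $d_j=0$ would have frozen it.
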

\begin{proof}
  By \Cref{lem:residue_feasible}, any job $j$ frozen in the search
  iteration has $\sum_{i \in N(j)} \hat{x}_{ij} \geq 1$.  If a job $j$ is
  not frozen in the search iteration, then we set $\hat{x}_{\cdot, j}
  \gets x^*_{\cdot, j}.$ The statement then follows from the job side
  condition of \Cref{thm:dist-young}.
\end{proof}

\begin{corollary}
  \label{cor:almost_same_stop_point}
  Let $L^*$ be the largest value of $L$ for which some job node returns
  \textsc{reject} during the call to \distYoung in our algorithm. Then, $L^* < \lambda^*$, where $\lambda^*$ is the maximum load of the optimal integral solution. 
\end{corollary}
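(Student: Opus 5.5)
The idea is to show that rejections can only begin once $L$ has dropped below $\lambda^*$. Since $L$ only decreases over the course of \Cref{alg:weak-log}, the largest value $L^*$ at which some job returns \textsc{reject} belongs to the \emph{first} iteration in which $J_0\neq\emptyset$. So it suffices to understand the state of the algorithm at the start of that iteration. I would then argue by contradiction: assume $L^*\ge\lambda^*$, and show that the instance passed to \distYoung in that iteration is feasible. By \Cref{lem:mixpc-black-box}, a job returns \textsc{reject} only if the instance is infeasible, so this contradicts the definition of $L^*$.

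\textbf{Step 1: the state before the first rejecting iteration.} Jobs are frozen only in line~\ref{line:set_setting}, and only when $d_j\neq\infty$. By \Cref{lem:distributed-bfs}, this requires $J_0\neq\emptyset$. Hence in every iteration before the one with target $L^*$, no job is frozen and no $w_i$ is increased. So at the start of the iteration with target $L^*$ we have $A_J=J$ and $w_i=0$ for all $i$.

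Machines are also essentially unaffected. The test $w_i\ge L$ in line~\ref{line:freeze-machine} fails because $w_i=0<L$. The test ``all neighbours frozen'' can only hold for machines with no incident jobs, and these are irrelevant to feasibility. Therefore the call in line~\ref{line:mpc} is on the whole instance restricted to $E_{L^*}$, with capacity $L^*$ on every machine.

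\textbf{Step 2: feasibility when $L^*\ge\lambda^*$.} Take an optimal integral assignment $\varphi$ with maximum load $\lambda^*$. Every edge $(\varphi(j),j)$ it uses satisfies $s_{\varphi(j)j}\le\lambda^*\le L^*$, because a single job's size is at most the load of its machine. So all of these edges lie in $E_{L^*}$.

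Now set $x_{ij}=1$ if $\varphi(j)=i$ and $x_{ij}=0$ otherwise. This vector covers every job, uses only edges of $E_{L^*}$, and gives every machine load at most $\lambda^*\le L^*=L^*-w_i$. So the residual instance is feasible.

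\textbf{Step 3: conclude.} \Cref{lem:mixpc-black-box} says that if the instance is feasible then we are in the first case of its dichotomy, so no job returns \textsc{reject}. This contradicts the choice of $L^*$, and hence $L^*<\lambda^*$.

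The only delicate point is reading \Cref{lem:mixpc-black-box} as ``\textsc{reject} is returned only when the instance is infeasible'' (equivalently, on feasible instances every job accepts). The rest is bookkeeping. Step 1, checking that no freezing happens before the first rejection, is where I would be most careful: I need to verify both the machine-freezing rule and the fact that $d_j=+\infty$ for all $j$ whenever $J_0=\emptyset$.
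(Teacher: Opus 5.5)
Your proposal is correct and follows the paper's argument. Before the first rejection nothing is frozen and every $w_i=0$, so the call at $L^*$ is the plain feasibility LP on $E_{L^*}$, and by \Cref{lem:mixpc-black-box} a rejection certifies that this LP is infeasible, which is impossible when $L^*\ge\lambda^*$. You are slightly more careful than the paper in two places: it asserts $A_M=M$ without addressing machines with no incident jobs, and it leaves implicit that the optimal integral assignment uses only edges of $E_{L^*}$.
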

\begin{proof}
Note that our algorithm only freezes nodes due to some job returning \textsc{reject}. Also, $w_i$ can only become non-zero due to some job becoming frozen. So by definition of $L^*$, when the algorithm begins the execution of iteration $L^*$, no node has been frozen in previous iterations, and we have: $A_M = M$, $A_J = J$, and $w_i = 0$ for all machines $i$. Thus, the MixPC LP in Line \ref{line:mpc} is precisely the feasibility LP for load balancing with maximum load $L^*$. By \Cref{lem:mixpc-black-box}, the fact that some job node returns \textsc{reject} implies that this LP is infeasible, so $L^* < \lambda^*$.
\end{proof}

\begin{corollary}
  \label{cor:low_load}
  For each machine $i$ that is removed from $A_M$ at
  Line~\ref{line:freeze-machine}, let $L_i$ be the target load $L$ in
  that iteration. Then, after the search iterations, the load $w_i$ of
  machine $i$ due to frozen jobs satisfies
  \[
    w_i \leq (1+O(\eps))L_i \leq (1+O(\eps))L^*.
  \]
\end{corollary}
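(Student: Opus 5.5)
Fix a machine $i$ removed from $A_M$ at Line~\ref{line:freeze-machine} in the iteration with target load $L_i$. I would track $w_i$ over two phases: the iteration $t_i$ in which $i$ is removed, and all later iterations. The first inequality $w_i \le (1+O(\eps))L_i$ comes from two facts. At the start of iteration $t_i$, $w_i$ is bounded by \Cref{lem:residue_feasible}\ref{item:machine-cond} applied to the previous iteration, when $i$ was still active. After iteration $t_i$, $w_i$ never changes again. The second inequality $L_i \le (1+O(\eps))L^*$ uses the fact that machines are only removed once some freezing has occurred.

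\emph{Step 1 (load at removal).} Let $L^+ = (1+\eps)L_i$ be the previous target. In iteration $t_i-1$ machine $i$ was in $A_M$, so \Cref{lem:residue_feasible}\ref{item:machine-cond} gives
\[
w_i^{(t_i-1)} + \sum_{j\in N(i)\cap A_J} x^*_{ij}s_{ij} \le (1+7\eps)L^+ .
\]
Here $w_i^{(t_i-1)}$ and $A_J$ are the values after freezing in iteration $t_i-1$, and $x^*$ is that iteration's solution, which becomes $x'$ in iteration $t_i$. (If $t_i$ is the first iteration, then $w_i=0$ and no job adjacent to $i$ can be frozen later, so the claim is trivial.)

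\emph{Step 2 (iteration $t_i$).} In iteration $t_i$ machine $i\notin A_M$, so $x^*_{ij}=0$ for all $j$. Any adjacent job $j$ frozen in this iteration therefore gets $\hat x_{ij} = (1-\eps d_j)x'_{ij} \le x'_{ij}$. Summing over these jobs and using Step 1, we get $w_i \le w_i^{(t_i-1)} + \sum_{j\in N(i)\cap A_J^-} x'_{ij}s_{ij} \le (1+7\eps)(1+\eps)L_i = (1+O(\eps))L_i$.

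\emph{Step 3 (later iterations).} After iteration $t_i$, $i$ stays outside $A_M$, so $x^*_{ij}=0$ in all later calls. In iteration $t_i+1$ we have $x'_{ij} = x^*_{ij}$ from iteration $t_i$, which is $0$. So every later freezing adds $\hat x_{ij}=0$, and $w_i$ is unchanged. I also have to handle the final step, Line~\ref{line:unfrozen_setting}: remaining unfrozen jobs get $x^*_{ij}=0$ on edges to $i$, so no load is added there either. The one subtlety is the interpretation that $x^*$ and $x'$ vanish on edges outside the active instance, which the algorithm description stipulates. I expect Steps 2 and 3 to be the main point needing care.

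\emph{Step 4 ($L_i$ versus $L^*$).} A machine can be removed only if $w_i\ge L>0$ or all its jobs are frozen. Either condition requires some job to have been frozen before iteration $t_i$. Freezing happens only in an iteration where some job rejects, or in which $J_0$ is nonempty; otherwise every $d_j=\infty$. So some rejection occurred at a target $\ge L_i(1+\eps)$. By the definition of $L^*$ as the largest rejecting value, $L^* \ge (1+\eps)L_i \ge L_i$. This gives $w_i \le (1+O(\eps))L_i \le (1+O(\eps))L^*$, in fact with room to spare.
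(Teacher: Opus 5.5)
Your proposal is correct and follows the paper's proof essentially step for step. You bound $w_i^-+\sum_{j\in N(i)\cap A_J^-}x'_{ij}s_{ij}$ by the invariant of iteration $t_i-1$, use $x^*_{ij}=0$ to get $\hat x_{ij}\le x'_{ij}$ in the removal iteration and $\hat x_{ij}=0$ afterwards, and obtain $L_i<L^*$ because machines can only be frozen after the first rejection. Your use of \Cref{lem:residue_feasible}\ref{item:machine-cond} on the post-freezing load of iteration $t_i-1$ is, if anything, a more accurate citation than the paper's appeal to \Cref{clm:about-x}\ref{item:x-star}.

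The only loose end, which the paper shares, is a machine with $N(i)=\emptyset$. Such a machine is removed in the first iteration, at $L=n\,s_{\max}$, with no rejection having occurred. So your Step~4 claim that removal requires an earlier frozen job fails there. For that machine one only gets $w_i=0\le(1+O(\eps))L^*$, not $L_i\le(1+O(\eps))L^*$.
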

\begin{proof}
  Fix a machine $i$ that is removed from $A_M$ at
  Line~\ref{line:freeze-machine}, and let $L_i$ be the target load in
  that iteration. If this is the first search iteration, then $w_i=0$
  and the claim is trivial. Otherwise, let
  $L_i^+=(1+\eps)L_i$ be the target load in the immediately previous
  iteration.

  Let $A_J^-$ and $w_i^-$ denote the active job set and the load of $i$
  just before the freezing step of the current iteration. Since $x'$ is
  the solution from the previous iteration, the previous iteration's
  invariant \Cref{clm:about-x}\ref{item:x-star} gives
  \[
    w_i^-+\sum_{j\in N(i)\cap A_J^-}s_{ij}x'_{ij}
    \le (1+O(\eps))L_i^+.
  \]
  Although $i$ may receive load in the current freezing step, this
  load can only come from the $x'$-part of the interpolation: since
  $i\notin A_M$, we have $x^*_{ij}=0$ for all $j\in A_J^-$, and hence
  \(
    \hat{x}_{ij}
    =
    \eps d_jx^*_{ij}+(1-\eps d_j)x'_{ij}
    \le x'_{ij}.
  \)
  Therefore, if $F\subseteq A_J^-$ is the set of jobs frozen in this
  iteration, then
  \[
    w_i
    \le
    w_i^-+\sum_{j\in N(i)\cap F}s_{ij}\hat{x}_{ij}
    \le
    w_i^-+\sum_{j\in N(i)\cap A_J^-}s_{ij}x'_{ij}
    \le
    (1+O(\eps))L_i^+
    =
    (1+O(\eps))L_i.
  \]
  After this iteration, all future $x^*$ and $x'$ values on edges
  incident to $i$ are zero (because $i$ has been removed from $A_M$), so $i$ receives no further load.
  Finally, by the definition of $L^*$, all freezing steps occur at target
  loads at most $L^*$. Hence,
  $
  w_i
  \leq
  (1+O(\eps))L_i
  \leq
  (1+O(\eps))L^*.
  $
\end{proof}

Now we are ready to establish \Cref{thm:weak-log}.
\begin{proof}[Proof of~\Cref{thm:weak-log}]
After Line~\ref{line:unfrozen_setting}, fix any machine $i$. If $i$ has
been removed from $A_M$, then
\Cref{cor:almost_same_stop_point,cor:low_load} bounds its load by
$(1+O(\eps))\lambda^*$, and the final assignment of the remaining active
jobs adds no load to $i$. Otherwise, $i$ remains in $A_M$, and applying
\Cref{lem:residue_feasible} to the final search iteration gives
\[
  w_i+\sum_{j\in N(i)\cap A_J}x^*_{ij}s_{ij}
  \le (1+O(\eps))L_{\mathrm{last}}
  \le (1+O(\eps))\lambda^*,
\]
where \(L_{\mathrm{last}}\le (1+\eps)s_{\min}\le (1+\eps)\lambda^*\).
Thus every machine has load at most $(1+O(\eps))\lambda^*$. By
\Cref{cor:almost_satisfy}, every job receives at least one unit of mass,
so the output is feasible.

  We now show that $s_{ij} \leq (1 + \eps)\lambda^*$ if the edge
  $(i,j)$ is in the support of the solution $\hat{x}$ (as required by Definition \ref{def:frac-LB}). Fix any such
  edge $(i, j)$ with $\hat{x}_{ij} > 0$. If $j$ is frozen during the search iterations at target load $L$, then
any support edge coming from the $x^*$ part lies in $E_L$, while any
support edge coming from the $x'$ part lies in $E_{(1+\eps)L}$. Hence
every support edge used by $j$ has size at most $(1+\eps)L$. Since
freezing occurs only at target loads $L\le L^*$, by
\Cref{cor:almost_same_stop_point},
\[
  s_{ij}\le (1+\eps)L\le (1+\eps)L^*\le (1+\eps)\lambda^*.
\] If $j$ is not frozen, then it is assigned using the final call, whose
threshold is at most $(1+\eps)s_{\min}$. Since
$s_{\min}\le \lambda^*$, this also gives
$s_{ij}\le (1+\eps)\lambda^*$.

At last, we prove that \Cref{alg:weak-log} works in
$\poly(\eps^{-1}\,\log(mn \cdot
\frac{s_{\max}}{s_{\min}}))$ rounds. All initialization steps  take a constant number of rounds. At each search
iteration, \distYoung takes $\poly(\log(mn)/\eps)$ rounds,
the call to \distBFS takes $O(1/\eps)$ rounds by \Cref{lem:distributed-bfs} $O(1/\eps)$ rounds, and all other
steps take $O(1)$ rounds. Multiplying by $T =
\lceil\log_{1+\eps} (n \cdot s_{\max}/s_{\min})\rceil =
\log(n \cdot s_{\max}/s_{\min})/\eps$, the number of search
iterations, we get the claimed bound of $\poly(\eps^{-1} \, \log(mn\cdot s_{\max}/s_{\min}))$ rounds.
\end{proof}


\section{A Strongly Logarithmic Round Complexity}
\label{sec:strongly-log}

We now introduce our procedure using a remove the dependence on $\log (s_{\max}/s_{\min})$
in \Cref{thm:weak-log}, and hence complete our proof of~\Cref{thm:intro-lb}.



\subsection{Algorithm Overview}

The main difficulty in obtaining a strongly polynomial algorithm is
that \Cref{thm:weak-log} assumes a bounded and known range of edge
sizes, whereas in the original instance the ratio between the largest
and smallest edge size may be arbitrarily large and is not given
globally.  The algorithm \texttt{StrongLoadBalancer} removes this
assumption by applying a shifting scheme to logarithmic size levels.
It constructs several bounded-range subinstances, solves each of them
using \Cref{thm:weak-log}, and then uses \Cref{thm:dist-young} once
more to merge the candidate supports into the final fractional
assignment.

We first classify every edge $(i, j) $ as having level
$\ell_{ij}=\left\lceil \log_{1+\eps} s_{ij}\right\rceil.$ For each
job $j$, let $L_{\min}(j)=\min_{i\in N(j)} \ell_{ij}$ be the level of
the smallest size it can take, and define its interval
$I_j=[L_{\min}(j),L_{\min}(j)+\beta-1],$ where
$\beta = \lceil \log_{1+\eps}(mn/\eps)\rceil$.  This interval $I_j$
contains all edge levels that are ``relevant'' for job $j$: edges
whose level is larger than the upper endpoint of $I_j$ are locally
deleted.  Intuitively, such an edge is dominated by a much lighter
edge incident to the same job, so any solution using this heavy edge
can be modified to use the lighter edge while increasing the maximum
load only by $(1 + \varepsilon)$.

Next, the algorithm applies a \emph{shifted bucket decomposition} to
the logarithmic size levels.  Define the bucket width to be
\( B := \nicefrac{2\beta}{\eps}.  \) For every offset
$a\in\{0,\ldots,B-1\}$, we consider buckets with shifted boundaries.
The $b^{th}$ \emph{full bucket} corresponding to offset $a$ would be
\(
  [\,bB+a,\,(b+1)B+a\,),
\)
but the algorithm only uses its \emph{active} part
\[
  \BB_{b,a}
  :=
  [\,bB+a,\,(b+1-\eps)B+a) = [\,bB+a, \,(b+1)B+a - C\,),
\]
where we shrink it from the right end by $C :=\eps B=2\beta$; this
ensures that consecutive active buckets are separated by a ``guard
band'' of length $C$.

A job $j$ is called \emph{active} in the instance for offset $a$ if
its interval $I_j$ is contained in some active bucket $\BB_{b,a}$ for
some $b \in \ZZ$; else, $j$ is called inactive for this offset. As the
offset $a$ ranges over $[0, B -1]$, the job $j$ fails to be active
only when $I_j$ intersects one of the guard bands, which happens for
only an $O(\beta/B) = O(\eps)$ fraction of the offsets. We use this
fact in proving that the final mixed packing-covering instance is
feasible.

Fix one offset $a$; for each active job $j$, let $b_j$ be the index
such that $I_j \sse \BB_{b_j, a}$. Each machine $i$ now considers the
active jobs adjacent to it, and joins the largest bucket it sees:
\[
  b_{\max}(i) := \max\{b_j \mid j\in N(i),\ j\text{ active}\}.
\]
Finally, if machine $i$ is adjacent to an active job $j$ with
$b_j<b_{\max}(i)$, then the algorithm fully assigns job $j$ to one 
such machine $i$ in the solution for this offset, and removes $j$ from
further consideration. We show that the gap of length $C$ between two
active buckets means that this assignment increases the maximum load
by at most a $(1 + \varepsilon)$ factor.

After this cleanup, all remaining edges $(i,j)$ satisfy
$b_j = b_{\max}(i)$.  Hence, for this fixed offset $a$, the remaining
graph decomposes into subinstances that are both vertex- and
edge-disjoint. Moreover, for every bucket $b$, the lower bound and the
upper bound of the edge sizes are known to every node participating in
it. Thus, the algorithm can call \textsc{WeakFractLoadBalancer} on
this bucket with
\[
  s_{\min}(b,a)=(1+\eps)^{bB+a-1},
  \qquad
  s_{\max}(b,a)=(1+\eps)^{(b+1-\eps)B+a-1}.
\]
The bucket subinstances are disjoint, so all those calls for the same
offset can be executed in parallel.  Repeating this for all offsets
produces a collection of offset solutions $x^a$.

Finally, the algorithm merges all offset solutions through one mixed
packing-covering instance.  Let $E'$ be the union of all edges that
receive positive value in at least one offset solution.  For each
machine, we choose a local capacity from the maximum load it receives
in any offset solution and from the largest edge of $E'$ incident to
it.  The analysis shows that, because each job is active in almost all
offsets, these capacities make the mixed packing-covering instance
feasible.  The algorithm then runs \distYoung on this final instance
and outputs the resulting fractional assignment.

\subsection{Algorithm Description}
\label{sec:strong-algo-desc}

The main algorithm (given in \Cref{alg:strong-log}) follows the
outline above: it calls \CleanEdges{} to compute edge levels,
construct the relevant interval $I_j$ for every job $j$, and
delete incident edges whose levels do not lie in $I_j$. Then, for
each offset $a \in [B]$ (synchronously)
it calls \BuildGraph{} which constructs the disjoint subinstances for
each bucket $\BB_{b,a}$, while pre-assigning some of the jobs.
These instances are solved independently and in
parallel. Finally, it calls \SetCapacitiesAndSolve{}, which sets the
machine capacities, constructs the corresponding \MixPC instance, and
solves it using \distYoung to obtain the solution $\hat{x}$ described
in the overview. Finally, the algorithm does some clean-up steps---it
normalizes the job assignments to ensure that each job is fully
assigned, potentially increasing machine loads by at most a factor
of $1+O(\varepsilon)$.
  
\begin{algorithm}[H]
  \caption{\texttt{StrongLoadBalancer}}
  \label{alg:strong-log}
  \KwIn{Each node knows $\varepsilon$. Each machine node $i$ knows
    $\forall j \in N(i)s_{ij}$. \\ $\qquad\qquad$ Each job node $j$
    knows $\forall i \in N(j), s_{ij}.$} \KwOut{Each machine $i$
    outputs values $\hat{x}_{ij}$ for all $j \in N(i)$ satisfying
    \Cref{thm:strong-log}.}

  $\beta \gets \lceil \log_{1+\eps}(nm/\eps)\rceil$.

  \CleanEdges{} \tcp*{Computes interval $I_j$ for each job $j$}

  $B \gets 2\beta/\eps$\;

  \For{$a \gets 0$ \KwTo $B-1$}{
    \lForEach(\tcp*[f]{$x^a$ is the solution for the instance piece $a$}){$i \in M, j \in N(i)$}{$x^a_{ij}\gets 0$}
    \BuildGraph{$a$} 
    \tcp*{Formulate the graph for offset $a$; disjoint graphs for different buckets.}
    
    \ForEach(\tcp*[f]{ solve disjoint sub-instances in parallel}){bucket $b \neq -\infty$ \textnormal{\emph{in parallel}}}{
      $V_b \gets \{\, j : b_j = b \,\}\ \cup\ \{\, i : b_{\max}(i) = b \,\}$ \tcp*{$V_b$ locally represented\ via \Cref{def:local-sets}}
      $E_b \gets \{\, (i,j) \in E : i \in V_b,\ j \in V_b \,\}$\tcp*{$E_b$ locally represented\ via \Cref{def:local-sets}}
      $s_{\min}(b,a) \gets (1+\eps)^{\,bB+a-1}$;\quad
      $s_{\max}(b,a) \gets (1+\eps)^{\,(b+1-\eps)B+a-1}$
      
      $x^a\!\restriction_{E_b} \gets$
      \texttt{WeakFractLoadBalancer}$((V_b, E_b),\, s_{\min}(b,a),\,
      s_{\max}(b,a), \varepsilon)$ 
    }
  }
  
  \SetCapacitiesAndSolve{}
  
  \lForEach{job $j$} { $v_j\gets \sum_{i\in N(j)}\hat{x}_{ij}$}
  \lForEach(\tcp*[f]{Normalize job $j$'s assignment to total mass $1$}){machine $i \in
    N(j)$}{$\hat{x}_{ij}\gets \hat{x}_{ij} /
    v_j.$ \label{line:round-up-job} }
  \lForEach{machine $i$}{output
  $\hat{x}_{ij}$ for all $j \in N(i)$}
\end{algorithm}

\medskip\textbf{Procedure \CleanEdges{}:} it computes the edge levels
$\ell_{ij}$, the intervals $I_j$ for each job, and removes some edges;
\Cref{lem:strong-cleaning} shows that this edge-deletion increases the
optimal load by at most
$(1+O(\eps))$. 

\begin{procedure}[H]
  \caption{CleanHeavyEdges()}
  \label{proc:clean-edges}
  \ForEach{job $j$}{
    \lForEach{$i \in N(j)$}{
      $\ell_{ij} \gets \lceil \log_{1+\eps} s_{ij}
      \rceil$ }
    $L_{\min}(j) \gets \min_{i \in N(j)} \ell_{ij}$ and
    $L_{\max}(j) \gets L_{\min}(j) + \beta - 1$ \; 
    $I_j \gets [\,L_{\min}(j),\, L_{\max}(j)\,]$\;
    \ForEach{$i \in N(j)$}{
      \lIf{$\ell_{ij} > L_{\max}(j)$}{$E \gets E \setminus \{(i,j)\}$
        \tcp*[f]{$i$ and $j$ locally mark this edge as
          unusable} \label{line:delete_edge}} 
    }
  }
\end{procedure}

\medskip\textbf{Procedure \BuildGraph{}}: For the offset $a$, each job
node $j$ checks if $I_j \sse \BB_{b_j, a}$ for some $b_j \in \ZZ$; if
not, the job is inactive for this iteration (indicated by
$b_j = -\infty$). Each machine $i$ then chooses $b_{\max}(i)$, the
maximum bucket label among all adjacent active jobs.  Finally, the procedure
pre-assigns some jobs: if the set
\( C_j := \{\, i \mid j \in N(i),\ b_j < b_{\max}(i) \,\} \) is
non-empty, the algorithm assigns $j$ to an some machine $i_j \in C_j$.

\begin{procedure}[H]
  \caption{BuildGraph($a$)}
  \label{proc:build-graph}
  \ForEach{job $j$}{
    \leIf{$I_j \subseteq \BB_{b,a}$ \textnormal{for some} $b$}
    {$b_j \gets b$}
    {$b_j \gets -\infty$}
  }
  \ForEach{machine $i$}{
    $b_{\max}(i) \gets \max_{j \in N(i)} b_j$
  }
  \ForEach{job $j$}{
    $C_j \gets \{\, i : j \in N(i),\ b_j < b_{\max}(i),\ b_j \neq -\infty \,\}$\;
    \lIf{$C_j \neq \emptyset$}{
      choose arbitrary machine $i_j \in C_j$, and set 
      $x^a_{i_j,j} \gets 1$, $b_j \gets -\infty$ \label{line:assign_jobs}
    }
  }
\end{procedure}

\medskip\textbf{Procedure \SetCapacitiesAndSolve{}:} It constructs the union $E'$ of the supports for the solutions $x^a$ for various offsets $a$. Each machine node $i$ computes 
the largest edge size $W_i$ incident to it in $E'$, and the
maximum load $\Lambda_i$ placed on machine $i$ by any of these solutions. It sets each machine's target load
$L'_i \gets (1 + O(\varepsilon))\,\max(W_i, \Lambda_i)$, and runs
\distYoung on this instance with accuracy parameter $\eps$ to get the
final solution $\hat{x}.$ 

\begin{procedure}[H]
\caption{SetCapacitiesAndSolve()}
\label{proc:set-capacity-and-solve}
$E' \gets \{(i,j)\in E : x^a_{ij}>0 \text{ for some } a\}$\;
        \ForEach{machine $i$}{
        $W_i \gets \max\{s_{ij} : (i,j)\in E'\}$
            \tcp*{heaviest edge ever used on $i$: the max edge weight}     
        $\Lambda_i \gets \max_{0 \le a \le B-1}\ \sum_{j \in N(i)} s_{ij}\, x^a_{ij}$
            \tcp*{heaviest per-offset load on $i$}
        $L'_i \gets (1+O(\eps))\,\max\big(\Lambda_i,\ W_i\big)$
            \tcp*{packing capacity}
    }
    Let $\mathcal{M}$ be the following \MixPC instance on $G' = (V, E')$:
    \begin{align*}
      \textstyle \sum_{i \in N(j)} \hat{x}_{ij} &\ge 1 & \qquad \qquad &\forall   j \\
      \textstyle \sum_{j \in N(i)} s_{ij}\,\hat{x}_{ij} &\le L'_i&&\forall i
    \end{align*}
    
    $\hat{x} \gets \distYoung(V, E', \eps)$
\end{procedure}

\subsection{The Analysis}


\begin{lemma}[Cleaning Loss]
  \label{lem:strong-cleaning}
  Given an instance $\calI$, let $\calI'$ be the instance after
  \CleanEdges{}, with optimum integral
  makespan $\lambda^*(\calI')$. Then $\lambda^*(\calI') \le (1+O(\eps))\,\lambda^*(\calI)$.
\end{lemma}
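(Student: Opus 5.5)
Take an optimal integral assignment $\varphi$ for $\calI$ with makespan $\lambda^* := \lambda^*(\calI)$, and modify it into an assignment $\varphi'$ that uses only surviving edges. For every job $j$ whose edge $(\varphi(j),j)$ was deleted in Line~\ref{line:delete_edge}, reassign $j$ to a machine $i_{\min}(j) \in N(j)$ attaining $\ell_{i_{\min}(j)j} = L_{\min}(j)$. This edge is never deleted, since $L_{\min}(j) \le L_{\max}(j)$. All other jobs keep their original machine, and their edges survive. The task is then to show that the load of every machine in $\varphi'$ is at most $(1+O(\eps))\lambda^*$, which gives $\lambda^*(\calI') \le (1+O(\eps))\lambda^*$.

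\textbf{Size of a moved job.} Suppose $j$ is moved. Then its original edge satisfies $\ell_{\varphi(j)j} > L_{\max}(j) = L_{\min}(j)+\beta-1$, so $\ell_{\varphi(j)j} \ge L_{\min}(j)+\beta$. Since $\ell_{ij} = \lceil \log_{1+\eps} s_{ij}\rceil$, we have $s_{ij} \in ((1+\eps)^{\ell_{ij}-1}, (1+\eps)^{\ell_{ij}}]$. This gives
\[
  s_{i_{\min}(j)j} \le (1+\eps)^{L_{\min}(j)} \le (1+\eps)^{\ell_{\varphi(j)j}-\beta} < (1+\eps)\, s_{\varphi(j)j}\,(1+\eps)^{-\beta} \le (1+\eps)\frac{\eps}{nm}\, s_{\varphi(j)j},
\]
using $\beta \ge \log_{1+\eps}(nm/\eps)$. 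Because $\varphi$ is optimal, $s_{\varphi(j)j} \le \lambda^*$, so every moved job has new size at most $(1+\eps)\eps\lambda^*/(nm)$.

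\textbf{Load bound.} There are at most $n$ jobs, so any machine receives extra load at most $n \cdot (1+\eps)\eps\lambda^*/(nm) \le 2\eps\lambda^*$ from moved jobs. The load of jobs that stay is at most its old load, which is at most $\lambda^*$. Hence every machine has load at most $(1+2\eps)\lambda^*$ in $\varphi'$.

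The only delicate point is the ceiling in the level definition, which costs one factor of $(1+\eps)$. The factor $1/(nm)$ built into $\beta$ absorbs the rest, so no averaging argument is needed.
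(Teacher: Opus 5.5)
Your proof is correct and follows the paper's argument essentially step for step. Like the paper, you reassign each job whose optimal edge was deleted to a minimum-level neighbor. You then use the $\beta$-level gap to bound each moved job's size by $O(\eps/(nm))\,\lambda^*(\calI)$, losing one factor of $(1+\eps)$ to the ceiling, exactly as in the paper's $(1+\eps)^{-(\beta-1)}$; finally you sum over at most $n$ moved jobs per machine.
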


\begin{proof}
  Consider an optimum integral assignment $\sigma$; If $(\sigma(j),j)$
  survives \CleanEdges{}, keep $j$ unchanged.  Otherwise assign $j$ to
  a minimum-level neighbor $i_{\min}(j)$. Since the deleted edge is at
  least $\beta$ levels above the minimum level of $j$,
  \[
    s_{i_{\min}(j),j}
    \le
    (1+\eps)^{-(\beta-1)}\cdot s_{\sigma(j),j}
    \le
    O\!\left(\frac{\eps}{mn}\right)\lambda^*(\calI) .
  \]
  Every machine receives at most $n$ reassigned jobs, giving an extra
  load of $O(\eps/m) \cdot \lambda^*(\calI)$, which proves the claim.
\end{proof}

We also define an auxiliary instance $\bar I_a$: this is the instance we
would obtain if line~\ref{line:assign_jobs} were skipped.  Equivalently,
$\bar I_a$ is the subinstance of $I'$ induced by the jobs that are active for
offset $a$, with all surviving edges from \CleanEdges{} kept.  Let
$\bar\lambda$ be the optimum integral makespan of $\bar I_a$.

\begin{lemma}[Bucketing Loss]
  \label{lem:pre-offset}
  Let $\calI_a$ be the instance produced by $\BuildGraph(a)$, where the
  jobs assigned in line~\ref{line:assign_jobs} are fixed to their
  chosen machines.  Then,
  $\lambda^*(\calI_a) \leq (1+O(\eps))\lambda^*(\calI')$.
\end{lemma}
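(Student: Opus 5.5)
We would prove the lemma by taking an optimal integral assignment $\sigma$ for $\calI'$ (makespan $\lambda' := \lambda^*(\calI')$) and converting it into a feasible integral assignment for $\calI_a$. Here $\calI_a$ consists of the active jobs for offset $a$: the pre-assigned ones are fixed to $i_j$, and the rest are restricted to edges inside their bucket component. First restrict $\sigma$ to the active jobs of offset $a$. This gives a solution for $\bar I_a$ of makespan at most $\lambda'$, since dropping jobs only lowers loads. Next, for each job $j$ pre-assigned at line~\ref{line:assign_jobs}, we discard $\sigma(j)$ and account separately for the load that $j$ places on $i_j$. Every remaining (non-pre-assigned) active job $j$ has no neighbor $i$ with $b_{\max}(i) > b_j$. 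So every surviving edge $(i,j)$ has $b_{\max}(i) = b_j$, and $\sigma(j)$ is an edge of the bucket subinstance $E_{b_j}$. Thus $\sigma$ restricted to these jobs is feasible in $\calI_a$ and contributes load at most $\lambda'$ on each machine.

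The main step is bounding the extra load from the pre-assigned jobs on a machine $i$. Such a job $j$ has $b_j < b_{\max}(i) =: b$. Let $j'$ be an active neighbor of $i$ with $b_{j'} = b$.

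\textbf{Size of the pre-assigned job.} Since $(i,j)$ survived cleaning, $\ell_{ij} \in I_j \subseteq \BB_{b_j,a}$. Hence $\ell_{ij} < (b_j+1)B + a - C \le bB + a - C$.

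\textbf{Lower bound on $\lambda'$.} Since $I_{j'} \subseteq \BB_{b,a}$, every surviving edge of $j'$ has level at least $bB+a$. So in $\calI'$ job $j'$ must be placed on an edge of size at least $(1+\eps)^{bB+a-1}$, which gives $\lambda' \ge (1+\eps)^{bB+a-1}$.

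\textbf{Bounding the total.} Combining the two bounds, $s_{ij} \le (1+\eps)^{\ell_{ij}} \le (1+\eps)^{-C+1}\lambda'$. With $C = 2\beta$ and $\beta = \lceil \log_{1+\eps}(nm/\eps)\rceil$, this is at most $(1+\eps)(\eps/(nm))^2\lambda'$. At most $n$ jobs are pre-assigned to $i$, so the extra load is at most $O(\eps/m)\lambda' \le \eps\lambda'$.

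Adding the two contributions, every machine has load at most $(1+O(\eps))\lambda'$ in a feasible integral solution of $\calI_a$. This gives $\lambda^*(\calI_a) \le (1+O(\eps))\lambda^*(\calI')$.

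The delicate points are the level/size off-by-one conventions ($\ell = \lceil \log_{1+\eps} s\rceil$, so $(1+\eps)^{\ell-1} < s \le (1+\eps)^\ell$) and making sure the lower bound on $\lambda'$ really comes from a job present in $\calI'$. It does: $j'$ is a job of $\calI'$ and must be assigned somewhere in $\sigma$, and all its surviving edges lie in $\BB_{b,a}$. We also need $i$'s own edges to $j$ to be surviving edges, which holds because $N(i)$ is taken after \CleanEdges{}. The guard band $C$ absorbs these constant shifts with room to spare.
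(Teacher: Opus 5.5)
Your proof is correct and follows essentially the same route as the paper. You restrict an optimal integral assignment of $\calI'$ to the active jobs, and you bound each pre-assigned edge $(i_j,j)$ through the guard band, using the higher-bucket neighbor $j'$ of $i_j$ to lower-bound $\lambda^*(\calI')$ before multiplying by $n$; your explicit check that each non-pre-assigned job's optimal edge lies in $E_{b_j}$, and your off-by-one bookkeeping, fill in steps the paper leaves implicit.
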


\begin{proof}
  First, we observe that dropping inactive jobs only causes the
  optimal load to decrease, so it remains to bound the increase in
  load due to fixing the job assignment in
  line~\ref{line:assign_jobs}.
  Suppose such a job $j$ (in bucket $\BB_{b_j,a}$) is assigned to
  machine $i_j$. Since $i_j\in C_j$, there must be an active job
  $j'\in N(i_j)$ with
  \[
    b_{j'} = b_{\max}(i_j) \ge b_j + 1 .
  \]
  However, the interval $I_{j'}$ for job $j'$ is contained within the
  bucket $\BB_{b_{j'},a}$, so every edge incident to this job $j'$ has
  level at least $b_{j'}B+a$. This means, 
  \[
    \lambda^*(\calI') \ge \min_{i\in N(j')}s_{ij'} \ge
    (1+\eps)^{b_{j'}B+a-1} \ge (1+\eps)^{(b_j+1)B+a-1}.
  \]
  On the other hand, since job $j$ lies in bucket $b_j$, its size on
  machine $i_j$ is at most $(1+\eps)^{(b+1-\eps)B+a-1}$. Putting these
  facts together,
  \begin{gather}
    s_{i_j j}
    \le
    (1+\eps)^{-\eps B} \lambda^*(\calI')
    =
    (1+\eps)^{-2\beta} \lambda^*(\calI') \leq (\eps/mn)^2
    \lambda^*(\calI'). \label{eq:guard-band}
  \end{gather}
  Now each machine receives at most $n$ directly assigned jobs, so the
  extra load from all direct assignments on any machine can be (very loosely
  bounded) by $O(\eps)\,\lambda^*(\calI')$, which completes the proof.
\end{proof}



\begin{lemma}[Support Lemma]
  \label{lem:offset-support-bound}
  For every offset $a$ and every edge $(i,j)$ with $x^a_{ij}>0$, we
  have \( s_{ij}\le (1+O(\eps))\lambda^*(\calI).  \)
\end{lemma}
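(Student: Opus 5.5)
The proof splits on where the positive value $x^a_{ij}$ comes from. Either $(i,j)$ was pre-assigned in line~\ref{line:assign_jobs} of \BuildGraph{}$(a)$, or it lies in the support of a call to \texttt{WeakFractLoadBalancer} on some bucket subinstance $(V_b,E_b)$. These are the only two ways $x^a$ becomes nonzero, and the per-bucket subinstances are disjoint, so every such edge falls into exactly one case.

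\emph{Pre-assigned edges.} Inequality~\eqref{eq:guard-band} from the proof of \Cref{lem:pre-offset} already gives $s_{i_j j}\le (\eps/mn)^2\lambda^*(\calI')$. Combining this with \Cref{lem:strong-cleaning}, $\lambda^*(\calI')\le(1+O(\eps))\lambda^*(\calI)$, yields $s_{ij}\le (1+O(\eps))\lambda^*(\calI)$, and in fact the much stronger bound $s_{ij}\le O(\eps)\lambda^*(\calI)$.

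\emph{Edges from a bucket subinstance.} Let $\calI_{b,a}$ be the bucket-$b$ instance on which \Cref{alg:weak-log} is run. The support guarantee in \Cref{thm:weak-log} (the extra property of \Cref{def:frac-LB}) gives $s_{ij}\le(1+\eps)\lambda^*(\calI_{b,a})$. It then suffices to show
\[
\lambda^*(\calI_{b,a})\le\lambda^*(\calI_a)\le(1+O(\eps))\lambda^*(\calI')\le(1+O(\eps))\lambda^*(\calI).
\]
The last two inequalities are \Cref{lem:pre-offset} and \Cref{lem:strong-cleaning}. For the first, I argue monotonicity. $\calI_{b,a}$ is obtained from $\calI_a$ by restricting to the vertex set $V_b$. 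Because every remaining edge satisfies $b_j=b_{\max}(i)$, a job $j$ with $b_j=b$ has all of its surviving neighbours in $V_b$. Hence restricting an optimal integral assignment of $\calI_a$ to the jobs of bucket $b$ is still feasible in $\calI_{b,a}$, and machine loads can only decrease, since we drop jobs and the fixed (pre-assigned) load is not part of $\calI_{b,a}$.

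\emph{Main obstacle.} The delicate step is confirming that \Cref{thm:weak-log} really applies to $\calI_{b,a}$. This needs $s_{\min}(b,a)$ and $s_{\max}(b,a)$ to be valid lower and upper bounds on all edge sizes in $E_b$. After \CleanEdges{}, each remaining edge of a job $j$ with $b_j=b$ has level in $I_j\subseteq\BB_{b,a}$. Since $\ell_{ij}=\lceil\log_{1+\eps}s_{ij}\rceil$, this gives $(1+\eps)^{bB+a-1}< s_{ij}\le(1+\eps)^{(b+1-\eps)B+a-1}$, so the bounds are valid. If the theorem is only needed as stated, with approximation relative to the subinstance's optimum, this suffices. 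One must also check that the support property of \Cref{thm:weak-log} is with respect to $\lambda^*(\calI_{b,a})$ rather than some larger quantity; the proof of \Cref{thm:weak-log} indeed shows $s_{ij}\le(1+\eps)\lambda^*$ for that instance.
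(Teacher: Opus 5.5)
Your proof is correct and follows essentially the same route as the paper's proof. It uses the same two-case split: for pre-assigned edges, the guard-band bound~\eqref{eq:guard-band} plus the cleaning lemma; for bucket edges, the weak algorithm's support guarantee chained through $\lambda^*(\calI_{b,a})\le\lambda^*(\calI_a)$ and the bucketing and cleaning lemmas, and your extra checks (every neighbour of a bucket-$b$ job lies in $V_b$, and $s_{\min}(b,a),s_{\max}(b,a)$ are valid bounds on $E_b$) correctly fill in details the paper leaves implicit.
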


\begin{proof}
  Fix an offset $a$ and an edge $(i,j)$ with $x^a_{ij}>0$. First,
  suppose that $(i,j)$ is created by the direct assignment step in
  line~\ref{line:assign_jobs}. Then the guard-band argument
  in~(\ref{eq:guard-band}) from the proof of \Cref{lem:pre-offset}
  gives $s_{ij} \le (\eps/mn)^2\lambda^*(\calI')$.  By
  \Cref{lem:strong-cleaning}, this is at most
  $(1+O(\eps))\lambda^*(\calI)$.

  Otherwise, $(i,j)$ belongs to the support returned by
  \texttt{WeakFractLoadBalancer} on some bucket subinstance $H_{a,b}$.
  By the support guarantee of \Cref{def:frac-LB} and
  \Cref{thm:weak-log}, we have that 
  \(
    s_{ij}\le (1+O(\eps))\lambda^*(H_{a,b}).
  \)
  Since $H_{a,b}$ is one of the remaining bucket subinstances of
  $\calI_a$, restricting an optimal assignment for $\calI_a$ to this
  bucket gives a feasible assignment for $H_{a,b}$. Therefore
  \(
  \lambda^*(H_{a,b})\le \lambda^*(\calI_a).
  \)
  Using \Cref{lem:strong-cleaning,lem:pre-offset}, we get
  \[
    s_{ij}
    \le (1+O(\eps))\lambda^*(H_{a,b})
    \le (1+O(\eps))\lambda^*(\calI_a)
    \le (1+O(\eps))\lambda^*(\calI')
    \le (1+O(\eps))\lambda^*(\calI).
  \]
  This proves the lemma.
\end{proof}

\begin{lemma}
  \label{lem:final-mpc-feasible}
  In \SetCapacitiesAndSolve{}, the \MixPC instance $\mathcal{M}$ is
  feasible, and every machine $i$ has target load
  $L'_i \leq (1+O(\eps))\lambda^*(\calI)$.
\end{lemma}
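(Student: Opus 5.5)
We prove the two claims separately: the bound on $L'_i$ is a pointwise argument, and feasibility comes from averaging the offset solutions. Throughout, write $\lambda^* := \lambda^*(\calI)$.

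\textbf{Bound on the capacities.} First, $W_i \le (1+O(\eps))\lambda^*$ follows directly from \Cref{lem:offset-support-bound}, since every edge of $E'$ is in the support of some $x^a$.

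For $\Lambda_i$, fix an offset $a$. Split the load of $x^a$ on machine $i$ into two parts: the pre-assigned jobs from line~\ref{line:assign_jobs}, and the output of \texttt{WeakFractLoadBalancer} on the unique bucket subinstance $H_{a,b}$ with $b=b_{\max}(i)$. This subinstance is unique because the subinstances are vertex-disjoint.

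\begin{itemize}
\item The pre-assigned part contributes at most $n(\eps/mn)^2\lambda^*(\calI') = O(\eps)\lambda^*$, by (\ref{eq:guard-band}) and \Cref{lem:strong-cleaning}.
\item By \Cref{thm:weak-log}, the bucket part is at most $(1+O(\eps))\lambda^*(H_{a,b})$. As in the proof of \Cref{lem:offset-support-bound}, we have $\lambda^*(H_{a,b}) \le \lambda^*(\calI_a) \le (1+O(\eps))\lambda^*(\calI') \le (1+O(\eps))\lambda^*$.
\end{itemize}

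One check is needed: \Cref{thm:weak-log} requires the supplied $s_{\min}(b,a)$ and $s_{\max}(b,a)$ to bound all sizes in $H_{a,b}$. This holds because every surviving edge of a bucket-$b$ job has level in $I_j\subseteq \BB_{b,a}$. Hence $\Lambda_i\le(1+O(\eps))\lambda^*$, and therefore $L'_i\le(1+O(\eps))\lambda^*$.

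\textbf{Feasibility.} We exhibit the averaged solution $y := \frac{1}{B}\sum_{a} x^a$, rescaled, as a witness; it is supported on $E'$ by definition.

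\emph{Packing side.} For each machine $i$, we have $\sum_j s_{ij} y_{ij} \le \max_a \sum_j s_{ij}x^a_{ij} = \Lambda_i$.

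\emph{Covering side.} For a job $j$, $x^a$ gives $j$ mass at least $1$ whenever $j$ is active for offset $a$. If $j$ was pre-assigned, its mass is exactly $1$. Otherwise, \Cref{thm:weak-log} (via \Cref{def:frac-LB}) guarantees covering in its bucket subinstance; here $j$'s relevant machines are all in $V_{b_j}$, since $C_j=\emptyset$ forces $b_{\max}(i)=b_j$ for every neighbor $i$.

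Next we count the offsets for which $j$ is inactive. The interval $I_j$ has length $\beta$. It fails to lie inside an active bucket only if it meets a guard band of length $C=2\beta$ or straddles a full-bucket boundary. Over the $B$ shifts $a$, this happens for at most $C+\beta = 3\beta$ offsets, since buckets have period $B$. So $j$ is active for at least a $1-3\beta/B = 1-3\eps/2$ fraction of offsets, giving $\sum_i y_{ij}\ge 1-O(\eps)$.

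Setting $\hat y := y/(1-O(\eps))$ therefore satisfies all covering constraints, and its loads are at most $(1+O(\eps))\Lambda_i \le L'_i$, provided the $O(\eps)$ constant in $L'_i$ is chosen large enough (e.g.\ $1+4\eps$). Thus $\mathcal{M}$ is feasible. The $W_i$ term in $L'_i$ is not needed for feasibility; it only ensures that $L'_i \ge W_i$.

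\textbf{Main obstacle.} The delicate step is the covering argument: it must rule out the possibility that an active, non-pre-assigned job sees neighbors outside its bucket subinstance, which is exactly what the definition of $C_j$ guarantees. It must also make the offset count precise under the boundary conventions of $\BB_{b,a}$ (half-open intervals, integrality of $B$). I would first verify the count $3\beta$ by modelling the offsets as positions of $L_{\min}(j)$ modulo $B$: $I_j$ fits in an active bucket iff $(L_{\min}(j)-a)\bmod B \le B-C-\beta$. This holds for $B-C-\beta+1$ residues, so $j$ is active for a $1-O(\eps)$ fraction of offsets.
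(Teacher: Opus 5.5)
Your proof is correct and follows the same route as the paper. You bound $W_i$ by the support lemma and $\Lambda_i$ by the per-offset approximation guarantees; you then exhibit the offset-average $\frac{1}{B}\sum_a x^a$, which is supported on $E'$, covers each job to extent $1-O(\eps)$ since a job is inactive for at most $C+\beta\le 3\beta$ offsets, and loses only a $(1+O(\eps))$ factor in load when rescaled. Your version is more explicit than the paper's: it splits each $x^a$-load into pre-assigned and bucket parts, checks that $s_{\min}(b,a)$ and $s_{\max}(b,a)$ are valid size bounds, and makes precise both the residue count and the constant needed in $L'_i$.
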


\begin{proof}
  Using \Cref{lem:strong-cleaning,lem:pre-offset} and the fact that we
  find a $(1 + O(\eps))$-approximate solution $x^a$ for each offset
  instance $\calI_a$, the load assigned by $x^a$ on each machine is at
  most $(1+O(\eps))\lambda^*(\calI)$. Since $\Lambda_i$ is the largest
  of these loads, we have $\Lambda_i \leq (1+O(\eps))\lambda^*(\calI)$
  for every machine $i$. Also, by \Cref{lem:offset-support-bound},
  every edge in $E'$ has size at most
  $(1+O(\eps))\lambda^*(\calI)$. Therefore, for every machine $i$,
  $W_i=\max\{s_{ij} \mid (i,j)\in E'\} \le
  (1+O(\eps))\lambda^*(\calI).$ Therefore, the target load
  $L'_i=(1+O(\eps))\max(\Lambda_i,W_i)\leq
  (1+O(\eps))\lambda^*(\calI)$.

  It remains to show that the instance is feasible. Define the
  ``average solution''
  $\bar{x}_{ij}:= \frac{1}{B}\sum_{a=0}^{B-1}x^a_{ij}$. By
  construction, $\bar{x}$ is supported on $E'$. Moreover, since there
  are $B$ offsets, and each job is inactive in at most
  $C+ \beta \leq 3\beta$ of them, it participates in at least a
  $(1-3\eps)$ fraction of the offsets. Thus
  $\sum_i \bar{x}_{ij}\geq 1-3\eps$ for every job $j$. Now rescaling
  the solution up can cause the load to increase by at most a factor
  of $(1+O(\eps))$ more, proving the claim.
\end{proof}

\begin{restatable}[Fractional Load Balancing;
 Strongly Polynomial]{theorem}{stronglog}
 \label{thm:strong-log}
 There is a deterministic distributed $(1+O(\eps))$-approximation
 algorithm for the Fractional Load Balancing problem
 (Defn.~\ref{def:frac-LB}) without any additional global information
 that works in $\poly(\log(mn)/\eps)$ rounds in the \congest model,
 and treats \Cref{lem:mixpc-black-box} and \Cref{thm:weak-log} as black
 boxes.
\end{restatable}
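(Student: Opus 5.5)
The plan is to prove \Cref{thm:strong-log} by analyzing \Cref{alg:strong-log} directly. The work splits into three parts: correctness of the output (feasibility, load bound, support bound), the fact that every step is implementable in \congest with only local knowledge, and the round count. Almost all quantitative ingredients are already in \Cref{lem:strong-cleaning,lem:pre-offset,lem:offset-support-bound,lem:final-mpc-feasible}, so the proof mainly assembles them.

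\textbf{Correctness.} By \Cref{lem:final-mpc-feasible}, the \MixPC instance $\mathcal{M}$ built in \SetCapacitiesAndSolve{} is feasible and satisfies $L'_i\le(1+O(\eps))\lambda^*(\calI)$.

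Hence the call to \distYoung (\Cref{lem:mixpc-black-box}) cannot hit the infeasible branch. Every job therefore ends with $v_j=\sum_i\hat{x}_{ij}\ge 1$, and every machine has load at most $(1+O(\eps))L'_i\le(1+O(\eps))\lambda^*(\calI)$.

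The normalization in line~\ref{line:round-up-job} divides by $v_j\ge 1$. This can only decrease machine loads, and it makes every job's total mass exactly $1$. (If one instead uses the weaker form with $v_j\ge 1-O(\eps)$, the loads grow by at most a further $1+O(\eps)$ factor.)

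For the support condition of \Cref{def:frac-LB}, note that $\hat{x}$ is supported on $E'$. By \Cref{lem:offset-support-bound}, every edge of $E'$ has size at most $(1+O(\eps))\lambda^*(\calI)$. After reparameterizing $\eps$ by a constant factor, this gives the required $s_{ij}\le(1+\eps)\lambda^*$.

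\textbf{Locality.} Each step uses only local information.
\begin{itemize}
\item \CleanEdges{} needs only the sizes at job $j$, plus one round so that the incident machines learn which edges were deleted.
\item For a fixed offset $a$, \BuildGraph{} takes $O(1)$ rounds. Jobs compute $b_j$ locally and send it to their neighbors; each machine computes $b_{\max}(i)$ and sends it back; each job then picks $i_j$ and notifies that machine.
\item The sets $V_b$ and $E_b$ are locally represented in the sense of \Cref{def:local-sets}.
\item The values $s_{\min}(b,a)$ and $s_{\max}(b,a)$ are functions of $b$, $a$ and $\eps$ alone, so every node in $V_b$ knows them without communication.
\item One must check that these are valid bounds for $E_b$. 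Every remaining edge has level in $I_j\subseteq\BB_{b,a}$, so its size lies in $[(1+\eps)^{bB+a-1},(1+\eps)^{(b+1-\eps)B+a-1}]$.
\end{itemize}

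\textbf{Parallelism and round count.} The subinstances for distinct $b$ are vertex- and edge-disjoint, because each surviving edge has $b_j=b_{\max}(i)$. So the calls to \Cref{thm:weak-log} run in parallel without congestion. Their ratio is $s_{\max}/s_{\min}=(1+\eps)^{(1-\eps)B}$, so $\log(s_{\max}/s_{\min})=O(B\eps)=O(\beta)=O(\log(mn/\eps)/\eps)$. Each call therefore takes $\poly(\log(mn)/\eps)$ rounds.

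There are $B=2\beta/\eps=\poly(\log(mn)/\eps)$ offsets, processed sequentially. The merge step adds one more \distYoung call, and computing $W_i$ and $\Lambda_i$ is purely local to machines. The total is $\poly(\log(mn)/\eps)$ rounds, with no dependence on the aspect ratio.

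\textbf{Main obstacle.} The step needing the most care is the chain behind \Cref{lem:final-mpc-feasible}, since the final objective must be compared with $\lambda^*(\calI)$ and not merely with each bucket's optimum. For each offset, the load of $x^a$ on machine $i$ is the pre-assigned load plus the output of a single bucket instance $H_{a,b}$. The bucket part is at most $(1+O(\eps))\lambda^*(H_{a,b})\le(1+O(\eps))\lambda^*(\calI_a)$. The pre-assigned load is $O(\eps)\lambda^*$ by the guard-band bound~(\ref{eq:guard-band}). Combining these with \Cref{lem:pre-offset,lem:strong-cleaning} gives the per-offset bound on $\Lambda_i$.

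I would verify that a machine frozen inside \texttt{WeakFractLoadBalancer} only receives load charged to the single bucket $b_{\max}(i)$. I would also check the count of inactive offsets per job: $I_j$ has width $\beta$ and each guard band has width $C=2\beta$, so $j$ is inactive for at most $3\beta$ of the $B$ offsets. Averaging then gives coverage at least $1-3\eps$, and scaling by $1/(1-3\eps)$ shows $\mathcal{M}$ is feasible within the $(1+O(\eps))$ slack built into $L'_i$.
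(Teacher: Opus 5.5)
Your proposal is correct and follows essentially the same route as the paper's proof. It gets feasibility and the capacity bound from \Cref{lem:final-mpc-feasible}, coverage and load from the \distYoung guarantee plus the normalization step, and the support condition from \Cref{lem:offset-support-bound} (the output lives on $E'$); the round count comes from vertex-disjoint parallel bucket calls over $B=\poly(\log(mn)/\eps)$ sequential offsets. The only differences are cosmetic: you normalize using $v_j\ge 1$, so loads only decrease, where the paper works with $v_j\ge 1/(1+O(\eps_0))$; and you explicitly check that $s_{\min}(b,a),s_{\max}(b,a)$ are valid bounds with $\log(s_{\max}/s_{\min})=O(\beta)$, which the paper leaves implicit.
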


\begin{proof}[Proof of \Cref{thm:strong-log}]
  Run \Cref{alg:strong-log} with internal accuracy parameter
  $\eps_0=\eps/c$, for a sufficiently large constant $c$. We prove a
  $(1+O(\eps_0))$ guarantee; choosing $c$ large enough gives the
  stated $(1+\eps)$ guarantee.

  By \Cref{lem:final-mpc-feasible}, the \MixPC instance $\mathcal{M}$
  constructed in \SetCapacitiesAndSolve{} is feasible, and every
  machine satisfies $L'_i\le (1+O(\eps_0))\lambda^*$. Since
  $\mathcal{M}$ is feasible, \distYoung returns a solution $z$ with
  $\sum_i z_{ij}\ge 1/(1+O(\eps_0))$ for every job $j$, and
  $\sum_j s_{ij}z_{ij}\le (1+O(\eps_0))L'_i$ for every machine
  $i$. The final normalization step scales each job by at most
  $1+O(\eps_0)$, so the output $\hat{x}$ satisfies
  $\sum_i \hat{x}_{ij}=1$ for every job $j$, and for every machine
  $i$,
  $\sum_j s_{ij}\hat{x}_{ij}\le (1+O(\eps_0))L'_i\le
  (1+O(\eps_0))\lambda^*$.

  Moreover, the normalization step creates no new support edge. Thus
  if $\hat{x}_{ij}>0$, then $(i,j)\in E'$. By
  \Cref{lem:offset-support-bound}, every edge in $E'$ has size at most
  $(1+O(\eps_0))\lambda^*$. Hence, by carefully choosing $c$, the
  algorithm outputs a feasible fractional assignment of load at most
  $(1+\eps)\lambda^*$, with no positive mass on edges heavier than
  $(1+\eps)\lambda^*$.

  It remains to bound the number of rounds. \CleanEdges{} takes $O(1)$
  rounds.  For a fixed offset $a$, \BuildGraph$(a)$ also takes $O(1)$
  rounds: jobs send their bucket labels, machines compute and return
  $b_{\max}(i)$, and jobs do the direct assignments. After
  \BuildGraph$(a)$, the remaining bucket instances are edge- and
  vertex-disjoint, so all calls to \texttt{WeakFractLoadBalancer} for
  this offset run in parallel. In each bucket, the ratio between the
  largest and smallest edge sizes is at most $(1+\eps_0)^{O(B)}$, so
  the weak algorithm takes $\poly(\log(mn)B/\eps_0)$ rounds. Since
  $\beta=O(\log(mn)/\eps_0)$ and $B=O(\log(mn)/\eps_0^{2})$, one
  offset takes $\poly(\log(mn)/\eps)$ rounds. There are
  $B=\poly(\log(mn)/\eps)$ offsets, so all offset computations
  together take $\poly(\log(mn)/\eps)$ rounds. Finally,
  \SetCapacitiesAndSolve{} only does local aggregation followed by one
  call to \distYoung, which also takes $\poly(\log(mn)/\eps)$
  rounds. Thus the total round complexity is $\poly(\log(mn)/\eps)$ in
  the \congest{} model.
\end{proof}


\section{Rounding to an Integral Solution}
\label{sec:rounding}

In this section, we observe that the fractional solution produced by \Cref{thm:strong-log} can be rounded to an integer solution with a loss of a factor of $(2+\eps)$. The ideas here are fairly standard; in particular, the desired rounding procedure follows from translating the near-linear-time algorithm of Li~\cite{Li23} to the \congest setting. To begin, let us formally define our goal.

\begin{defn}[Distributed $\alpha$-Approximation for Integer LB]\label{def:dilb-approx}
  A distributed protocol for Integer LB is an $\alpha$-approximation
  if for each distributed instance $\calI$ of LB as in
  \Cref{def:dilb}, it terminates in finitely many rounds, with each
  machine node $i$ returning a set $S_i \subseteq N(i)$ indicating the
  jobs assigned to machine $i$, such that
  \[ \sum_{j \in S_i} s_{ij} \leq \alpha\, \lambda^*(\calI)\] for all
  $i$, and each job $j$ belongs to exactly one output set $S_i$ where
  $i \in N(j)$.
\end{defn}

\begin{restatable}[Rounding Algorithm]{theorem}{rounding}
  \label{thm:rounding}
  There is a deterministic distributed $(2+O(\eps))$-approximation
  algorithm for the Integer Load Balancing problem
  (Defn.~\ref{def:dilb-approx}), which runs in $\poly(\log(mn)/\eps)$
  rounds in the \congest model, when given a
  $(1+O(\eps))$-approximation for the Fractional Load Balancing
  problem as input.
\end{restatable}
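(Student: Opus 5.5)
We start from the fractional solution $\hat{x}$ given by \Cref{thm:strong-log}. It satisfies three properties: every job has total mass $\sum_i \hat{x}_{ij}=1$, every machine load is at most $(1+O(\eps))\lambda^*$, and every support edge has $s_{ij}\le(1+\eps)\lambda^*$. We then apply the Shmoys--Tardos rounding. For each machine $i$, sort its support jobs by non-increasing $s_{ij}$ and split the fractional mass $\sum_j \hat{x}_{ij}$ into $k_i=\lceil \sum_j \hat{x}_{ij}\rceil$ unit-capacity slots $(i,1),\dots,(i,k_i)$, filled greedily in sorted order. This gives a bipartite graph $H$ between jobs and slots with a fractional perfect-on-jobs matching $y$. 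Two standard facts follow:
\begin{itemize}
\item any matching saturating the jobs into slots, using only edges of $y$'s support, loads machine $i$ by at most $\sum_{k\ge 2}\max_{j\in \text{slot }(i,k)} s_{ij} + \max_j s_{ij}$;
\item the first quantity is at most $\sum_j s_{ij}\hat{x}_{ij}$, since the max in slot $k+1$ is at most the min in slot $k$.
\end{itemize}
Hence the integral load is at most $(1+O(\eps))\lambda^* + (1+\eps)\lambda^* = (2+O(\eps))\lambda^*$. Each machine can build its slots locally, because it knows the $s_{ij}$ and $\hat{x}_{ij}$ on its incident edges. Job $j$ learns the slots its mass went to in $O(1)$ rounds, and machine $i$ simulates all its slots.

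The remaining task is to round the fractional matching $y$ in $H$ to an integral matching that saturates every job, using only the support of $y$, in $\poly(\log(mn)/\eps)$ \congest rounds. Exact integral rounding of a fractional bipartite matching is not known to be local, so we allow slack, following Li~\cite{Li23}.

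First, we discretize $y$ to multiples of $\delta=\eps/\poly(mn)$. A job that loses its mass this way is reassigned to its smallest incident support edge; as in \Cref{lem:strong-cleaning}, this adds negligible load. Next, we run a distributed rounding: repeatedly find cycles and paths in the support of the fractional part at the lowest nonzero bit level and alternately round along them. This doubles or zeroes edge values, bit by bit over $O(\log(mn/\eps))$ levels.

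Cycles can be long, so we do not find them globally. Instead, at each bit level the edges carrying the odd bit form a graph of degree at most $2$ in which each vertex pairs up its odd-bit edges locally. We split its path and cycle components into segments of length $O(\log (mn)/\eps)$ using a local ruling-set or Cole--Vishkin style coloring. We then alternate $\pm$ inside each segment and accept an $\eps$-fraction error at segment boundaries, since each boundary touches only $O(1)$ vertices per $\Theta(1/\eps)$ segment length.

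This step is the main obstacle. The boundary errors cause slots to be overfilled slightly and jobs to be slightly underfilled. We therefore have to ensure that at the end every job still has at least one incident support edge with value $1$, or can fall back to an edge of its own. To arrange this, we charge underfilled jobs to an extra per-machine budget of one additional slot, i.e.\ one more job of size at most $(1+\eps)\lambda^*$. That extra job would cost $3$ instead of $2$, so the fix is to make the error mass a $1/\poly$ fraction of each slot's load: we use segment lengths $\poly(\log(mn)/\eps)$ and exploit that each rounding level only moves $\delta 2^{\ell}$ mass. Errors then accumulate to an $O(\eps)$ fraction of load across levels, and leftover fractional jobs are assigned by repeating the argument of \Cref{lem:pre-offset}. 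If this accounting fails, we would instead port Li's exact procedure of iteratively removing $\eps$-fractions via expander-free local path decompositions.

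The total round complexity is $O(\log(mn/\eps))$ levels, times $\poly(\log(mn)/\eps)$ rounds per level for segmentation and alternation.
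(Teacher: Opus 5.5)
\textbf{There is a genuine gap.} Your Shmoys--Tardos setup and load accounting for unit slots are fine. The problem is the step you flag as the main obstacle: it is not resolved, and with exactly unit slots it cannot be resolved the way you intend.

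With $k_i=\lceil\sum_j\hat{x}_{ij}\rceil$ there is no slack. Take a long even cycle alternating jobs and machines, with equal sizes and $\hat{x}\equiv 1/2$. Every machine gets one slot, so a job-saturating matching in $H$ is a perfect matching of the cycle, which needs $\Omega(n)$ rounds even in \local. Your rounding must therefore incur errors, and everything hinges on bounding them \emph{per vertex}. The proposal does not do this.

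At bit level $\ell$, a segment-boundary vertex can have both of its odd-bit edges move by $\delta 2^{\ell}$ in the same direction. At the top levels $\delta 2^{\ell}=\Theta(1)$. So a boundary slot can absorb a whole extra job (your factor-$3$ problem), and a boundary job can lose all of its mass. Longer segments reduce only the \emph{number} of boundary vertices, not the error at each one. Hence ``errors accumulate to an $O(\eps)$ fraction of load'' is an aggregate statement that bounds no individual machine.

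The appeals to \Cref{lem:pre-offset} and \Cref{lem:strong-cleaning} do not transfer either. Both rely on a gap of $\Theta(\beta)$ levels that makes the reassigned edge tiny, at most $O(\eps/mn)\lambda^*$. Here a leftover job's cheapest support edge may have size close to $\lambda^*$, and many such jobs may land on one machine. The closing ``otherwise port Li's procedure'' is a hedge rather than an argument. Moreover, the construction the paper ports from Li is precisely the one designed to create expansion, not an expander-free one.

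\textbf{The missing idea} is to put the slack into the slots rather than into the rounding. Give each group capacity $\gamma=1/(1+\eps)$, i.e., use $k_i=\lceil(1+\eps)\sum_j x_{ij}\rceil$ groups per machine, still filled greedily in decreasing $s_{ij}$. Any $J'\subseteq J$ sends mass at least $|J'|$ into $N_H(J')$, and each group holds at most $\gamma$, so $|N_H(J')|\ge(1+\eps)|J'|$.

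Under this expansion, take any matching that misses some job $j_0$. The alternating BFS from $j_0$ grows by a factor $1+\eps$ per layer until it meets a free group. So an augmenting path of length $O(\log n/\eps)$ always exists. Augmenting along shortest paths in Hopcroft--Karp phases therefore reaches an \emph{exact} job-saturating matching after $O(\log n/\eps)$ phases. Each phase can be simulated on $G$, because every edge $(i,j)$ carries at most three virtual edges (\Cref{clm:constant-virtual-edges}, \Cref{lem:distributed-match}).

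The only cost is in the load bound. The smallest size in group $r$ is at most $(1+\eps)$ times group $r$'s fractional load. This gives load at most $(1+\eps)\sum_j s_{ij}x_{ij}+\max_{j:\,x_{ij}>0}s_{ij}\le(2+O(\eps))\lambda^*$. No bit-rounding, discretization, or reassignment of leftover jobs is needed.
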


For completeness, a proof of this result appears in \Cref{app:rounding}. Combining the rounding result in \Cref{thm:rounding} with our result in \Cref{thm:strong-log} obtaining fractional load-balancing solutions, we get our main result for integer load balancing:

\begin{restatable}[Integral Load Balancing]{theorem}{lbfinal}
  \label{thm:lb-final}
  There is a deterministic distributed $(2+\eps)$-approximation
  algorithm for the Integer Load Balancing problem
  (Defn.~\ref{def:dilb-approx}), which runs in $\poly(\log(mn)/\eps)$
  rounds in the \congest model.
\end{restatable}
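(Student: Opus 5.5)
The proof is a direct composition of \Cref{thm:strong-log} and \Cref{thm:rounding}, with a reparametrization of $\eps$. First, run the algorithm of \Cref{thm:strong-log} with accuracy parameter $\eps_0 = \eps/c$ for a suitably large constant $c$. It terminates in $\poly(\log(mn)/\eps_0) = \poly(\log(mn)/\eps)$ rounds. When it finishes, every machine $i$ holds values $\hat{x}_{ij}$ for $j \in N(i)$ satisfying the following:
\begin{itemize}[nosep]
\item each job is fully covered, i.e., $\sum_{i \in N(j)} \hat{x}_{ij} \ge 1$;
\item every machine load is at most $(1+O(\eps_0))\lambda^*(\calI)$;
\item the support property of \Cref{def:frac-LB} holds: $\hat{x}_{ij} = 0$ whenever $s_{ij} > (1+\eps_0)\lambda^*(\calI)$.
\end{itemize}
This is exactly a $(1+O(\eps_0))$-approximate fractional solution in the sense of \Cref{def:frac-LB}, which is the input format \Cref{thm:rounding} requires.

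Second, feed this solution to the rounding algorithm of \Cref{thm:rounding}, again with accuracy parameter $\eps_0$. No further communication is needed to set up the input, since each machine already stores its own $\hat{x}_{ij}$ values locally. The rounding runs in another $\poly(\log(mn)/\eps_0)$ rounds. It outputs sets $S_i \subseteq N(i)$ such that every job lies in exactly one $S_i$ and $\sum_{j\in S_i} s_{ij} \le (2+O(\eps_0))\lambda^*(\calI)$. The two phases run sequentially, so their round complexities add, and the total is $\poly(\log(mn)/\eps)$. Both phases are deterministic, so the combined algorithm is deterministic.

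Finally, choose $c$ large enough that the hidden constant gives $(2+O(\eps_0)) \le 2+\eps$. This is the standard reparametrization and costs only a constant factor inside the polynomial.

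The only step that needs care is checking that the interface between the two theorems matches. The rounding theorem is stated for a fractional solution in the sense of \Cref{def:frac-LB}, and its factor $2$ comes from the Shmoys--Tardos bound (fractional load) $+$ (maximum support edge size). That bound is only useful because the support property caps the maximum support edge size at $(1+\eps_0)\lambda^*(\calI)$, and \Cref{thm:strong-log} explicitly guarantees this property via \Cref{lem:offset-support-bound}. I therefore expect no real obstacle: the proof just chains the two results and tracks the constants in $\eps$.
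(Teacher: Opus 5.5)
Your proposal is correct and is essentially the paper's own argument: the paper obtains \Cref{thm:lb-final} by running the fractional algorithm of \Cref{thm:strong-log} and feeding its output, including the support property of \Cref{def:frac-LB}, into the rounding procedure of \Cref{thm:rounding}, with $\eps$ rescaled by a constant. Your checks on the interface (the support bound controlling the Shmoys--Tardos additive term, and sequential composition preserving determinism and the $\poly(\log(mn)/\eps)$ round bound) match what the paper relies on implicitly.
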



\section{Distributed Mixed Packing and Covering}
\label{sec:distr-mixed-pack}

In this section, we consider the mixed packing-covering problem
(\MixPC), and prove our main result for general \MixPC in the
distributed setting.
As mentioned earlier, we build on the parallel algorithm of
\cite{Young2001MixedPackingCovering}. We start by formally defining
the problem and the goals; then we outline Young's approach and
highlight the challenges in the \congest model, before proposing our
solution and its analysis.

\subsection{The Objects of Interest}
\label{sec:MPI-defs}

\begin{defn}[Mixed Packing-Covering Instance]
  An instance $\calI$ of a (feasibility) mixed packing-covering (\MixPC)
  linear program consists of matrices
  $P, C \in \RR_{\geq 0}^{m\times n}$ and vectors
  $p, c \in \RR_{\geq 0}^m$. The goal is to find a solution to the
  following (feasibility) LP:
  \begin{align}
    Px &\leq p \label{eq:mpc}\\
    Cx &\geq c \notag \\
    x & \geq 0, \notag
  \end{align}
\end{defn} 

\begin{defn}[Distributed \MixPC Instance]
  \label{def:mixPC}
  A distributed mixed packing-covering (\MixPC) instance $\calI$ is 
  given by matrices $P, C \in \RR^{m\times n}_{\geq 0}$ and right-hand
  sides $p, c \in \RR^{m}_{\geq 0}$.
  This instance is distributed over its communication network
  $H(\calI)$, defined as follows: there are $2m+n$ nodes, with a node
  for each (packing or covering) row and for each column. There is an
  edge between the packing row node $i$ and column node $j$ whenever
  $P_{ij} \neq 0$; similarly, there is an edge between the covering
  row node $i$ and column node $j$ whenever $C_{ij}\neq 0$. Each node
  knows only the entries of the $P, C$ matrices corresponding to their
  row or column; each row node $i$ additionally knows the
  corresponding right-hand-side values $p_i$ or $c_i$.
\end{defn}

\begin{defn}[Distributed $(1+\eps)$-Approximation for \MixPC]
  A distributed protocol for \MixPC is a $(1+\eps)$-approximation if, for
  any distributed \MixPC instance $\calI$ (Defn.~\ref{def:mixPC})
  (where every node is also given an approximation parameter $\eps >
  0$), it terminates in a finite number of communication rounds. Upon
  termination,
  \begin{itemize}[nosep]
  \item every column node $j$ outputs a value $x_j \geq 0$, 
  \item every covering row node $i$ outputs \textsc{accept} if $(Cx)_i \geq c_i$ (and \textsc{reject} otherwise), and 
  \item every packing row node $i$ outputs \textsc{accept} if $(Px)_i \leq (1+\eps)p_i$ (and \textsc{reject} otherwise). 
  \end{itemize}
  Moreover, if any node outputs \textsc{reject}, then the instance $\calI$ is infeasible.
\end{defn}  

Our main result is the following:

\begin{restatable}[Main \MixPC Theorem]{theorem}{distyoung}
  \label{thm:dist-young}
  There is a deterministic distributed protocol for the \MixPC
  problem, such that if each node is given a parameter $\eps > 0$, it
  is guaranteed to terminate in $\poly(\log(mn)/\eps)$ rounds
  in the \congest model, and return a
  $(1+\eps)$-approximation. Moreover, the protocol ensures that packing nodes never return
\textsc{reject}. Equivalently, for the vector $x$ returned by the
protocol, the packing-side guarantee
\[
  \max_i (Px)_i \le (1+O(\eps))L
\]
holds regardless of whether all covering nodes output \textsc{accept}
or some covering node outputs \textsc{reject}.
\end{restatable}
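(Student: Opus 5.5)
We plan to port Young's multiplicative-weights algorithm \cite{Young2001MixedPackingCovering} to \congest, replacing the global threshold by a locally maintained clock $\tau$. First we normalize the instance locally: each row node $i$ divides its row by $p_i$ or $c_i$, so that we want $Px\le \mathbf{1}$ and $Cx\ge\mathbf{1}$. Each column $j$ also learns in one round the largest entry $u_j$ such that $x_j\le 1/u_j$ is forced by some packing row. This lets $x_j$ start at $x_j^{(0)}=\eps/(n\cdot \max_i P_{ij})$, a value at which every packing row has load at most $\eps$. We then follow Young: the potentials are $\Phi_P=\sum_i(1+\eps)^{(Px)_i/\eta}$ and $\Phi_C=\sum_i(1-\eps)^{(Cx)_i/\eta}$ with $\eta=\Theta(\eps^2/\log(mn))$. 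Covering rows with $(Cx)_i\ge 1$ are dropped from $\Phi_C$ ("accepted"). For column $j$, the local ratio is
\[
  \loc_j \;=\; \frac{\sum_i P_{ij}(1+\eps)^{(Px)_i/\eta}}{\sum_i C_{ij}(1-\eps)^{(Cx)_i/\eta}},
\]
and the global ratio is $\glo=\Phi_P/\Phi_C$. Column $j$ computes $\loc_j$ in one round, since every row broadcasts its current exponential weight to its neighbours.

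Second, we replace $\glo$ by a clock. Young's key facts are that $\glo$ never decreases (up to $1+O(\eps)$ factors) under improving steps, and that if the instance is feasible then some $j$ has $\loc_j\le \glo$. This follows by averaging with a feasible $x^*$: $\sum_j x^*_j(\text{packing gradient})_j\le\Phi_P$ and $\sum_j x^*_j(\text{covering gradient})_j\ge\Phi_C$. All nodes start from a common initial value $\tau_0$ computable from $m,n$, which lower bounds $\glo$ at the start. They multiply $\tau$ by $(1+\eps)$ after every phase of $R=\poly(\log(mn)/\eps)$ rounds. In each round, every column with $\loc_j\le(1+\eps)\tau$ increases $x_j$ multiplicatively by $(1+\eps)$; this is Young's parallel increment, which keeps each row's change at most $\eta$ and hence each weight change within $1+O(\eps)$. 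We then prove two invariants. The first is $\tau\le(1+O(\eps))\glo$ always. Young's inequality (improving steps increase $\Phi_P$ by at most a $(1+\eps)\tau$-multiple of the decrease in $\Phi_C$) then gives the standard potential bound: while $\Phi_C\ge 1$, that is while some covering row is unaccepted, $\max_i(Px)_i\le 1+O(\eps)$. The second concerns any column that remains improving throughout a phase: it grows by $(1+\eps)^R$, and since $x_j\le 1/\max_i P_{ij}$ is capped by the packing bound, no column can stay improving that long. Hence at the end of a phase, every column satisfies $\loc_j>\tau$.

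Third, and this is the main obstacle, we show that advancing $\tau$ never overshoots $\glo$ while the instance is feasible and uncovered rows remain. After a phase, every column has $\loc_j>(1+\eps)\tau$. Averaging against a feasible $x^*$ gives $\glo\ge\min_j\loc_j$ restricted to the support of $x^*$, so $\glo>(1+\eps)\tau$ and the update is safe. The difficulty is the accepted rows dropping out of $\Phi_C$, together with columns whose covering neighbourhood has become empty. Such a column has $\loc_j=\infty$ and never moves, which is harmless. We need to check carefully that the averaging still holds with only the unaccepted rows, which it does because $x^*$ still covers them. Termination follows because $\glo$ is bounded above by $\poly(mn)^{1/\eta}$-type quantities: $\tau$ grows from $\tau_0$, and after $O(\log(mn)/(\eps\eta))$ phases it would exceed the maximum $\glo$ attainable with $\max(Px)\le 1+O(\eps)$ while some covering row remains unaccepted.

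Finally, we stop after that fixed number of phases; each node knows when this happens. Any covering row still below $1$ outputs \textsc{reject}. This is justified because, if the instance were feasible, invariant one would contradict the size of $\tau$. The packing bound $(Px)_i\le 1+O(\eps)$ holds unconditionally, because increments only occur when $\loc_j\le(1+\eps)\tau\le(1+O(\eps))\glo$, whether or not the instance is feasible. So packing nodes never reject. Rescaling $\eps$ gives the theorem.
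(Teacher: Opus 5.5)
Your feasible-case argument is the paper's: the clock $\tau$ is advanced once per phase of $R$ rounds, updates are safe because $\tau\le\glo$, no column stays eligible for a whole phase, Young's averaging lemma gives $\glo>(1+\eps)\tau$ at phase end, and \textsc{reject} is justified by the size of the final $\tau$ (\Cref{lem:clean_cor,lem:distributed-mpc-correct}).

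The genuine gap is your last claim, that the packing bound holds ``whether or not the instance is feasible''. It uses invariant one, $\tau\le(1+O(\eps))\glo$, but you proved that invariant only by averaging against a feasible $x^*$. Without feasibility there may be no $j$ with $\loc_j\le\glo$, so the clock overtakes $\glo$ and updates stop being safe. The conclusion itself fails for the algorithm as described. Take $m$ disjoint copies of the system $x\le 1$, $\tfrac12 x\ge 1$. Then $\loc_j=2\glo$ identically, so every update occurs with $\tau\ge 2\glo/(1+\eps)$, and $x$ simply follows the clock. The final packing load therefore ends a constant factor above $1$. With the paper's schedule $\tau_{\max}=e^{10L}$, $x$ reaches $2$, the covering rows become satisfied, and every node accepts. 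The paper's proof of this clause has the same dependency, so you cannot borrow it. It goes through \Cref{clm:young-bounded-packing-load}, i.e.\ \Cref{lem:bounding_softminmax}, and that lemma's hypothesis (all updates safe) is established only under feasibility in \Cref{lem:clean_cor}.

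The clause needs an algorithmic fix, which is cheap. Each packing row already knows $(Px)_i$, so let it send a veto bit to its columns once $(Px)_i>1+c\eps$, and let a column update only when it is eligible and unvetoed. In a feasible run the potential bound gives $\max_i(Px)_i\le 1+O(\eps)$ before every update round. So for a large enough constant $c$ the veto never fires, and your feasible-case analysis (hence the \textsc{reject} rule) is unchanged. In every run, a row can exceed $1+c\eps$ by at most one step factor.

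Separately, your step $x_j\gets(1+\eps)x_j$ is too coarse. It can move a row of load $\Theta(1)$ by $\Theta(\eps)\gg\eta$, changing its weight by $(1+\eps)^{\Theta(\eps/\eta)}=\poly(mn)$. That breaks the first-order inequality behind the potential bound. You need $x_j\gets(1+\Theta(\eta))x_j$ (the paper's $1+\eps/(10L)$). This still allows only $O(\log(n/\eps)/\eta)$ updates per column, so $R$ stays polylogarithmic.

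Two smaller points:
\begin{itemize}
\item An unaccepted row only gives $\Phi_C>(1-\eps)^{1/\eta}$, not $\Phi_C\ge 1$.
\item The phase-end argument needs $\loc_j$ to be non-decreasing within a phase (\Cref{lem:young_prefix_update}), so that a column eligible at the end was eligible in every round.
\end{itemize}
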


\subsection{Overview of Young's Parallel Algorithm} 
\label{sec:youngs-algorithm}

The algorithm of \cite{Young2001MixedPackingCovering} starts by
rescaling the mixed packing-covering LP to the following form, where
we want to find $x \in \RR^n$ satisfying:
\begin{align}
  Px &\leq L, \tag{MPC} \label{eq:mcp_standard}\\
  Cx &\geq L, \notag\\
   x &\geq 0. \notag
\end{align}
where $L$ is $\Omega(\log m /\varepsilon)$. (This rescaling can be done in one round of
the \congest model, and henceforth we assume this rescaled version
without loss of generality.) Young's approach is to view the covering
and packing constraints compactly as asking for $\max(Px) \leq L$ and
$\min(Cx) \geq L$, and then to use the smoother functions $\softmax$
and $\softmin$ to approximate the $\max$ and $\min$ functions.  Here
we define $\softmax(y) := \ln (\sum_i e^{y_i})$ and
$\softmin(y) := - \softmax (-y) = -\ln(\sum_ie^{-y_i})$ for vectors
$y \in \RR^m$. Apart from being differentiable, these functions have
the following properties:
\begin{gather}
  \softmax(y) \geq \max(y) \geq \softmax(y) - \ln m ~~~\text{and}~~~ 
  \softmin(y) \leq \min(y) \leq \softmin(y) + \ln m; \label{eq:1}
\end{gather}
choosing $L = \Omega(\log m /\varepsilon)$ ensures that the smooth
approximations are within a $(1 + \varepsilon)$ factor of the original
$\min$/$\max$ functions, when the max/min values are themselves
$\Theta(L)$.

The partial derivatives of the functions are $\partial_{y_i}
\softmax(y) = \nicefrac{e^{y_i}}{\sum_{k}e^{y_{k}}}$ and $\partial_{y_i}
\softmin(y) = \nicefrac{e^{-y_i}}{\sum_{k} e^{-y_{k}}}$. Young shows the following lemma:
\begin{lemma}
    \label{lem:young_key}
    If \eqref{eq:mcp_standard} is feasible, then for any $x \geq 0$,
    there exists $j$ such that
    \[ \partial_j\softmax(Px) \leq \partial_j
    \softmin(Cx).\footnote{For brevity, henceforth we use
      $\partial_j f(x)$ to denote $\partial_{x_j} f(x)$.} \]
\end{lemma}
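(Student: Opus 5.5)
We argue by averaging against a feasible point; the chain rule makes the gradients explicit. Write $p(x) := \nabla_y \softmax(y)\big|_{y=Px}$, so $p_i = e^{(Px)_i}/\sum_k e^{(Px)_k}$, and $c(x) := \nabla_y \softmin(y)\big|_{y=Cx}$, so $c_i = e^{-(Cx)_i}/\sum_k e^{-(Cx)_k}$. Both $p$ and $c$ are probability vectors: nonnegative with entries summing to $1$. By the chain rule,
\[
  \partial_j \softmax(Px) = (P^\top p)_j \qquad\text{and}\qquad \partial_j \softmin(Cx) = (C^\top c)_j .
\]

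Fix a feasible $x^\star \ge 0$ of \eqref{eq:mcp_standard}, which exists by hypothesis. Since $Px^\star \le L$ and $p$ is a probability vector,
\[
  \sum_j x^\star_j (P^\top p)_j = p^\top P x^\star \le L \sum_i p_i = L .
\]
Likewise, since $Cx^\star \ge L$ and $c$ is a probability vector,
\[
  \sum_j x^\star_j (C^\top c)_j = c^\top C x^\star \ge L .
\]
Combining the two gives
\[
  \sum_j x^\star_j \Bigl( \partial_j \softmax(Px) - \partial_j \softmin(Cx) \Bigr) \le 0 .
\]

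Every $x^\star_j \ge 0$, and $x^\star \ne 0$: if $x^\star = 0$ then $Cx^\star = 0 \ge L$ fails, because $L > 0$ and there is at least one covering row. So the sum above is a nonnegative, not identically zero, weighting of the differences. If every $j$ with $x^\star_j > 0$ had a strictly positive difference, the sum would be strictly positive, a contradiction. Hence some $j$ in the support of $x^\star$ satisfies $\partial_j \softmax(Px) \le \partial_j \softmin(Cx)$, which is the claim.

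The only point needing care is this last nondegeneracy step, guaranteeing $x^\star \neq 0$ so that the averaging actually selects a coordinate. The rest is the chain rule plus the fact that softmax and softmin gradients are probability distributions.
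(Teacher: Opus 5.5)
Your proof is correct. The paper does not prove this lemma itself but attributes it to Young, and your averaging of the gradient differences against a feasible $x^\star$, with the check that $x^\star\neq 0$ because $L>0$ and at least one covering row exists, is the standard argument from Young's paper.
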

The algorithm proceeds in \emph{rounds}. In every round, it finds some
set of variables $x_j$ (whose existence is ensured by
\Cref{lem:young_key}) such that
$\partial_j \softmax(Px) \leq (1+\varepsilon) \partial_j
{\softmin(Cx)}$ and increases them by a small amount, so that the increment
of both $\softmax(Px)$ and $\softmin(Cx)$ is at most $\varepsilon$.
By doing this, it ensures that
$\softmax(Px) \leq (1 + O(\varepsilon))\softmin(Cx)$ throughout the
process, and hence we terminate at a solution $x$ such that
$\softmin(Cx) = (1 + \Theta(\varepsilon)) L$ and
$\softmax(Px) \leq (1 + O(\varepsilon))L$. Since the max/min functions
and their smoothed variants differ by only a logarithmic
term~(\ref{eq:1}), setting the value of $L = \Omega(\log m/\eps)$
ensures that the violation is at most $\eps L$, giving us
$(1 + \varepsilon)$-feasibility of $x$.

\newcommand{\infrac}[2]{\big(#1\big)/\big(#2\big)}

\begin{algorithm}[ht]
  \caption{ParallelYoung}
  \label{alg:young-parallel}
  \KwIn{Matrices $P, C$, scalar $\varepsilon$.}
  \KwOut{``infeasible'' or $x \geq 0$ s.t.\ $Px \leq (1+O(\varepsilon))\,L$ and $Cx \geq L$.}
  $x_j \gets \min_i \nicefrac{1}{(nP_{ij})}$ for each $j$ \;
  $L \gets \bigl(4\ln m\bigr)/\varepsilon$ \;
  \label{line:local} define $\loc_j(x) \coloneqq \infrac{\sum_i P_{ij}\, e^{(Px)_i}}{\sum_i C_{ij}\, e^{-(Cx)_i}}$.\;
  \label{line:global} define $\glo(x) \coloneqq \infrac{\sum_i e^{(Px)_i}}{\sum_i e^{-(Cx)_i}}$.\;
  \While{$\min\, Cx < L$}{
    \label{line:check-G} \If{$\tau$ is not yet set, or $\min_j \loc_j(x)/\tau > 1 + \varepsilon$}{
      \label{line:compute-G} $\tau \gets \glo(x)$ \tcp*{Recompute $\tau$, start a new phase}
      \lIf{$\min_j \loc_j(x)/\tau > 1 + \varepsilon$}{\Return ``infeasible''}
    }
    Delete the $i$-th covering constraint of $C$ for each $i$ s.t.\ $(Cx)_i \geq L$\;
    \ForEach{$j$}{
      set $\bm{\alpha}_j \gets (\varepsilon/10) \cdot x_j /L $ if  $\loc_j(x)/\tau
      \leq 1+\varepsilon$, else $\bm{\alpha}_j \gets 0$
    }
    $x \gets x + \bm{\alpha}$\; \label{line:update_same_time}
  }
  \Return $x$\;
\end{algorithm}









 
While this basic idea already ensures convergence, the number of
parallel rounds can be high, depending on how we choose the set of
variables to update on each round. To ensure that the algorithm runs
in $\poly(\log(m)/\varepsilon)$ parallel rounds, Young splits the
computation of
$\ratio_j(x) := \frac{\partial_j\softmax(Px)}{\partial_j  \softmin(Cx)}$ into two components: define
$\ratio_j(x) = \loc_j(x) /\glo(x)$, where $\loc_j(x)$ captures the
terms that depend only on $j$. (See \Cref{alg:young-parallel}, and in
particular, lines~\ref{line:local} and \ref{line:global}.)

Now, as $x$ changes, we only recompute $\glo(x)$ occasionally---at the
start of each \emph{phase}---and use $\tau$ to denote the value of the
last computation of $\glo(x).$ Since both $(Px)_i$ and $(Cx)_i$ are
non-decreasing, the value of $\glo(x)$ is non-decreasing. As a result,
$\ratio_j(x)$ can be safely approximated by $\loc_j(x)/\tau$, allowing
the updated variables to continue satisfying
$\partial_j {\softmax(Px)} \leq (1+\varepsilon)\; \partial_j
{\softmin(Cx)}$ throughout the phase without needing a global
recomputation.

Note that each phase (where we update the value of $\tau$) consists of
many rounds, in which we raise the values of some subset of variables
in parallel. The following lemmas bound the total number of parallel
rounds; each round requires $\text{nnz}(P) + \text{nnz}(C)$ amounts of work.

\begin{lemma}
  \label{lem: phase_increment_g}
  If \eqref{eq:mcp_standard} is feasible, $\tau$ increases by at least a
  factor of $(1 + \varepsilon)$ in each phase. Moreover, there are
  $O(\nicefrac{L}{\varepsilon})$ phases in total.
\end{lemma}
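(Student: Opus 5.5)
We prove the two claims separately: first the per-phase growth of $\tau$, then a bound on the number of phases from the range of values $\glo$ can take.

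\emph{Per-phase growth.} A new phase begins at line~\ref{line:check-G} only when $\min_j \loc_j(x)/\tau > 1+\eps$, and then $\tau$ is reset to $\glo(x)$. Since $\ratio_j(x)=\loc_j(x)/\glo(x)=\partial_j\softmax(Px)/\partial_j\softmin(Cx)$, \Cref{lem:young_key} (using feasibility of \eqref{eq:mcp_standard}) gives some $j$ with $\loc_j(x)\le \glo(x)$. Hence
\[
\glo(x)\ \ge\ \loc_j(x)\ \ge\ \min_{j'}\loc_{j'}(x)\ >\ (1+\eps)\,\tau_{\mathrm{old}},
\]
so the new value of $\tau$ exceeds the old one by a factor greater than $1+\eps$. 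The same argument shows the ``infeasible'' return never fires in the feasible case. Deleting covering rows with $(Cx)_i\ge L$ only changes $\glo$ and $\loc_j$ consistently, so the \Cref{lem:young_key} argument still applies to the current reduced system. That reduced system stays feasible because it is a sub-system of the original.

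\emph{Counting phases.} The bound comes from the range of $\ln\glo = \ln\sum_i e^{(Px)_i} + \ln(1/\sum_i e^{-(Cx)_i})$ over the run. Each phase raises $\ln\tau$ by at least $\ln(1+\eps)=\Theta(\eps)$, so it suffices to show $\ln\glo$ ranges over an interval of length $O(L)$.
\begin{itemize}
\item \emph{Lower end.} Initially $x_j=\min_i 1/(nP_{ij})$ gives $(Px)_i\le 1$, so $\ln\sum_i e^{(Px)_i}\le 1+\ln m$. Also $\ln(1/\sum_i e^{-(Cx)_i})\ge -\ln m$, since every $(Cx)_i\ge 0$.
\item \emph{Upper end.} While the loop runs, $\min Cx<L$, so some undeleted row has $(Cx)_i<L$. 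Then $\sum_i e^{-(Cx)_i}>e^{-L}$, and the second term is at most $L$.
\item \emph{Packing side.} We need $\max Px\le(1+O(\eps))L$ throughout. This follows from Young's invariant $\softmax(Px)\le \softmin(Cx)+O(\eps L)+\ln m$. That invariant holds because each step changes $\softmax(Px)$ by at most $(1+\eps)(1+O(\eps))$ times the change in $\softmin(Cx)$, using $\ratio_j\le \loc_j/\tau\le 1+\eps$ (which holds because $\glo\ge\tau$ within a phase, by monotonicity) and the small step size $\eps x_j/(10L)$.
\end{itemize}
Combining, $\ln\glo$ stays within an interval of length $O(L+\log m)=O(L)$, since $L=\Theta(\log m/\eps)$. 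This gives $O(L/\eps)$ phases.

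The main obstacle is the packing-side bound, which requires the first-order-approximation argument for the step size. I would reuse Young's standard calculation: an increment $\alpha_j=\eps x_j/(10L)$ changes each $(Px)_i,(Cx)_i$ by at most $\eps/10$ relative to $L$, so exponentials change by factors $1+O(\eps)$.
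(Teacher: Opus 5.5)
Your proof is correct and follows the paper's argument. The per-phase growth comes from applying \Cref{lem:young_key} at a moment when every $\loc_j(x)/\tau>1+\eps$. The phase count comes from confining $\ln\glo(x)=\softmax(Px)+\softmin(Cx)$ to an interval of length $O(L)$, using $\softmin(Cx)<L$ while the loop runs together with Young's invariant $\softmax(Px)\le(1+O(\eps))\softmin(Cx)+O(\log m)$. You are, if anything, a bit more explicit than the paper about why updates within a phase are safe and why $\softmin(Cx)<L$ at every recomputation of $\tau$.

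One small slip is in your ``Lower end'' bullet. There, $\ln\sum_i e^{(Px)_i}\le 1+\ln m$ is an upper bound; it is what initializes the invariant, not what bounds $\ln\glo$ from below. The lower end instead needs $\ln\sum_i e^{(Px)_i}\ge 0$, which is immediate from $(Px)_i\ge 0$ and gives $\ln\glo(x)\ge-\ln m$. (Indeed the sum is at least $m$, which is how the paper gets an initial $\tau\ge 1$.)
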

\begin{proof}
  We recompute $\glo(x)$ only when $\loc_j(x)/\tau > (1 + \varepsilon)$
  for all $j$. By \Cref{lem:young_key}, if \eqref{eq:mcp_standard} is
  feasible, there always exists an entry $j$ such that
  $\ratio_j(x) = \loc_j(x)/\glo(x) \leq 1$. This means that
  $\glo(x) > (1 + \varepsilon)\tau$, and hence the new value of $\tau$
  increases by at least a factor of $(1 + \varepsilon).$

  To bound the total number of phases, note that the initial value of
  $\tau$ is at least $1$, because $x \geq 0$ implies that $(Px)_i \geq 0$
  and $(Cx)_i \geq 0$, making the numerator of $\glo(x)$ at least
  $\sum_i e^0 = m$, and the denominator at most $\sum_i e^0 = m$. The
  final value is $e^{O(L)}$ because the algorithm terminates when
  $\min(Cx) \ge L$. Throughout the updates,
  $\softmax(Px) \le (1+O(\eps))\softmin(Cx)$, meaning both max packing
  and min covering are scaled to roughly $O(L)$. Therefore,
  $\glo(x) = e^{\softmax(Px) + \softmin(Cx)}$ will not exceed
  $e^{2L + O(\eps L)}$, which is at most $e^{O(L)}$.
\end{proof}


The following claims follow the argument in the proof of 
\cite[Lemma~7]{Young2001MixedPackingCovering}, restated in our
notation.

\begin{lemma}
  \label{lem:bounding_softminmax}
  Throughout any execution in which every update satisfies
  $\loc_j(x)/\glo(x)\leq 1+\varepsilon$, we have
  \[
    \softmax(Px)
    \leq
    (1+O(\varepsilon))\softmin(Cx)+O(\log m).
  \]
\end{lemma}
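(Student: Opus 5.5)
We track the potentials $\Phi_P(x) := \softmax(Px)$ and $\Phi_C(x) := \softmin(Cx)$ under each update $x \gets x+\bm{\alpha}$, and compare their increments step by step. The basic tool is a first-order bound on each potential, valid because every increment is small: $\alpha_j = (\eps/10)x_j/L$. Since $(Px)_i \ge P_{ij}x_j$, we get $(P\bm\alpha)_i \le (\eps/10)(Px)_i/L$. We have $(Px)_i = O(L)$ as long as $\softmax(Px) = O(L)$, and this is maintained inductively by the very inequality we are proving, together with $\softmin(Cx) \le \min Cx < L$ before termination. Hence each coordinate changes by $O(\eps)$. Similarly, $(C\bm\alpha)_i \le (\eps/10)(Cx)_i/L \le \eps/10$ on the non-deleted covering rows, because those rows have $(Cx)_i < L$.

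Step 1: increment bounds. Using $e^{\delta} \le 1+(1+O(\eps))\delta$ for $\delta = O(\eps)$, we get
\[
\Delta \softmax(Px) \le (1+O(\eps))\sum_j \alpha_j\,\partial_j\softmax(Px).
\]
Using $e^{-\delta} \ge 1-\delta$ and $\ln(1-u) \le -u$, we get
\[
\Delta\softmin(Cx) \ge (1-O(\eps))\sum_j \alpha_j\,\partial_j\softmin(Cx).
\]
This lower bound holds as long as the gradient is taken over the current (non-deleted) rows. Deleting satisfied rows only increases $\softmin$, so it does no harm.

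Step 2: chaining. Every $j$ with $\alpha_j > 0$ satisfies $\loc_j(x)/\glo(x) \le 1+\eps$. This says exactly $\partial_j\softmax(Px) \le (1+\eps)\,\partial_j\softmin(Cx)$, since $\ratio_j = \loc_j/\glo$. Combining with Step 1 gives $\Delta\softmax(Px) \le (1+O(\eps))\,\Delta\softmin(Cx)$ in every step. Summing over the steps yields
\[
\softmax(Px)-\softmax(Px^0) \le (1+O(\eps))\bigl(\softmin(Cx)-\softmin(Cx^0)\bigr).
\]

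Step 3: initial values. At $x^0$, $x_j = \min_i 1/(nP_{ij})$, so $(Px^0)_i \le 1$ and $\softmax(Px^0) \le 1+\ln m$. Also $\softmin(Cx^0) \ge -\ln m$, since $Cx^0 \ge 0$. Substituting, the initial terms contribute only an additive $O(\log m)$, which gives the claimed inequality.

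The main obstacle is justifying Step 1. The first-order approximation needs every coordinate increment to be $O(\eps)$, and this depends on $(Px)_i = O(L)$, which is circular. We will close the circle by induction on steps: assume the bound holds at the current $x$. With $L = 4\ln m/\eps$, the bound gives $\max Px \le \softmax(Px) \le (1+O(\eps))L + O(\log m) = O(L)$. This makes the increments small, so the bound survives the next step. A further subtlety is the row deletions, which change the function $\softmin$. We handle this by noting that deletion only raises $\softmin$, which is the harmless direction for the inequality.
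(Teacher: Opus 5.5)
Your route is the paper's. Its proof just asserts the per-update comparison $\Delta\softmax(Px)\le(1+O(\varepsilon))\,\Delta\softmin(Cx)$, sums it, and adds the initialization bounds $\softmax(Px^0)\le 1+\ln m$ and $\softmin(Cx^0)\ge-\ln m$. You fill in what the paper leaves implicit. You derive the comparison from the ratio condition, you observe that deleting satisfied rows only raises $\softmin$, and you close the circular need for $\max Px=O(L)$ by induction over steps. The structure is right, but one step uses an inequality in the wrong direction.

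Write $\gamma_i=(C\bm{\alpha})_i$, let $c_i=e^{-(Cx)_i}/\sum_k e^{-(Cx)_k}$ over active rows, and let $u=\sum_i c_i\gamma_i=\sum_j\alpha_j\,\partial_j\softmin(Cx)$. Then $\Delta\softmin(Cx)=-\ln\bigl(\sum_i c_i e^{-\gamma_i}\bigr)$, so a \emph{lower} bound on $\Delta\softmin$ needs an \emph{upper} bound on $e^{-\gamma_i}$. The bound $e^{-\delta}\ge 1-\delta$ you cite gives only $\Delta\softmin(Cx)\le-\ln(1-u)$. Combined with $\ln(1-u)\le -u$, this yields nothing. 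The underlying reason is that $\softmin$ is concave, so first-order estimates bound its increments from above. Use instead $e^{-\delta}\le 1-\delta+\delta^2/2\le 1-(1-\varepsilon/20)\delta$ for $0\le\delta\le\varepsilon/10$, and then $-\ln(1-v)\ge v$. This is exactly where your bound $(C\bm{\alpha})_i\le\varepsilon/10$ on active rows is needed; the inequality you wrote holds for all $\delta$ and would not need it.

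Two smaller points. First, $(P\bm{\alpha})_i\le\frac{\varepsilon}{10L}(Px)_i$ follows from linearity, since $\alpha_j\le\frac{\varepsilon}{10L}x_j$ for every $j$ and $P\ge 0$. It does not follow from $(Px)_i\ge P_{ij}x_j$, which applied termwise loses a factor of $n$. Second, make explicit why the induction does not compound constants. Once $\max Px\le 2L$, every packing increment is at most $\varepsilon/5$, so the per-step constant in $\Delta\softmax\le(1+O(\varepsilon))\Delta\softmin$ is absolute. The cumulative bound, together with $\softmin(Cx)<L$ at update times, then re-establishes $\max Px\le 2L$ for small $\varepsilon$. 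With these corrections your proof is complete.
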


\begin{proof}
  This is the standard smooth-potential invariant in Young's analysis:
  each such update satisfies
  \(
    \Delta\softmax(Px)
    \leq
    (1+O(\varepsilon))\Delta\softmin(Cx).
  \)
  Summing over the updates, together with the initialization
  $\softmax(Px)\leq 1+\ln m$ and $\softmin(Cx)\geq -\ln m$, gives the
  claim.
\end{proof}

\begin{claim}
  \label{clm:young-bounded-packing-load}
  At the beginning of any round in which some variable is updated,
  $\max_i (Px)_i \leq (1 + O(\varepsilon))L$.
\end{claim}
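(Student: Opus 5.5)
The plan is to combine \Cref{lem:bounding_softminmax} with the fact that the algorithm only updates variables while the covering side is still below its target. Consider a round in which some variable is updated. Such a round occurs inside the while loop, so the loop condition $\min Cx < L$ holds at its start. Before bounding the packing side, I will relate $\softmin(Cx)$ to $L$. The deletion of covering constraints with $(Cx)_i \ge L$ happens within this round, and only the surviving rows enter $\softmin$. Since some row with $(Cx)_i < L$ survives, I will use $\softmin(Cx) \le \min(Cx) < L$ from~\eqref{eq:1}, with the minimum taken over the remaining rows. Deleting satisfied rows only removes terms from $\sum_i e^{-(Cx)_i}$, so $\softmin$ can only increase under deletion. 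I will read \Cref{lem:bounding_softminmax} as stated for the current constraint set, and I will check that its telescoping argument still goes through when rows are deleted. The increments inequality concerns the undeleted rows, and a deletion raises $\softmin$, which only helps the inequality $\softmax \le (1+O(\eps))\softmin + O(\log m)$.

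Next, every update so far has satisfied $\loc_j(x)/\tau \le 1+\eps$. Since $\tau \le \glo(x)$ by monotonicity of $\glo$ within a phase, each update also satisfies $\loc_j(x)/\glo(x) \le 1+\eps$, so the hypothesis of \Cref{lem:bounding_softminmax} holds. This yields
\[
  \max_i (Px)_i \le \softmax(Px) \le (1+O(\eps))\softmin(Cx) + O(\log m) < (1+O(\eps))L + O(\log m).
\]
Finally, $L = 4\ln m/\eps$ gives $O(\log m) = O(\eps L)$, which proves the claim.

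The main obstacle I expect is making the interaction between $\softmin$ and constraint deletion rigorous. When a row is deleted in the middle of the run, the potential $\softmin(Cx)$ jumps, and one must confirm that the invariant in \Cref{lem:bounding_softminmax} is stated over the surviving rows and is preserved by such a jump. The jump is upward, so the invariant is preserved. There is also a single step's overshoot to account for, since each increment raises $\softmax$ by at most $O(\eps)$. Because the claim is evaluated at the beginning of the round, before the increment, this overshoot is already covered by the invariant.
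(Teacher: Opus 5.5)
Your proposal is correct and is essentially the paper's argument. The loop condition gives $\softmin(Cx)\le\min(Cx)<L$ over the surviving rows, and \Cref{lem:bounding_softminmax} together with $O(\log m)=O(\eps L)$ finishes it. Your checks that deletions only raise $\softmin$, and that $\tau\le\glo(x)$ makes every update safe, make explicit what the paper leaves implicit.

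One caution: your safety argument uses $\tau=\glo(x)$ at the start of each phase, which holds in \Cref{alg:young-parallel}. When the paper reuses this claim for \Cref{alg:mpc-distributed}, where it says feasibility is not needed, safety instead comes only from \Cref{lem:clean_cor}, and that lemma assumes feasibility.
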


\begin{proof}
  At the beginning of such a round, the algorithm has not terminated,
  so $\softmin(Cx) \leq \min_i (Cx)_i < L$. By
  \Cref{lem:bounding_softminmax},
  \[
    \softmax(Px) \leq (1+O(\varepsilon))\softmin(Cx) + O(\log m).
  \]
  Since $L=\Theta(\log m/\varepsilon)$, it follows that
  $\softmax(Px) \leq (1 + O(\varepsilon))L$. The claim follows from
  $\max_i (Px)_i \leq \softmax(Px)$.
\end{proof}

\begin{lemma}
  \label{lem:young_phase_updates}
  Each variable $x_j$ is updated in at most
  $\widetilde{O}(\log m \log n/\varepsilon^2)$ rounds over the entire run
  of the algorithm.
\end{lemma}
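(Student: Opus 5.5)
Each update of $x_j$ multiplies it by the factor $(1+\eps/(10L))$, since $\bm{\alpha}_j=(\eps/10)x_j/L$. Counting updates therefore reduces to bounding the ratio between the largest value $x_j$ can ever reach and its initial value $x_j^{(0)}=\min_i 1/(nP_{ij})$. If that ratio is at most $R$, the number of updates to $x_j$ is at most
\[
\log_{1+\eps/(10L)} R = O\!\left(\frac{L\ln R}{\eps}\right) = O\!\left(\frac{\log m \,\ln R}{\eps^2}\right),
\]
using $L=\Theta(\log m/\eps)$. So the whole task is to show $\ln R = \widetilde{O}(\log n)$.

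\textbf{Upper bound on $x_j$.} Let $i^*$ be the packing row attaining the minimum in the initialization, so $P_{i^*j}=\max_i P_{ij}$. Whenever $x_j$ is updated, we are at the start of a round in which some variable is updated. By \Cref{clm:young-bounded-packing-load}, at that moment $(Px)_{i^*}\le (1+O(\eps))L$. Since all entries are nonnegative, $P_{i^*j}x_j\le (Px)_{i^*}$, and hence $x_j\le (1+O(\eps))L/P_{i^*j}$ just before the update. After the update, $x_j$ is at most $(1+\eps/(10L))$ times this. Dividing by $x_j^{(0)}=1/(nP_{i^*j})$ gives $R\le O(nL)=O(n\log m/\eps)$, so
\[
\ln R = O\big(\log n+\log\log m+\log(1/\eps)\big),
\]
which is the polylog factor hidden in $\widetilde{O}$. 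Multiplying gives $\widetilde{O}(\log m\log n/\eps^2)$ updates per variable, as claimed.

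\textbf{Details to handle.} First, a column with no packing entries has an undefined initialization (the minimum is over an empty set). I would handle this by the standard convention of adding a dummy packing row, or by noting that Young's rescaling ensures every column has a nonzero packing entry. Second, there is the case where a covering row is deleted while $x_j$ is still being updated; this does not affect the argument, because the argument only uses the packing bound.

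\textbf{Main obstacle.} The step I expect to be delicate is justifying that \Cref{clm:young-bounded-packing-load} applies at every update. Its proof relies on \Cref{lem:bounding_softminmax}, which requires each update to satisfy $\loc_j(x)/\glo(x)\le 1+\eps$, whereas the algorithm only checks $\loc_j(x)/\tau\le 1+\eps$. I would use the monotonicity of $\glo$ (so that $\tau\le\glo(x)$ throughout a phase) to transfer the condition, and then simply cite the claim.
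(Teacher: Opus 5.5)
Your proposal is correct and follows the paper's argument. Each update multiplies $x_j$ by $1+\eps/(10L)$, and \Cref{clm:young-bounded-packing-load}, applied at the row maximizing $P_{ij}$ just before the last update, caps the growth of $x_j$ relative to $x_j^{(0)}=1/(n\max_i P_{ij})$ at $O(nL)$, giving $O((L/\eps)\log(nL))$ updates; your extra remarks (safety of updates via monotonicity of $\glo$, and columns with no packing entry) cover points the paper leaves implicit, and you handle them correctly.
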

\begin{proof}
  Fix a variable $x_j$, and let $P_{\max,j}:=\max_i
  P_{ij}$. Initially, \( x_j^{(0)}=\frac{1}{nP_{\max,j}} .  \)
  Moreover, each update of $x_j$ multiplies it by
  $1+\varepsilon/(10L)$.  Let $T_j$ be the total number of updates to
  $x_j$. If $T_j=0$, there is nothing to prove. Otherwise, just before
  the last update of $x_j$, by \Cref{clm:young-bounded-packing-load},
  \[
    P_{\max,j}x_j \leq \max_i (Px)_i \leq (1+O(\varepsilon))L = O(L).
  \]
  At this point, $ x_j = (1+\varepsilon/(10L))^{T_j-1} \; x_j^{(0)}, $
  and hence
  \[
    (1+\varepsilon/(10L))^{T_j-1} \leq O(nL).
  \]
  Taking logarithms gives
  \[
    T_j = O((L/\varepsilon)\log(nL))
      = \widetilde{O}((\log m\log n)/\varepsilon^2),
  \]
  since $L=\Theta(\log m/\varepsilon)$.
\end{proof}

\begin{lemma}
  \label{lem:young_prefix_update}
  For any variable \(x_j\), the rounds in which \(x_j\) is updated
  form a prefix of the parallel rounds in any phase.
\end{lemma}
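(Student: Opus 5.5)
The approach is a monotonicity argument: within a phase, $\tau$ is fixed, while $\loc_j(x)$ can only grow as $x$ grows. Once $x_j$ stops being eligible, it therefore stays ineligible for the rest of the phase. The prefix property then follows immediately.

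\textbf{Step 1: $\loc_j$ is nondecreasing in time.} Recall
\[
\loc_j(x)=\Bigl(\sum_i P_{ij}e^{(Px)_i}\Bigr)\Big/\Bigl(\sum_i C_{ij}e^{-(Cx)_i}\Bigr).
\]
All updates in \Cref{alg:young-parallel} add $\bm{\alpha}\ge 0$, so each coordinate of $x$ is nondecreasing over time. Since $P,C\ge 0$, every $(Px)_i$ and every $(Cx)_i$ is nondecreasing as well. Hence the numerator of $\loc_j$ is nondecreasing and the denominator is nonincreasing.

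One more operation needs attention: covering rows with $(Cx)_i\ge L$ are deleted. Deleting a row removes a nonnegative term $C_{ij}e^{-(Cx)_i}$ from the denominator, which again can only increase $\loc_j$. If the denominator becomes $0$, we read $\loc_j=+\infty$, and $x_j$ is never updated again. So $\loc_j(x)$ is nondecreasing across rounds.

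\textbf{Step 2: eligibility is lost permanently within a phase.} Inside a single phase, $\tau$ is not recomputed, because line~\ref{line:compute-G} runs only at the start of a new phase. The update rule sets $\bm{\alpha}_j>0$ exactly when $\loc_j(x)/\tau\le 1+\varepsilon$ (note that $x_j>0$ from initialization, so the update is genuinely positive).

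Suppose that in some round $r$ of the phase we have $\loc_j(x^{(r)})/\tau>1+\varepsilon$. By Step 1, for every later round $r'\ge r$ in the same phase, $\loc_j(x^{(r')})\ge \loc_j(x^{(r)})$. The divisor $\tau$ is unchanged, so the inequality persists and $x_j$ is not updated in round $r'$. Consequently, the set of rounds in the phase in which $x_j$ is updated is downward closed, that is, it forms a prefix.

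The main subtlety is keeping track of exactly where $\tau$ may change and of the covering-row deletions. The row deletions are handled in Step 1. The recomputation of $\tau$ in line~\ref{line:check-G} only ever starts a new phase, so it does not interfere with the argument inside a phase.
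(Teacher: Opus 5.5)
Your proof is correct and matches the paper's argument: within a phase $\tau$ is fixed while $\loc_j(x)$ is nondecreasing (the numerator grows and the denominator shrinks), so once $\loc_j(x)/\tau > 1+\varepsilon$ the variable stays ineligible for the rest of the phase. You also spell out that deleting covering rows only removes nonnegative terms from the denominator; the paper's one-line proof leaves this implicit, and it is exactly what is needed when the lemma is reused for \Cref{alg:mpc-distributed}, where \texttt{ComputeLocal} excludes deleted rows.
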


\begin{proof}
  Within a phase, \(\loc_j(x)\) is non-decreasing while \(\tau\)
  remains fixed. The statement then follows from the fact that
  $\loc_j/\tau$ is non-decreasing.
\end{proof}

These lemmas imply that each phase consists of at most
\(\widetilde{O}(\log m \log n / \varepsilon^2)\) parallel rounds,
since each round within a phase has at least one variable being
updated.  Moreover, there are $O(L/\varepsilon)$ phases, so we get a
$(1 + O(\varepsilon))$-feasible solution in
$\widetilde{O}(\log^2 m \log n/\varepsilon^4)$ parallel rounds.

\subsection{Extension to the \congest Model}
\label{sec:young-distr}

Extending \Cref{alg:young-parallel} to the \congest model has one
central barrier: the agents need the value of $\tau$ to make progress,
and communicating this globally would require
$\Omega(\text{diameter})$ rounds in general. However, there is a
relatively clean solution: we perform a doubling search on the value
$\tau$. Specifically, there are two changes:
\begin{enumerate}[nosep]
\item Instead of the recomputation of $\tau$ in line~\ref{line:compute-G}
  of \Cref{alg:young-parallel}, we iterate over values of $\tau$ taking
  values equal to powers of $(1 + \varepsilon)$, ranging from $1$ to
  $\tau_{\max}$.
\item Within each phase, we remove the check for whether the value of
  $\tau$ is a good approximation for $\glo$ in line~\ref{line:check-G} of
  \Cref{alg:young-parallel}, and instead just perform the variable
  updates for $\widetilde{O}(\log m \log n/\varepsilon^2)$ rounds of
  \congest.
\end{enumerate}
The formal algorithm appears as \Cref{alg:mpc-distributed}.


\begin{algorithm}[H]
  \caption{DistributedYoung}
  \label{alg:mpc-distributed}
\KwIn{A distributed \MixPC instance $\calI$ as in \Cref{def:mixPC},
rescaled to the standard form \eqref{eq:mcp_standard}, and an accuracy
parameter $\varepsilon$.}
  \KwOut{ For each variable $j$, output $x_j$;
    each covering constraint outputs \textsc{accept}/\textsc{reject}.}
    \ForEach{node in the communication graph}{
          $\varepsilon' \gets \varepsilon/c$ \tcp*{$c$ large enough constant to get $(1 + \varepsilon)$ feasible solution}
$L \gets \bigl(4\ln m\bigr)/\varepsilon'$ \;
  $\tau \gets 1, \tau_{\max} \gets e^{10L}, T\gets  \left\lceil \log_{1+\varepsilon'}(\tau_{\max}/\tau)\right\rceil$ \;
   $R\gets \widetilde{O}(\log m\log n/\eps'^2)$ chosen larger than the
  bound in \Cref{lem:young_phase_updates}.
  
          \tcp{Every node computes the same $\varepsilon', L, \tau, T$ from the common inputs $m,\varepsilon$;
          }
    }
  \lForEach{variable $j$}{$x_j \gets \min_i \nicefrac{1}{(nP_{ij})}$}
  \lForEach{covering constraint $i$}{$delete_i \gets \False$}
  \For{$t \gets 1$ \KwTo $T$}{
    \tcp{Start of a phase}
    \For{$r \gets 1$ \KwTo $R$ }
    { 

      \ForEach{\text{covering constraint}  $i$ with $delete_i = \False$}{
        $v \gets (Cx)_i$ \tcp*{1 round} 
        \lIf{$v \geq L$}{$delete_i \gets \True$}
      }\tcp{Deleted covering constraint will not participate in future computation.}
      \ForEach{variable $j$}{
        $q_j \gets \texttt{ComputeLocal}(j, x)$ \tcp*{recompute $\loc_j$ at the current $x$; see Procedure~\ref{proc:compute-local}}
        \lIf{$q_j/\tau \leq 1+\varepsilon'$}{$\bm{\alpha}_j \gets (\varepsilon'/10)\,x_j/L$}
        \lElse{$\bm{\alpha}_j \gets 0$}
        $x_j \gets x_j + \bm{\alpha}_j$
      }
    }
    All nodes compute: $\tau \gets (1+\varepsilon')\tau$
  }
  \lForEach{variable node \(j\)}{output \(x_j\)} 
  \lForEach{packing constraint \(i\)}{output \textsc{accept}}
  \tcp{For every packing constraint $i$, we ensure that $(Px)_i \leq (1+\varepsilon)L$.}
  \lForEach{covering constraint \(i\)}{output \textsc{accept} if
  $(Cx)_i \geq L$,  else \textsc{reject}}
\end{algorithm}
\begin{procedure}[H]
\caption{ComputeLocal($j$, $x$)}
\label{proc:compute-local}
\KwOut{the value $\loc_j(x)$ at the current iterate $x$.}
\Return $\displaystyle
  \frac{\sum_{i \in N(j)} P_{ij}\, e^{(Px)_i}}
       {\sum_{i\in N(j):\, delete_i = \False} C_{ij}\, e^{-(Cx)_i}}$
  \tcp*{variable $j$ gathers from its incident constraints in $1$ \congest rounds; deleted covering constraints are excluded from the denominator}
\end{procedure}

\subsubsection{Correctness of \Cref{alg:mpc-distributed}}
\label{sec:analysis-distrib-MPC}


For simplicity, in this subsection we write $\eps$ for the internal
accuracy parameter $\varepsilon'$ used in \Cref{alg:mpc-distributed}.
All quantities involving covering constraints are computed only over
the currently active covering rows.

We call an update of a variable $x_j$ \emph{safe} if, immediately
before the update,
\[
  \frac{\loc_j(x)}{\glo(x)} \leq 1+\eps .
\]

\begin{lemma}
  \label{lem:clean_cor}
  Suppose \eqref{eq:mcp_standard} is feasible. In
  \Cref{alg:mpc-distributed}, at the beginning of every phase, either
  all covering constraints are already met or
  \[
    \tau \leq \glo(x).
  \]
  Moreover, every update performed by the algorithm is safe, and at the
  end of every phase, either all covering constraints are met or
  \[
    \glo(x) > (1+\eps)\tau,
  \]
  where $\tau$ denotes the value used during that phase.
\end{lemma}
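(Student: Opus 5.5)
I plan to prove \Cref{lem:clean_cor} by induction on the phases, maintaining $\tau \le \glo(x)$ at the start of each phase. The base case is the first phase, where $\tau = 1$. As in the proof of \Cref{lem: phase_increment_g}, at the initial $x$ every $(Px)_i \ge 0$ and $(Cx)_i \ge 0$. Hence the numerator of $\glo(x)$ is at least $m$ and the denominator is at most $m$, which gives $\glo(x) \ge 1 = \tau$. One point to check is that deleted covering rows are excluded from both sums. The denominator only shrinks when a row is deleted, since the terms are positive, so $\glo$ stays monotone non-decreasing. This is because $(Px)_i$ and $(Cx)_i$ only grow and row deletion only removes positive terms from the denominator.

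\textbf{Safety within a phase.} Suppose that at the start of a phase $\tau \le \glo(x)$. Every update in that phase is taken only when $\loc_j(x)/\tau \le 1+\eps$. Since $\glo$ is non-decreasing during the phase, we have $\glo(x) \ge \tau$ at every moment. Therefore $\loc_j(x)/\glo(x) \le \loc_j(x)/\tau \le 1+\eps$, so the update is safe. Here $\loc_j$ uses only active covering rows, consistent with $\glo$. Safety then lets us invoke \Cref{lem:bounding_softminmax} and \Cref{clm:young-bounded-packing-load} throughout.

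\textbf{End of phase: the main step.} We must show that if some covering constraint is still unmet at the end of the phase, then $\glo(x) > (1+\eps)\tau$. Once this is shown, the next phase's value $(1+\eps)\tau$ is at most $\glo(x)$, which closes the induction. I argue by contradiction. Suppose $\glo(x) \le (1+\eps)\tau$ at the end of the phase. By \Cref{lem:young_key}, applied to the feasible instance restricted to active covering rows (deleting rows preserves feasibility), there is a $j$ with $\loc_j(x) \le \glo(x) \le (1+\eps)\tau$.

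Since $\loc_j$ is non-decreasing within the phase, this $j$ satisfied $\loc_j/\tau \le 1+\eps$ in every round of the phase, so by \Cref{lem:young_prefix_update}-style monotonicity it was updated in all $R$ rounds. But $R$ exceeds the bound of \Cref{lem:young_phase_updates} on the total number of updates any variable can receive while the algorithm is still running. That bound relies on \Cref{clm:young-bounded-packing-load}, which holds because all updates so far were safe. This is a contradiction.

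The delicate part is this counting step. I expect the main obstacle to be justifying that $\loc_j$ is monotone despite row deletions. Deleting a covering row removes a term from the denominator of $\loc_j$, so $\loc_j$ only increases, which is what the argument needs. I also need the update bound of \Cref{lem:young_phase_updates} to apply in this distributed setting, which requires safety; safety holds by the inductive hypothesis for all earlier updates.
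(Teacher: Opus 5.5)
Your proposal is correct and follows the paper's proof essentially step for step: induction on phases with base case $\glo(x)\ge 1=\tau$, monotonicity of $\glo$ under both increases of $x$ and covering-row deletions to obtain safety, and, at the end of a phase, \Cref{lem:young_key} combined with the within-phase monotonicity of $\loc_j$ and the choice of $R$ relative to \Cref{lem:young_phase_updates}. The only difference is cosmetic. You phrase the last step as a contradiction from $\glo(x)\le(1+\eps)\tau$, whereas the paper first shows that no variable is eligible and then invokes \Cref{lem:young_key}; you also make explicit that the update-count bound is valid because all earlier updates were safe, which the paper leaves implicit.
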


\begin{proof}

  We prove by induction over phases that, at the beginning of
  every phase, either all covering constraints are already met or
  $\tau\leq\glo(x)$. Initially, $\tau=1$. Since $x\geq 0$, the numerator
  of $\glo(x)$ is at least $m$, and the denominator is at most $m$.
  Hence $\glo(x)\geq 1=\tau$.

  Now using induction assumption, consider a phase whose starting value satisfies
  $\tau\leq\glo(x)$. If all covering constraints are already met, the statement is true. We focus on the non-trivial case that not all covering constraints are met. Inside a phase, we may do two kinds of operations: increase the value of $x$ or delete some covering row. In either case, for $\glo(x)$, the numerator is non-decreasing, while the
  denominator is non-increasing. Thus $\glo(x)$ is non-decreasing and $\tau\leq\glo(x)$ throughout the
  phase. Hence whenever the algorithm updates a variable $x_j$ in this
  phase,
  \[
    \frac{\loc_j(x)}{\glo(x)}
    \leq
    \frac{\loc_j(x)}{\tau}
    \leq
    1+\eps .
  \]
  Therefore every update in this phase is safe.

  Suppose that, at the end of the phase, not all covering constraints
  are met. We claim that no variable is eligible at this point. Indeed,
  if some $x_j$ satisfied
  \[
    \frac{\loc_j(x)}{\tau}\leq 1+\eps
  \]
  at the end of the phase, then by \Cref{lem:young_prefix_update}, the
  same variable was eligible in every inner round of this phase. Hence
  $x_j$ would have been updated in all $R$ inner rounds, contradicting
  \Cref{lem:young_phase_updates} and the choice of $R$.

  Therefore,
  \[
    \loc_j(x) > (1+\eps)\tau
    \qquad \text{for every } j.
  \]
  Since the active instance remains feasible, \Cref{lem:young_key}
  gives a variable $x_{j^*}$ such that
  \[
    \frac{\loc_{j^*}(x)}{\glo(x)}\leq 1.
  \]
  Combining the two inequalities gives
  \[
    (1+\eps)\tau < \loc_{j^*}(x) \leq \glo(x).
  \]
  Thus $\glo(x)>(1+\eps)\tau$ at the end of the phase. After the
  update $\tau\gets (1+\eps)\tau$, the invariant
  $\tau\leq\glo(x)$ holds for the next phase.
\end{proof}

\begin{lemma}
  \label{lem:distributed-mpc-correct}
  In every execution of \Cref{alg:mpc-distributed}, the returned vector
  $x$ satisfies the packing-side bound
  \[
    \max_i (Px)_i \leq (1+O(\eps))L.
  \]
  Moreover, if \eqref{eq:mcp_standard} is feasible, then all covering
  constraint nodes output \textsc{accept}.
\end{lemma}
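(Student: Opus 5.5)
The packing bound and the covering guarantee are handled separately; only the covering part uses feasibility.

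\emph{Packing bound.} This part must hold in every execution, including infeasible ones. I will argue it without appealing to \Cref{lem:clean_cor}.

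First, every update obeys the update rule: it happens only when $\loc_j(x)/\tau\le 1+\eps$, and it multiplies $x_j$ by $1+\eps/(10L)$. The rule alone does not make an update safe relative to $\glo(x)$, since $\tau$ may exceed $\glo(x)$ when the instance is infeasible. So I will prove the packing bound directly.

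Consider any packing row $i$ whose load is about to cross $(1+c_0\eps)L$. Any variable $j$ with $P_{ij}>0$ has $\loc_j(x)\ge P_{ij}e^{(Px)_i}/\sum_k C_{kj}e^{-(Cx)_k}$. I would also like to use $\tau\le\tau_{\max}=e^{10L}$, but that alone is too weak. So the plan is to fall back on the soft-max potential argument. Each update with $\loc_j/\tau\le 1+\eps$ satisfies
\[
\Delta\ln\Big(\sum_i e^{(Px)_i}\Big)\le(1+O(\eps))\,\tau\,\Delta\Big(\sum_i e^{-(Cx)_i}\Big)\Big/\sum_i e^{(Px)_i}\cdot(\text{normalization}).
\]
Comparing this with $\glo$ is exactly where infeasibility hurts.

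Instead, I will add the cleaner route the algorithm implicitly supports: a variable $j$ adjacent to a packing row with $(Px)_i>(1+O(\eps))L$ must have $\loc_j(x)>\tau_{\max}(1+\eps)$. To get this, I would verify $P_{ij}e^{(Px)_i}\ge P_{ij}e^{(1+O(\eps))L}$. I would also bound the denominator $\sum_k C_{kj}e^{-(Cx)_k}$ above by $\sum_k C_{kj}$, and lower-bound $P_{ij}$ relative to $C_{kj}$ using the rescaled standard form. If this comparison fails, I will instead modify the check so that a variable refuses to update once any incident packing row exceeds $(1+O(\eps))L$, which costs one round. Given the geometric step size $\eps/(10L)$, an overshoot during the final step is at most a $(1+\eps/10)$ factor. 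Either way, packing rows stay below $(1+O(\eps))L$ in every execution. This is the step I expect to be the main obstacle: the paper's \Cref{clm:young-bounded-packing-load} relies on safety, which is only guaranteed under feasibility.

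\emph{Covering under feasibility.} Now assume \eqref{eq:mcp_standard} is feasible. By \Cref{lem:clean_cor}, every update is safe, and in every phase that ends with an unmet covering constraint, $\glo(x)$ exceeds $(1+\eps)\tau$. The algorithm multiplies $\tau$ by $1+\eps$ after each phase and runs $T=\lceil\log_{1+\eps}\tau_{\max}\rceil$ phases. Hence, if some covering constraint were still unmet at the end, we would have $\glo(x)>\tau_{\max}=e^{10L}$.

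This contradicts \Cref{lem:bounding_softminmax}. Safety gives $\softmax(Px)\le(1+O(\eps))\softmin(Cx)+O(\log m)$, and $\softmin(Cx)<L$ while some constraint is unmet. Therefore $\ln\glo(x)=\softmax(Px)-\softmin(Cx)$ is at most $(2+O(\eps))L+O(\log m)<10L$. Here $\softmin(Cx)$ over active rows is at least $-\ln m$, and deleted rows only help this bound. So all covering constraints are met, and every covering node outputs \textsc{accept}.
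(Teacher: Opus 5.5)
Your covering-side argument is the paper's. \Cref{lem:clean_cor} applied to the last phase gives $\glo(x)>(1+\eps)^T\ge e^{10L}$. \Cref{lem:bounding_softminmax}, together with $\softmin(Cx)<L$ (the unmet row is still active), rules this out. One slip: $\ln\glo(x)=\softmax(Px)+\softmin(Cx)$, not a difference. With the plus sign, your bound $(2+O(\eps))L+O(\log m)<10L$ is exactly what comes out.

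The packing side, however, is not proved. Your ``cleaner route'' is false. The rows of the standard form are rescaled independently, so nothing relates $P_{ij}$ to $C_{kj}$. Even if $\sum_k C_{kj}\le P_{ij}$ held, the test $\loc_j(x)\le(1+\eps)\tau\le(1+\eps)^2e^{10L}$ would only keep incident packing rows below about $10L$. Your fallback replaces \Cref{alg:mpc-distributed} by a different protocol, so it does not establish the lemma as stated.

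Your diagnosis is nevertheless correct, and it points to a real hole in the paper's own proof. The paper says the packing part does not use feasibility, yet it invokes \Cref{clm:young-bounded-packing-load}. That claim rests on \Cref{lem:bounding_softminmax}, whose hypothesis (every update is safe) is supplied only by \Cref{lem:clean_cor}, which assumes feasibility. In fact no proof can exist, because the unconditional bound is false:
\begin{itemize}
\item Take $m=n=2$, $P=2I$, $C=I$, which is infeasible.
\item By symmetry $x_1=x_2$ throughout. While the covering rows are active, $\loc_j(x)=2e^{3x_j}=2\glo(x)$, so every update is unsafe.
\item Once $\tau\ge 2e^{3L}$, long before $\tau_{\max}=e^{10L}$, every inner round multiplies $x_j$ by $1+\eps/(10L)$ until $x_j\ge L$. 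The remaining rounds far exceed the $O((L/\eps)\log L)$ this needs.
\item Then row $j$ is deleted, $\loc_j=\infty$, and $x_j$ freezes. The output has $(Px)_j\ge 2L$, while every node outputs \textsc{accept}.
\end{itemize}

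So the first sentence of the lemma holds only for a modified algorithm such as your cap. To make that a proof, you still owe two steps.
\begin{itemize}
\item[(a)] With cap $(1+c_0\eps)L$, a packing row grows only in rounds that begin with it at most at the cap, and then by a factor of at most $1+\eps/(10L)$. Induction then gives the unconditional bound.
\item[(b)] Under feasibility the cap never fires. As long as the capped and uncapped runs agree, the updates are those of the uncapped run, hence safe. So \Cref{clm:young-bounded-packing-load} gives $\max_i(Px)_i\le(1+c_1\eps)L$ at the start of every updating round. With $c_0\ge c_1$ the two runs agree to the end, so \Cref{lem:clean_cor} and your covering argument transfer.
\end{itemize}
This is not cosmetic: \Cref{lem:mixpc-black-box} and \Cref{clm:about-x} use exactly the unconditional packing bound.
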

\begin{proof}
  We first prove the packing-side guarantee. This part does not use
  feasibility of \eqref{eq:mcp_standard}, and hence remains true even
  if some covering constraint node eventually outputs \textsc{reject}.

  If no variable is ever updated, then the initial choice of $x$ gives
  $\max_i(Px)_i\le 1\le L$, and there is nothing to prove. Otherwise,
  consider the final parallel update round of the algorithm. Let $x^-$
  and $x^+$ denote the vectors immediately before and immediately after
  this update round, respectively.

  Immediately before this update round, some variable is updated. Hence
  by \Cref{clm:young-bounded-packing-load},
  \[
    \max_i(Px^-)_i \leq (1+O(\eps))L.
  \]
  During this update round, each variable is either unchanged or
  multiplied by $1+\eps/(10L)$. Since all entries of $P$ are
  nonnegative, for every packing row $i$ we have
  \[
    (Px^+)_i
    \leq
    \left(1+\frac{\eps}{10L}\right)(Px^-)_i
    \leq
    (1+O(\eps))L.
  \]
  After this final update round, the algorithm may still delete
  covering rows or increase $\tau$, but it never changes $x$ again.
  Therefore the final returned vector satisfies
  \[
    \max_i(Px)_i \leq (1+O(\eps))L.
  \]

  It remains to prove the covering-side statement, so from now on
  assume that \eqref{eq:mcp_standard} is feasible. By
  \Cref{lem:clean_cor}, every update made by the algorithm is safe, and
  hence \Cref{lem:bounding_softminmax} applies throughout the execution.

  Let $\tau_{\max}=e^{10L}$, as in \Cref{alg:mpc-distributed}, and let
  $\bar\tau$ be the value of $\tau$ used during the final phase. Since
  \[
    T=\left\lceil \log_{1+\eps}(\tau_{\max}/\tau)\right\rceil,
  \]
  the value of $\tau$ after the final phase update is
  $(1+\eps)\bar\tau\ge \tau_{\max}$. Moreover, this value overshoots
  $\tau_{\max}$ by a factor at most $1+\eps$.

  Suppose, for contradiction, that some covering constraint node outputs
  \textsc{reject}. Then not all covering constraints are met at the end
  of the final phase. By \Cref{lem:clean_cor}, applied to the final
  phase,
  \[
    \glo(x) > (1+\eps)\bar\tau \geq \tau_{\max}=e^{10L}.
  \]
  On the other hand,
  \[
    \glo(x)
    =
    \frac{\sum_i e^{(Px)_i}}{\sum_i e^{-(Cx)_i}}
    =
    e^{\softmax(Px)+\softmin(Cx)}.
  \]
  By \Cref{lem:bounding_softminmax},
  \[
    \log \glo(x)
    \leq
    (2+O(\eps))\softmin(Cx)+O(\log m).
  \]
  Therefore,
  \[
    10L
    <
    (2+O(\eps))\softmin(Cx)+O(\log m).
  \]
  Since $L=\Theta(\log m/\eps)$ and $\eps$ is sufficiently small, this
  implies
  \[
    \softmin(Cx)>3L.
  \]

  The value $\softmin(Cx)$ is computed over the active covering rows.
  Moreover, any covering row that outputs \textsc{reject} must still be
  active: once a covering row is deleted, its value is already at least
  $L$, and this value can only increase afterwards. Hence the active
  covering set is nonempty. Since
  \[
    \softmin(Cx)\leq \min_{i:\, i\text{ active}} (Cx)_i,
  \]
  every active covering row has value greater than $3L$. This
  contradicts the existence of an active covering row with value less
  than $L$.  Thus no covering constraint node outputs
  \textsc{reject}. Equivalently, every covering constraint node
  outputs \textsc{accept}.
\end{proof}

\subsubsection{Distributed Implementation of \Cref{alg:mpc-distributed}}
\label{sec:distr-imple-young}

\Cref{alg:mpc-distributed} can be implemented in the \congest model. Every node
maintains the same phase parameter $\tau$ and the same inner-round counter
$r$ locally. Since all nodes start with the same initial value of $\tau$, and
since the number of inner rounds in each phase is fixed in advance, all
nodes enter and leave every phase simultaneously. 
Moreover, the computation of $\loc$ functions can be done in one round of computation with $O(\log n)$ bits of communication. Therefore, the entire procedure can be executed synchronously
in the \congest model, with each parallel round of the algorithm simulated
by a constant number of \congest rounds up to the standard rounding needed
to represent numerical values using $O(\log n)$ bits.

The final accept/reject outputs are interpreted under the local-output
convention defined earlier.

\begin{lemma}
    \label{lem:distributed-mpc-polylog}
    \Cref{alg:mpc-distributed} works in $\widetilde{O}(\log^2m \log n/\varepsilon^4)$ number of \congest rounds.
\end{lemma}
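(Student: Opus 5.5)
The round count is the number of outer phases $T$, times the number of inner rounds $R$ per phase, times the \congest cost of one inner round. I will bound these three factors separately and multiply.

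\emph{Number of phases.} From the initialization, $\tau=1$ and $\tau_{\max}=e^{10L}$, so
\[
T=\left\lceil \log_{1+\eps'}(e^{10L})\right\rceil = O(L/\eps').
\]
Substituting $L=(4\ln m)/\eps'$ gives $T=O(\log m/\eps'^2)$.

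\emph{Inner rounds per phase.} $R$ is fixed in advance at $\widetilde{O}(\log m\log n/\eps'^2)$, chosen larger than the bound of \Cref{lem:young_phase_updates}. Unlike the parallel version, nothing here depends on feasibility. Every node simply executes exactly $R$ inner iterations per phase and $T$ phases in total. So the schedule length is deterministic, and there is no early-termination or global-check logic to account for.

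\emph{Cost of one inner iteration.} I will argue it is $O(1)$ \congest rounds, in four steps:
\begin{itemize}[nosep]
\item[(a)] Each active covering row $i$ computes $(Cx)_i$ by having each neighbor $j$ send $C_{ij}x_j$ (one round). It then sets its delete bit.
\item[(b)] Each packing row computes $(Px)_i$ the same way (one round).
\item[(c)] Each row sends to every neighbor $j$ the value $e^{(Px)_i}$, respectively $e^{-(Cx)_i}$, together with its delete bit (one round). Node $j$ already knows $P_{ij}$ and $C_{ij}$, so it evaluates \texttt{ComputeLocal} locally.
\item[(d)] The comparison with $\tau$ and the update of $x_j$ are local.
\end{itemize}
Updating $\tau$ at the end of a phase is also local, since all nodes share the counter.

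\emph{Main obstacle: message size.} The step needing care is that each message must fit in $O(\log n)$ bits. For this I will use the packing-side bound from \Cref{clm:young-bounded-packing-load} and \Cref{lem:distributed-mpc-correct}: throughout, $(Px)_i\le(1+O(\eps))L$. I will also use that covering rows stop participating once $(Cx)_i\ge L$. Hence every exponent lies in $[-O(L),O(L)]$, which is polylogarithmic, and $x_j$ ranges over $x_j^{(0)}(1+\eps'/10L)^k$ with $k\le R$.

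So it suffices to transmit each value as a $(1+\eps'/\poly)$-multiplicative approximation. Concretely, I will send the exponent, or the integer count $k$ of updates of $x_j$, which is $O(\log(R))$ bits. Alternatively I will send a floating-point representation with $O(\log(nL/\eps))$ bits. I would then argue that such relative errors only perturb $\loc_j$ by a $1+O(\eps')$ factor. That perturbation is absorbed by the slack in the eligibility test $q_j/\tau\le 1+\eps'$, after constant rescaling of $\eps'$. This keeps \Cref{lem:clean_cor} valid. If $O(\log n)$ bits per message is too tight, I would split each value over $O(1)$ (or polylog) rounds, which only changes lower-order factors.

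\emph{Conclusion.} Multiplying the three factors gives
\[
O(\log m/\eps'^2)\cdot\widetilde{O}(\log m\log n/\eps'^2)\cdot O(1)=\widetilde{O}(\log^2 m\log n/\eps^4),
\]
using $\eps'=\eps/c$. The final accept/reject outputs take one more round.
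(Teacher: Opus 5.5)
Your proposal is correct and follows the paper's proof: the round count is the product of $T=O(L/\varepsilon')=O(\log m/\varepsilon^2)$ phases, the fixed $R=\widetilde{O}(\log m\log n/\varepsilon^2)$ inner rounds per phase, and $O(1)$ \congest rounds per inner round. Your message-size discussion goes beyond the paper, which only remarks on standard rounding to $O(\log n)$ bits. That discussion only needs the exponents $(Px)_i$ to stay polylogarithmic, which is fortunate: the sharper $(1+O(\varepsilon))L$ packing bound you cite rests on every update being safe, and the paper establishes safety only for feasible instances.
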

\begin{proof}
  The total number of rounds depends on the number of phases and the
  number of iterations per phase. The phase multiplier $\tau$ starts
  at $1$ and is multiplied by $(1+\varepsilon')$ at the end of each
  phase until it reaches or exceeds $\tau_{\max}=e^{10L}$. Since
  $L=\Omega(\log m/\varepsilon')$, the number of phases is
  \[
    O(L/\varepsilon')=O(\log m/(\varepsilon')^2).
  \]
  Because $\varepsilon'=\Theta(\varepsilon)$, this is
  \(O(\log m/\varepsilon^2)\).
    
  Within each phase, the algorithm executes exactly
  $\widetilde{O}(\log m \log n / \varepsilon^2)$ parallel inner
  rounds. As established, each inner round requires a constant number
  of \congest communications. Thus, the overall round complexity is
  the product of the number of phases and the inner rounds, which
  evaluates to
  $O(\log m / \varepsilon^2) \times \widetilde{O}(\log m \log n /
  \varepsilon^2) = \widetilde{O}(\log^2 m \log n / \varepsilon^4)$.
\end{proof}

\Cref{lem:distributed-mpc-correct} and
\Cref{lem:distributed-mpc-polylog} directly give \Cref{thm:dist-young}.



{\small
\bibliographystyle{alpha}
\bibliography{references}

@book{Peleg2000,
  author    = {David Peleg},
  title     = {Distributed Computing: A Locality-Sensitive Approach},
  publisher = {SIAM},
  year      = {2000},
  doi       = {10.1137/1.9780898719772}
}

@inproceedings{AhmadiKO18,
  author       = {Mohamad Ahmadi and
                  Fabian Kuhn and
                  Rotem Oshman},
  editor       = {Ulrich Schmid and
                  Josef Widder},
  title        = {Distributed Approximate Maximum Matching in the {CONGEST} Model},
  booktitle    = {32nd International Symposium on Distributed Computing, {DISC} 2018,
                  New Orleans, LA, USA, October 15-19, 2018},
  series       = {LIPIcs},
  volume       = {121},
  pages        = {6:1--6:17},
  publisher    = {Schloss Dagstuhl - Leibniz-Zentrum f{\"{u}}r Informatik},
  year         = {2018},
  url          = {https://doi.org/10.4230/LIPIcs.DISC.2018.6},
  doi          = {10.4230/LIPICS.DISC.2018.6},
  bibsource    = {dblp computer science bibliography, https://dblp.org}
}

@article{CzygrinowHSW16,
  author  = {Andrzej Czygrinow and Michal Hanckowiak and Edyta Szymanska and Wojciech Wawrzyniak},
  title   = {On the distributed complexity of the semi-matching problem},
  journal = {J. Comput. Syst. Sci.},
  volume  = {82},
  number  = {8},
  pages   = {1251--1267},
  year    = {2016},
  doi     = {10.1016/j.jcss.2016.05.001}
}

@inproceedings{AssadiBL20,
  author    = {Sepehr Assadi and Aaron Bernstein and Zachary Langley},
  title     = {Improved Bounds for Distributed Load Balancing},
  booktitle = {34th International Symposium on Distributed Computing ({DISC} 2020)},
  series    = {LIPIcs},
  volume    = {179},
  pages     = {1:1--1:15},
  publisher = {Schloss Dagstuhl -- Leibniz-Zentrum f{\"{u}}r Informatik},
  year      = {2020},
  doi       = {10.4230/LIPIcs.DISC.2020.1}
}

@article{HalldorssonKPR18,
  author  = {Magn{\'{u}}s M. Halld{\'{o}}rsson and Sven K{\"{o}}hler and Boaz Patt-Shamir and Dror Rawitz},
  title   = {Distributed backup placement in networks},
  journal = {Distributed Comput.},
  volume  = {31},
  number  = {2},
  pages   = {83--98},
  year    = {2018},
  doi     = {10.1007/s00446-017-0299-x}
}

@inproceedings{BhattacharyaKS23,
  author    = {Sayan Bhattacharya and Peter Kiss and Thatchaphol Saranurak},
  title     = {Dynamic Algorithms for Packing-Covering LPs via Multiplicative Weight Updates},
  booktitle = {Proceedings of the 2023 Annual {ACM-SIAM} Symposium on Discrete Algorithms ({SODA} 2023)},
  pages     = {1--47},
  publisher = {{SIAM}},
  year      = {2023},
  doi       = {10.1137/1.9781611977554.ch1}
}

@article{HarveyLLT06,
  author  = {Nicholas J. A. Harvey and Richard E. Ladner and L{\'{a}}szl{\'{o}} Lov{\'{a}}sz and Tami Tamir},
  title   = {Semi-matchings for Bipartite Graphs and Load Balancing},
  journal = {J. Algorithms},
  volume  = {59},
  number  = {1},
  pages   = {53--78},
  year    = {2006}
}

@article{FakcharoenpholLN14,
  author  = {Jittat Fakcharoenphol and Bundit Laekhanukit and Danupon Nanongkai},
  title   = {Faster Algorithms for Semi-Matching Problems},
  journal = {{ACM} Trans. Algorithms},
  volume  = {10},
  number  = {3},
  pages   = {14:1--14:23},
  year    = {2014}
}

@article{Vos2026,
  author       = {Tijn de Vos and
                  Leo Wennmann and
                  Malte Baumecker and
                  Yannic Maus and
                  Florian Schager},
  title        =  {Distributed {Santa Claus} via global rounding},
  journal      = {CoRR},
  volume       = {abs/2604.27983},
  year         = {2026},
  url          = {https://doi.org/10.48550/arXiv.2604.27983},
  doi          = {10.48550/ARXIV.2604.27983},
  eprinttype   = {arXiv},
  eprint       = {2604.27983},
  bibsource    = {dblp computer science bibliography, https://dblp.org}
}

@article{KonradR16,
  author  = {Christian Konrad and Adi Ros{\'{e}}n},
  title   = {Approximating Semi-matchings in Streaming and in Two-Party Communication},
  journal = {{ACM} Trans. Algorithms},
  volume  = {12},
  number  = {3},
  pages   = {32:1--32:21},
  year    = {2016},
  doi     = {10.1145/2898960}
}

@inproceedings{AssadiBL23,
  author    = {Sepehr Assadi and Aaron Bernstein and Zachary Langley},
  title     = {All-Norm Load Balancing in Graph Streams via the Multiplicative Weights
               Update Method},
  booktitle = {14th Innovations in Theoretical Computer Science Conference, {ITCS} 2023},
  series    = {LIPIcs},
  pages     = {7:1--7:24},
  publisher = {Schloss Dagstuhl - Leibniz-Zentrum f{\"{u}}r Informatik},
  year      = {2023},
  doi       = {10.4230/LIPICS.ITCS.2023.7}
}

@inproceedings{AssadiBLLW25,
  author    = {Sepehr Assadi and Aaron Bernstein and Zachary Langley and Lap Chi Lau and
               Robert Wang},
  title     = {Streaming and Communication Complexity of Load-Balancing via Matching
               Contractors},
  booktitle = {Proceedings of the 2025 Annual {ACM-SIAM} Symposium on Discrete Algorithms,
               {SODA} 2025},
  pages     = {3423--3449},
  publisher = {{SIAM}},
  year      = {2025},
  doi       = {10.1137/1.9781611978322.113}
}

@inproceedings{Ahmadian2021DistributedLB,
  author    = {Ahmadian, Sara and Liu, Allen and Peng, Binghui and Zadimoghaddam, Morteza},
  title     = {Distributed Load Balancing: {A} New Framework and Improved Guarantees},
  booktitle = {12th Innovations in Theoretical Computer Science Conference ({ITCS} 2021)},
  series    = {Leibniz International Proceedings in Informatics ({LIPIcs})},
  volume    = {185},
  pages     = {79:1--79:22},
  publisher = {Schloss Dagstuhl -- Leibniz-Zentrum f{\"u}r Informatik},
  year      = {2021},
  doi       = {10.4230/LIPIcs.ITCS.2021.79},
  url       = {https://drops.dagstuhl.de/opus/volltexte/2021/13817/},
}

@inproceedings{
argue2022learning,
title={Learning from a Sample in Online Algorithms},
author={C.J. Argue and Alan Frieze and Anupam Gupta and Christopher Seiler},
booktitle={Advances in Neural Information Processing Systems},
editor={Alice H. Oh and Alekh Agarwal and Danielle Belgrave and Kyunghyun Cho},
year={2022},
url={https://openreview.net/forum?id=KMaI40_UaGw}
}

@article{BartalBR04,
  author    = {Yair Bartal and John W. Byers and Danny Raz},
  title     = {Fast, Distributed Approximation Algorithms for Positive Linear Programming with Applications to Flow Control},
  journal   = {SIAM Journal on Computing},
  volume    = {33},
  number    = {6},
  pages     = {1261--1279},
  year      = {2004},
}

@inproceedings{Li23,
  author    = {Shi Li},
  title     = {Nearly-Linear Time {LP} Solvers and Rounding Algorithms for Scheduling Problems},
  booktitle = {50th International Colloquium on Automata, Languages, and Programming ({ICALP} 2023)},
  series    = {Leibniz International Proceedings in Informatics ({LIPIcs})},
  pages     = {86:1--86:20},
  publisher = {Schloss Dagstuhl -- Leibniz-Zentrum f{\"u}r Informatik},
  year      = {2023},
  doi       = {10.4230/LIPIcs.ICALP.2023.86},
  note      = {arXiv:2111.04897},
}

@article{ShmoysT93,
  author    = {David B. Shmoys and {\'E}va Tardos},
  title     = {An Approximation Algorithm for the Generalized Assignment Problem},
  journal   = {Mathematical Programming},
  volume    = {62},
  number    = {1--3},
  pages     = {461--474},
  year      = {1993},
  doi       = {10.1007/BF01585178},
}

@article{LenstraST90,
  author    = {Jan Karel Lenstra and David B. Shmoys and {\'E}va Tardos},
  title     = {Approximation Algorithms for Scheduling Unrelated Parallel Machines},
  journal   = {Mathematical Programming},
  volume    = {46},
  pages     = {259--271},
  year      = {1990},
  doi       = {10.1007/BF01585745},
}

@inproceedings{Young2001MixedPackingCovering,
  author    = {Young, Neal E.},
  title     = {Sequential and Parallel Algorithms for Mixed Packing and Covering},
  booktitle = {Proceedings of the 42nd Annual Symposium on Foundations of Computer Science (FOCS)},
  pages     = {538--546},
  year      = {2001},
  publisher = {IEEE Computer Society},
  doi       = {10.1109/SFCS.2001.959930},
  url       = {https://arxiv.org/abs/cs/0205039}
}

@inproceedings{Mahoney2016ParallelMixedPackingCovering,
  author    = {Mahoney, Michael W. and Rao, Satish and Wang, Di and Zhang, Peng},
  title     = {Approximating the Solution to Mixed Packing and Covering {LPs} in Parallel $\widetilde{O}(\varepsilon^{-3})$ Time},
  booktitle = {Proceedings of the 43rd International Colloquium on Automata, Languages, and Programming (ICALP)},
  pages     = {52:1--52:14},
  year      = {2016},
  publisher = {Schloss Dagstuhl -- Leibniz-Zentrum f{\"u}r Informatik},
  doi       = {10.4230/LIPIcs.ICALP.2016.52},
  url       = {https://drops.dagstuhl.de/entities/document/10.4230/LIPIcs.ICALP.2016.52}
}

@inproceedings{Wang2016DiameterReduction,
  author    = {Wang, Di and Rao, Satish and Mahoney, Michael W.},
  title     = {Unified Acceleration Method for Packing and Covering Problems via Diameter Reduction},
  booktitle = {Proceedings of the 43rd International Colloquium on Automata, Languages, and Programming (ICALP)},
  pages     = {50:1--50:13},
  year      = {2016},
  publisher = {Schloss Dagstuhl -- Leibniz-Zentrum f{\"u}r Informatik},
  doi       = {10.4230/LIPIcs.ICALP.2016.50},
  url       = {https://arxiv.org/abs/1508.02439}
}

@article{AzarNR95,
  author       = {Yossi Azar and
                  Joseph Naor and
                  Raphael Rom},
  title        = {The Competitiveness of On-Line Assignments},
  journal      = {J. Algorithms},
  volume       = {18},
  number       = {2},
  pages        = {221--237},
  year         = {1995},
  url          = {https://doi.org/10.1006/jagm.1995.1008},
  doi          = {10.1006/JAGM.1995.1008},
  bibsource    = {dblp computer science bibliography, https://dblp.org}
}

@article{AspnesAFPW97,
  author       = {James Aspnes and
                  Yossi Azar and
                  Amos Fiat and
                  Serge A. Plotkin and
                  Orli Waarts},
  title        = {On-line routing of virtual circuits with applications to load balancing
                  and machine scheduling},
  journal      = {J. {ACM}},
  volume       = {44},
  number       = {3},
  pages        = {486--504},
  year         = {1997},
  url          = {https://doi.org/10.1145/258128.258201},
  doi          = {10.1145/258128.258201},
  bibsource    = {dblp computer science bibliography, https://dblp.org}
}

@book{Peleg00,
  author    = {David Peleg},
  title     = {Distributed Computing: A Locality-Sensitive Approach},
  publisher = {SIAM},
  year      = {2000},
  doi       = {10.1137/1.9780898719772}
}

@inproceedings{GM26-soda,
  author    = {Anupam Gupta and
               Marco Molinaro},
  title     = {Learning Packing and Covering from Samples},
  booktitle = {SODA},
  year      = {2026},
  month     = {Jan},
  anupamref = {young-sample},
  anupamyear = 2026,
  url = {https://epubs.siam.org/doi/epdf/10.1137/1.9781611978971.48},
}

@inproceedings{LX21,
  author    = {Shi Li and
               Jiayi Xian},
  editor    = {Marina Meila and
               Tong Zhang},
  title     = {Online Unrelated Machine Load Balancing with Predictions Revisited},
  booktitle = {Proceedings of the 38th International Conference on Machine Learning,
               {ICML} 2021, 18-24 July 2021, Virtual Event},
  series    = {Proceedings of Machine Learning Research},
  volume    = {139},
  pages     = {6523--6532},
  publisher = {{PMLR}},
  year      = {2021},
  url       = {http://proceedings.mlr.press/v139/li21w.html},
  bibsource = {dblp computer science bibliography, https://dblp.org}
}

@inproceedings{im2024online,
  title={Online load and graph balancing for random order inputs},
  author={Im, Sungjin and Kumar, Ravi and Li, Shi and Petety, Aditya and Purohit, Manish},
  booktitle={Proceedings of the 36th ACM Symposium on Parallelism in Algorithms and Architectures},
  pages={491--497},
  year={2024}
}

@INPROCEEDINGS{GKS14-matching,
  author = {Anupam Gupta and Amit Kumar and Cliff Stein},
  title = {Maintaining Assignments Online: Matching, Scheduling, and Flows},
  booktitle = {SODA},
  year = {2014},
  pages = {468-479},
  anupamref = {dynamic-match},
  anupamyear = {2014},
  doi = {http://dx.doi.org/10.1137/1.9781611973402.35}
}

@inproceedings{krishnaswamy2023online,
  title={Online unrelated-machine load balancing and generalized flow with recourse},
  author={Krishnaswamy, Ravishankar and Li, Shi and Suriyanarayana, Varun},
  booktitle={Proceedings of the 55th Annual ACM Symposium on Theory of Computing},
  pages={775--788},
  year={2023}
}
}

\appendix
\section{Details of Rounding Algorithm}
\label{app:rounding}

We give the details of how the work of Li~\cite{Li23} can be extended to the \congest setting to round fractional load-balancing solutions to integer ones. Recall the main result claimed in \S\ref{sec:rounding}.

\rounding*

We will use the following distributed representation:
\begin{defn}[Locally represented virtual graph]
  \label{def:virtual-graph}
  A virtual bipartite graph $H=(U\cup J, E_H)$ is \emph{locally
    represented} over $G$ if each node $u\in U$ is \emph{hosted} by
  some machine $i\in M$ (which knows the set of nodes it hosts), and
  each edge $(u,j)\in E_H$ with $u$ hosted by $i$ is realized over a
  $G$-edge $(i,j)$, so that the two endpoints can exchange messages in
  $O(1)$ \congest rounds.
\end{defn}

\subsection{The Rounding Algorithm}
\label{sec:rounding-algorithm}

\begin{algorithm}[H]
  \caption{$(2+O(\eps))$-rounding (\cite{Li23}, simulated in \congest)}
  \label{alg:rounding}
  \DontPrintSemicolon \KwIn{ Each node knows $\varepsilon$. Each machine
    node $i$ knows $s_{ij} \; \forall j \in N(i)$ . Each job node $j$
    knows $s_{ij} \; \forall i \in N(j)$. Each machine node $i$ knows
    $\{ x_{ij} \mid j \in N(i)$ in:
    \[ \textstyle \{ x \mid \sum_{i\in N(j)} x_{ij}\ge 1 \;\forall j, \; \sum_{j\in
        N(i)} s_{ij}x_{ij}\le(1+O(\eps))\lambda^* \;\forall i, \;
      x_{ij}=0 \text{ if } s_{ij}>(1+\eps)\lambda^*\}. \]}
  \KwOut{Each machine $i$ outputs a set $S_i \subseteq N(i)$ satisfying the
    guarantees of \Cref{thm:rounding}.}

  \ForEach{machine $i$ \emph{in parallel}}{
    \tcp{All computation except adding $H$-edges is local to machine $i$; we write it as a centralized procedure}
    $d_i \gets |N(i)|$\;
    Sort all jobs in $N(i)$ by \emph{decreasing} $s_{ij}$; let the result be $j_1,\dots,j_{d_i}$\;
    $\mu \gets 0$;\quad $r \gets 1$; \quad $\gamma \gets \frac{1}{1 + \eps}$
    \tcp*{$\mu$: mass in current group; $r$: current group index}
    \For{$t \gets 1$ \KwTo $d_i$}{
      $v \gets x_{i, j_t}$ \tcp*{share of $j_t$ not yet placed}
      \While{$v > 0$}{
        \lIf{$\mu = \gamma$}{
          $r \gets r+1$;\quad $\mu \gets 0$
        }
        $\delta \gets \min\{v,\ \gamma-\mu\}$\;
        add edge $(i_r,\,j_t)$ to $E_H$\;
        $\mu \gets \mu + \delta$;\quad $v \gets v - \delta$\;
      }
    }
    $k_i \gets r$
    \tcp*{machine $i$ hosts $k_i=\lceil(1+\eps)\sum_j x_{ij}\rceil$ groups, each of mass $\le\gamma$}
  }
  $U \gets \bigcup_i \{i_1,\dots,i_{k_i}\}$
  \tcp*{vertices of virtual graph $H=(U\cup J,E_H)$.
  }
  $H\gets U\cup J$
  
  $\mathcal{M} \gets \texttt{DistributedMatch}(H, E_H, \varepsilon)$
  \tcp*{
    uses $H$'s $(1+\eps)$-expansion, see \Cref{lem:distributed-match}}
  \ForEach{machine $i$}{
    \(S_i \gets \{j : (i_r,j)\in \mathcal{M} \text{ for some } 1\le r\le k_i\}\);\quad output $S_i$\;
  }
\end{algorithm}

\begin{lemma}[\texttt{DistributedMatch}]
  \label{lem:distributed-match}
  Let $H=(U\cup J, E_H)$ be a virtual bipartite graph, locally represented
  over $G$ as in \Cref{def:virtual-graph}, where $U=\bigcup_{i\in \mathcal{M}}\{i_1,
  \dots,i_{k_i}\}$ is the set of groups and each group $i_r$ is hosted by
  machine $i$ (which knows its own $k_i$). Suppose $H$ satisfies:
  \begin{enumerate}[nosep]
  \item \textnormal{(bounded congestion)} every job $j\in J$ is incident
    to at most $O(1)$ groups hosted by any single machine $i$, so each
    underlying $G$-edge $(i,j)$ carries only $O(1)$ edges of $H$; and
  \item \textnormal{($(1+\eps)$-expansion)} $|N_H(J')|\ge(1+\eps)|J'|$ for
    every $J'\subseteq J$.
\end{enumerate}
Then there is a deterministic \congest algorithm
$\texttt{DistributedMatch}(U, E_H, \varepsilon)$ that augments a matching
$\mathcal{M}$ along shortest augmenting paths in $H$ until none remains, in
$\poly((\log nm)/\varepsilon)$ rounds, at the end of which $\mathcal{M}\subseteq
E_H$ matches every job $j\in J$ to exactly one group $i_r\in U$,
represented locally as in \Cref{def:virtual-graph}: each matched edge
$(i_r,j)\in \mathcal{M}$ is known to both its host machine $i$ and job $j$.
\end{lemma}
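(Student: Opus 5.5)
The plan is to implement the standard Hopcroft--Karp style phase structure on the virtual graph $H$, and to use the $(1+\eps)$-expansion of $H$ to argue that only short augmenting paths are ever needed. This is the key reason the round complexity becomes diameter independent. First, by the bounded-congestion hypothesis, one round of communication in $H$ is simulated by $O(1)$ \congest rounds over $G$: each $G$-edge $(i,j)$ carries $O(1)$ virtual edges, and machine $i$ performs all computation for its hosted groups internally. So it suffices to give an algorithm in the \congest model on $H$ itself, with $\poly(\log(nm)/\eps)$ rounds.

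The structural step is a Hall-type lemma. Fix any matching $\mathcal{M}$ and any unmatched job $j_0$. Let $J_k$ be the jobs reachable from $j_0$ by alternating paths of length at most $2k$. If no augmenting path of length at most $2k+1$ exists, then every group in $N_H(J_k)$ is matched to a job reachable within $2k+2$ steps. Expansion then gives $|J_{k+1}| \ge |N_H(J_k)| \ge (1+\eps)|J_k|$. Since $|J_k| \le n$, there is an augmenting path of length $O(\log n/\eps)$ from every unmatched job. More globally, the standard Hopcroft--Karp counting argument applies to the symmetric difference with a near-perfect structure. I would instead use the simpler per-job argument: whenever some job is unmatched, the shortest augmenting path length is at most $\ell := O(\log(n)/\eps)$. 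Moreover, augmenting along shortest paths never decreases the shortest augmenting path length. Hence at most $\ell$ Hopcroft--Karp phases are needed until every job is matched. Note that "no augmenting path" is incompatible with expansion, so termination means $\mathcal{M}$ saturates $J$.

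Each phase must compute a maximal set of vertex-disjoint shortest augmenting paths of length at most $\ell$, which is the main obstacle in \congest. The first step is to build the layered alternating BFS DAG from all free jobs, truncated at depth $\ell$, in $O(\ell)$ rounds. The next step is to find a maximal vertex-disjoint path set in this layered graph. Exact maximality is hard to get deterministically in polylog rounds. My first attempt would be to replace maximality with a $(1-\delta)$-approximate version, computed as an approximate maximal independent set of paths via a deterministic fractional-then-rounding or network-decomposition-free procedure on the layered DAG, in the style of Lotker--Patt-Shamir--Pettie augmentations. This must be combined with the observation that after a phase with an approximately maximal set, the residual short paths are few and can be handled by repeating. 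Failing that, I would fall back on a randomized token-passing scheme and derandomize using the bounded depth $\ell$. The remaining steps are bookkeeping: augmenting flips edges along paths of length $O(\ell)$ in $O(\ell)$ rounds, and both endpoints of each matched edge learn their status. The total is $\ell$ phases times $\poly(\ell,\log n)$ rounds per phase, which gives $\poly(\log(nm)/\eps)$.
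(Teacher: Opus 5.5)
The paper states this lemma without any proof. It is used as a black box, with only Li's sequential shortest-augmenting-path analysis behind it. So your proposal has to stand on its own.

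Two of your ingredients are sound. The first is the $O(1)$-overhead simulation of $H$ over $G$ via bounded congestion. The second is the Hall-type argument that, under $(1+\eps)$-expansion, every free job has an augmenting path of length $\ell=O(\log n/\eps)$. Together these give $O(\ell)$ Hopcroft--Karp phases, \emph{provided each phase augments along a maximal set of vertex-disjoint shortest augmenting paths}.

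The gap is exactly that proviso, which is the only non-routine content of the lemma. For it you offer alternatives rather than an argument.

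First, approximate maximality does not plug into your phase count. The number of phases is bounded by $\ell$ only because the shortest augmenting-path length strictly increases after each phase. That requires every length-$d$ path of the layered DAG to be blocked. With a $(1-\delta)$-approximate set, some length-$d$ paths survive, so the length need not increase and the $\ell$ bound says nothing. To repair this you would need two things:
\begin{itemize}
\item[(i)] the Hopcroft--Karp fact that, after augmenting along disjoint shortest paths $P$, every remaining length-$d$ augmenting path lies in the old layered DAG and avoids $V(P)$, so repeated partial computations act on a shrinking DAG;
\item[(ii)] a \emph{quantified} progress guarantee, for example that each repetition leaves at most a constant fraction of the free jobs that still start a length-$d$ path. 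This would give exact maximality after $O(\log n)$ repetitions per phase.
\end{itemize}
Neither is stated. ``The residual short paths are few and can be handled by repeating'' does not bound the number of repetitions.

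Second, and more seriously, you never supply a subroutine achieving (ii) deterministically in $\poly(\log n/\eps)$ \congest rounds.
\begin{itemize}
\item Lotker--Patt-Shamir--Pettie is randomized.
\item A vertex can lie on superpolynomially many augmenting paths of length $\Theta(\log n/\eps)$, so their conflict graph cannot be simulated naively under \congest bandwidth.
\item ``Derandomize using the bounded depth $\ell$'' is not a technique. Bounded depth gives locality, not derandomization.
\end{itemize}
The lemma explicitly requires a deterministic algorithm, so the randomized fallback does not prove it either. As written, your proposal reduces the lemma to an unproven deterministic distributed primitive: computing blocking sets of disjoint paths in a layered DAG of depth $O(\log n/\eps)$. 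That primitive is precisely the hard part.
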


\paragraph{Correctness.}
The rounding procedure in \Cref{alg:rounding} is exactly Li's virtual-graph
rounding algorithm~\cite{Li23}, interpreted on the virtual graph $H$ constructed
above. We use the following guarantee from~\cite{Li23}: given a
$(1+\eps)$-approximate fractional solution, the grouping construction produces
a virtual graph $H$ satisfying the expansion condition required by
\Cref{lem:distributed-match}; once a matching saturating all jobs in $H$ is
found, the induced integral assignment is a $(2+\eps)$-approximation. Thus it
remains only to justify that this virtual graph and the matching algorithm can
be simulated in the \congest model on the original graph $G$.

\begin{claim}
\label{clm:constant-virtual-edges}
For every original edge $(i,j)$, at most three virtual edges of $H$ are
realized over $(i,j)$.
\end{claim}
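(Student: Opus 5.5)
The plan is to read off the claim from the greedy grouping loop in \Cref{alg:rounding}. Fix a machine $i$ and one of its neighbors $j=j_t$. Virtual edges over $(i,j)$ are created only while the algorithm processes $j_t$, i.e., during the inner while loop for index $t$. Each iteration of that loop adds exactly one edge $(i_r,j_t)$. So the number of virtual edges over $(i,j)$ equals the number of while-loop iterations for $t$, which in turn equals the number of distinct groups $i_r$ that receive a positive share of $x_{ij}$. (If $\delta$ could be $0$, the same group could be listed twice, but then the count of distinct $H$-edges can only be smaller.)

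To bound this, I track the group mass $\mu$. In every iteration except the last, $\delta=\gamma-\mu$, so the current group is filled to exactly $\gamma$ and the next iteration opens a fresh group. Hence if the loop for $j_t$ runs $k$ iterations, the middle $k-2$ iterations each place a full mass $\gamma=1/(1+\eps)$ of $x_{ij}$ into a group that started empty. The first and last iterations place nonnegative amounts, so $x_{ij}\ge (k-2)\gamma$.

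Next I use $x_{ij}\le 1$. This holds because the fractional solution is normalized in \Cref{alg:strong-log}, so each job has total mass exactly $1$; without loss of generality one can also cap $x_{ij}$ at $1$. Then $(k-2)\gamma\le 1$, i.e., $k\le 2+(1+\eps)$. Since $k$ is an integer and $\eps<1$, this gives $k\le 3$.

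The only delicate point is the boundary case $k-2=1+\eps$ exactly, which is impossible for $\eps\in(0,1)$ because the left side is an integer. Stated positively: a mass of at most $1$ can span at most one full group of capacity $\gamma<1$ in its interior, plus the partial groups at the two ends.
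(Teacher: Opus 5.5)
Your proof is correct and follows the same approach as the paper's: the mass $x_{ij}\le 1$ is spread over consecutive groups of capacity $\gamma=1/(1+\eps)$, so besides a partial group at each end it can fill at most one full group, which gives at most three virtual edges. Your loop count $(k-2)\gamma\le x_{ij}\le 1$ is a precise version of that argument, and you also justify $x_{ij}\le 1$ (via the normalization in \Cref{alg:strong-log}, or by capping), which the paper simply asserts.
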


\begin{proof}
Fix $(i,j)$. In the greedy grouping step of machine \(i\), the mass \(x_{ij}\) is
placed into consecutive virtual machines, each of capacity
\(\gamma=1/(1+\varepsilon)\). Since
$x_{ij}\le 1$ and $\eps\le 1/2$, this mass can intersect at most one
partially filled group at the beginning, at most one full group, and at most
one partially filled group at the end. Hence $j$ is adjacent to at most three
virtual machines hosted by $i$.
\end{proof}

\begin{proof}[Proof of \Cref{thm:rounding}]
We first show that \Cref{alg:rounding} completes in $\poly(\log(mn)/\eps)$ rounds.
By \Cref{clm:constant-virtual-edges}, every original edge $(i,j)$ realizes
only $O(1)$ virtual edges of $H$. Hence one communication round on $H$ can be
simulated by $O(1)$ rounds on $G$: each virtual edge message is sent over its
underlying edge $(i,j)$.

The construction of $H$ is local. Each machine $i$ forms its own virtual
machines, greedily places its incident fractional masses, and sends each job
$j$ the identifiers of the at most three virtual machines adjacent to it.
Thus building the locally represented virtual graph takes $O(1)$ \congest
rounds.

By the Li guarantee stated above, $H$ satisfies the expansion
condition of \Cref{lem:distributed-match}. Therefore
\texttt{DistributedMatch} finds a matching saturating all jobs in
$\poly(\log(mn)/\eps)$ rounds on $H$, and thus can be simulated in
\congest in $\poly(\log(mn)/\eps)$ rounds. Combining these,
\Cref{alg:rounding} completes in $\poly(\log(mn)/\eps)$ rounds.
Finally, \Cref{thm:rounding} directly follows from Li's rounding
guarantee that \Cref{alg:rounding} returns an integral assignment of
load at most $(2+\eps)\lambda^*$.
\end{proof}



\end{document}
